\renewcommand{\backref}[1]{}
\renewcommand{\backrefalt}[4]{%
\ifcase #1
\or $^{#2}$%
\else $^{#2}$%
\fi}
\newtheorem{theorem}{Theorem} 
\newtheorem{lemma}[theorem]{Lemma}
\newtheorem{claim}[theorem]{Claim}
\newtheorem{fact}[theorem]{Fact}
\newcommand{\beq}{\begin{eqnarray}}
\newcommand{\eeq}{\end{eqnarray}}
\newcommand{\ket}[1]{|#1\rangle}
\newcommand{\bra}[1]{\langle#1|}
\newcommand{\ip}[2]{\langle #1 | #2 \rangle}
\newcommand{\ketbra}[2]{|#1\rangle\! \langle #2|}
\newcommand{\braketbra}[3]{\langle #1|#2| #3 \rangle}
\newcommand{\Tr}{\mbox{\rm Tr}}
\newcommand{\PEX}{\mbox{\rm PEX}}
\newcommand{\QPEX}{\mbox{\rm QPEX}}
\newcommand{\QAEX}{\mbox{\rm QAEX}}
\newcommand{\AEX}{\mbox{\rm AEX}}
\newcommand{\err}{\mbox{\rm err}}
\newcommand{\opt}{\mbox{\rm opt}}
\newcommand{\Id}{\ensuremath{\mathop{\rm Id}\nolimits}}
\DeclareMathOperator*{\Ex}{\mathbb{E}}
\def\01{\{0,1\}}
\newcommand{\C}{\ensuremath{\mathscr{C}}}
\newcommand{\G}{\ensuremath{\mathscr{G}}}
\newcommand{\A}{\ensuremath{\mathcal{A}}}
\newcommand{\Sh}{\ensuremath{\mathcal{S}}}
\newcommand{\F}{\ensuremath{\mathbb{F}}}
\newcommand{\R}{\ensuremath{\mathbb{R}}}
\newcommand{\Hi}{\ensuremath{\mathcal{H}}}
\newcommand{\E}{\mathcal{E}}
\newcommand{\M}{\mathcal{M}}
\newcommand{\supp}{\mathrm{supp}}
\newcommand{\eps}{\varepsilon}
\newcommand{\ph}{\ensuremath{\varphi}}
\begin{document}

\title{Optimal Quantum Sample Complexity\\ of Learning Algorithms}
\author{Srinivasan Arunachalam\thanks{QuSoft, CWI, Amsterdam, the Netherlands.
Supported by ERC Consolidator Grant QPROGRESS.}
\and
Ronald de Wolf\thanks{QuSoft, CWI and University of Amsterdam, the Netherlands.
Partially supported by ERC Consolidator Grant QPROGRESS.}
}
\date{}
\maketitle

\begin{abstract}
In learning theory, the \emph{VC dimension} of a concept class $\C$ is the most common way to measure its ``richness.''
A fundamental result says that the number of examples needed to learn an unknown target concept $c\in\C$ under an unknown distribution~$D$, is tightly determined by the VC dimension~$d$ of the concept class $\C$. Specifically, in the PAC model
$$
\Theta\Big(\frac{d}{\eps} + \frac{\log(1/\delta)}{\eps}\Big)
$$ 
examples are necessary and sufficient for a learner to output, with probability $1-\delta$, a hypothesis~$h$ that is $\eps$-close to the target concept~$c$ (measured under~$D$). In the related \emph{agnostic} model, where the samples need not come from a $c\in\C$, we know that 
$$
\Theta\Big(\frac{d}{\eps^2} + \frac{\log(1/\delta)}{\eps^2}\Big)
$$ 
examples are necessary and sufficient to output an hypothesis $h\in \C$ whose error is at most $\eps$ worse than the error of the best concept in $\C$.

Here we analyze \emph{quantum} sample complexity, where each example is a coherent quantum state. This model was introduced by Bshouty and Jackson~\cite{bshouty:quantumpac}, who showed that quantum examples are more powerful than classical examples in some fixed-distribution settings. However, At\i c\i\ and Servedio~\cite{atici&servedio:qlearning}, improved by Zhang~\cite{zhang:improvedvcbound}, showed that in the PAC setting (where the learner has to succeed for every distribution), quantum examples cannot be much more powerful: the required number of quantum examples is 
$$
\Omega\Big(\frac{d^{1-\eta}}{\eps} + d + \frac{\log(1/\delta)}{\eps}\Big)\mbox{ for arbitrarily small constant }\eta>0.
$$
Our main result is that quantum and classical sample complexity are in fact equal up to constant factors in both the PAC and agnostic models.
We give two proof approaches.  The first is a fairly simple information-theoretic argument that yields the above two classical bounds and yields the same bounds for quantum sample complexity up to a $\log(d/\eps)$ factor. We then give a second approach that avoids the log-factor loss, based on analyzing the behavior of the ``Pretty Good Measurement'' on the quantum  state identification problems that correspond to learning. This shows classical and quantum sample complexity are equal up to constant factors for every concept class~$\C$. 
 \end{abstract}

\section{Introduction}

\subsection{Sample complexity and VC dimension}

Machine learning is one of the most successful parts of AI, with impressive practical applications in areas ranging from image processing, speech recognition, to even beating Go champions.  Its theoretical aspects have been deeply studied, revealing beautiful structure and mathematical characterizations of when (efficient) learning is or is not possible in various~settings.

\subsubsection{The PAC setting}

Leslie Valiant's Probably Approximately Correct (PAC) model~\cite{valiant:paclearning} gives a precise complexity-theoretic definition of what it means for a concept class to be (efficiently) learnable. For simplicity we will (without loss of generality) focus on concepts that are Boolean functions, $c:\01^n\to\01$. Equivalently, a concept $c$ is a subset of $\01^n$, namely $\{x: c(x)=1\}$. Let $\C\subseteq\{f:\01^n\to\01\}$ be a concept class. This could for example be the class of functions computed by disjunctive normal form (DNF) formulas of a certain size, or Boolean circuits or decision trees of a certain depth.

The goal of a learning algorithm (the learner) is to probably approximate some unknown \emph{target concept} $c\in\C$ from random \emph{labeled examples}. Each labeled example is of the form $(x,c(x))$ where $x$ is distributed according to some unknown distribution~$D$ over $\01^n$.
After processing a number of such examples (hopefully not too many), the learner outputs some \emph{hypothesis}~$h$. We say that $h$ is \emph{$\eps$-approximately correct} (w.r.t.\ the target concept $c$)  if its error probability under~$D$ is at most~$\eps$: $\Pr_{x\sim D}[h(x)\neq c(x)]\leq\eps$. Note that the learning phase and the evaluation phase (i.e., whether a hypothesis is approximately correct) are according to the same distribution~$D$---as if the learner is taught and then tested by the same teacher. An $(\eps,\delta)$-learner for the concept class $\C$ is one whose hypothesis is probably approximately~correct:
\begin{quote}
For all target concepts $c\in\C$ and distributions $D$:\\ 
$\Pr[\mbox{the learner's output~$h$ is $\eps$-approximately correct}]\geq 1-\delta$,
\end{quote}
where the probability is over the sequence of examples and the learner's internal~randomness. Note that we leave the learner the freedom to output an $h$ which is not in $\C$. If always $h\in\C$, then the learner is called a \emph{proper} PAC-learner.

Of course, we want the learner to be as efficient as possible.  Its \emph{sample complexity} is the worst-case number of examples it uses, and its \emph{time complexity} is the worst-case running time of the learner. In this paper we focus on sample complexity.  This allows us to ignore technical issues of how the runtime of an algorithm is measured, and in what form the hypothesis $h$ is given as output by the learner.

The sample complexity of a concept class $\C$ is the sample complexity of the most efficient learner for $\C$. It is a function of $\eps$, $\delta$, and of course of $\C$ itself.  One of the most fundamental results in learning theory is that the sample complexity of $\C$ is tightly determined by a combinatorial parameter called the \emph{VC dimension} of $\C$, due to and named after Vapnik and Chervonenkis~\cite{vapnik:vcdimension}. The VC dimension of $\C$ is the size of the biggest $\Sh\subseteq\01^n$ that can be labeled in all $2^{|\Sh|}$ possible ways by concepts from $\C$: for each sequence of $|\Sh|$ binary labels for the elements of $\Sh$, there is a $c\in\C$ that has that labeling (such an $\Sh$ is said to be \emph{shattered} by $\C$).
Knowing this VC dimension (and $\eps,\delta$) already tells us the sample complexity of $\C$ up to constant factors.
Blumer et al.~\cite{blumer:optimalpacupper} proved that the sample complexity of $\C$ is lower bounded by $\Omega(d/\eps + \log(1/\delta)/\eps)$,
and they proved an upper bound that was worse by a $\log(1/\eps)$-factor.
In very recent work, Hanneke~\cite{hanneke:optimalpaclower} (improving on Simon~\cite{simon:almostoptimalpac}) got rid of this $\log(1/\eps)$-factor for PAC learning,\footnote{Hanneke's learner is not proper, meaning that its hypothesis $h$ is not always in~$\C$. It is still an open question whether the $\log(1/\eps)$-factor can be removed for proper PAC learning. Our lower bounds in this paper hold for all learners, quantum as well as classical, and proper as well as improper.} showing that the lower bound of Blumer et al.\ is in fact optimal: the sample complexity of $\C$ in the PAC setting~is
\begin{equation}\label{eq:samplecomplpac}
\Theta\Big(\frac{d}{\eps} + \frac{\log(1/\delta)}{\eps}\Big).
\end{equation}
\subsubsection{The agnostic setting}

The PAC model assumes that the labeled examples are generated according to a target concept $c\in\C$.
However, in many learning situations that is not a realistic assumption, for example when the examples are noisy in some way or when we have no reason to believe there is an underlying target concept at all. The \emph{agnostic} model of learning, introduced by Haussler~\cite{haussler:agnosticlearning} and Kearns et al.~\cite{kearns:agnosticlearning}, takes this into account.
Here, the examples are generated according to a distribution $D$ on $\01^{n+1}$. The error of a specific concept $c:\01^n\to\01$ is defined to be
$\err_D(c)=\Pr_{(x,b)\sim D}[c(x)\neq b]$.
When we are restricted to hypotheses in $\C$, we would like to find the hypothesis that minimizes $\err_D(c)$ over all $c\in\C$. However, it may require very many examples to do that exactly. In the spirit of the PAC model, the goal of the learner is now to output an $h\in\C$ whose error is at most an additive $\eps$ worse than that of the best ($=$ lowest-error) concepts in $\C$.  

Like in the PAC model, the optimal sample complexity of such agnostic learners is tightly determined by the VC dimension of~$\C$:~it is
\begin{equation}\label{eq:samplecomplagn}
\Theta\Big(\frac{d}{\eps^2} + \frac{\log(1/\delta)}{\eps^2}\Big),
\end{equation}
where the lower bound was proven by Vapnik and Chervonenkis~\cite{vapnik:agnosticlowerbound} (see also Simon~\cite{simon:agnosticlowerbound}), and the upper bound was proven by Talagrand~\cite{talagrand:agnosticupperbound}. Shalev-Shwartz and Ben-David~\cite[Section~6.4]{shwartz&david:learningbook} 
call Eq.~(\ref{eq:samplecomplpac}) and Eq.~(\ref{eq:samplecomplagn})
the ``Fundamental Theorem of PAC learning.''

\subsection{Our results}

In this paper we are interested in \emph{quantum} sample complexity. Here a \emph{quantum example} for some concept $c:\01^n\to\01$, according to some distribution~$D$, corresponds to an $(n+1)$-qubit state
$$
\sum_{x\in\01^n}\sqrt{D(x)}\ket{x,c(x)}.
$$ 
In other words, instead of a random labeled example, an example is now given by a coherent quantum superposition where the square-roots of the probabilities become the amplitudes.\footnote{We could allow more general quantum examples $\sum_{x\in\01^n}\alpha_x\ket{x,c(x)}$, where we only require $|\alpha_x|^2=D(x)$. However, that will not affect our results since our lower bounds apply to quantum examples where we know the amplitudes are square-rooted probabilities. Adding more degrees of freedom to quantum examples does not make learning~easier.}
This model was introduced by Bshouty and Jackson~\cite{bshouty:quantumpac}, who showed that DNF formulas are learnable in polynomial time from quantum examples when $D$ is uniform. For learning DNF under the uniform distribution from \emph{classical} examples, the best upper bound is quasipolynomial time~\cite{verbeurgt:learningdnf}. With the added power of ``membership queries,'' where the learner can actively ask for the label of any $x$ of his choice, DNF formulas are known to be learnable in polynomial time under uniform~$D$~\cite{jackson:dnf}, but \emph{without} membership queries polynomial-time learnability is a longstanding open~problem (see~\cite{daniely&shalevshwartz:limitdnf} for a recent hardness result). 

How reasonable are examples that are given as a coherent superposition rather than as a random sample?
They may seem unreasonable a priori because quantum superpositions seem very fragile and are easily collapsed by measurement, but if we accept the ``church of the larger Hilbert space'' view on quantum mechanics, where the universe just evolves unitarily without any collapses, then they may become more palatable. It is also possible that the quantum examples are generated by some coherent quantum process that acts like the teacher. 

How many quantum examples are needed to learn a concept class $\C$ of VC dimension~$d$?
Since a learner can just measure a quantum example in order to obtain a classical example, 
the \emph{upper} bounds on classical sample complexity trivially imply the same upper bounds on quantum sample complexity.
But what about the lower bounds?  Are there situations where quantum examples are more powerful than classical?  Indeed there are.  
We already mentioned the results of Bshouty and Jackson~\cite{bshouty:quantumpac} for learning DNF under the uniform distribution without membership queries.
Another good example is the learnability of the concept class of linear functions over $\mathbb{F}_2$, $\C=\{c(x)=a\cdot x: a\in\01^n\}$, again under the uniform distribution~$D$. It is easy to see that a classical learner needs about $n$ examples to learn an unknown $c\in\C$ under this~$D$. However, if we are given one quantum example
$$
\sum_{x\in\01^n}\sqrt{D(x)}\ket{x,c(x)}=\frac{1}{\sqrt{2^n}}\sum_{x\in\01^n}\ket{x,a\cdot x},
$$
then a small modification of the Bernstein-Vazirani algorithm~\cite{bernstein&vazirani:qcomplexity} can recover $a$ (and hence $c$) with probability 1/2. Hence $O(1)$ quantum examples suffice to learn $c$ exactly, with high probability, under the uniform distribution.  
At\i c\i\ and Servedio~\cite{atici&servedio:testing} used similar ideas to learning $k$-\emph{juntas} (concepts depending on only $k$ of their $n$ variables)
from quantum examples under the uniform distribution.
However, PAC learning requires a learner to learn $c$ under \emph{all possible} distributions~$D$, not just the uniform one. The success probability of the Bernstein-Vazirani algorithm deteriorates sharply when $D$ is far from uniform, but that does not rule out the existence of other quantum learners that use $o(n)$ quantum examples and succeed for all~$D$.

Our main result in this paper is that quantum examples are not actually more powerful than classical labeled examples in the PAC model and in the agnostic model: we prove that the lower bounds on classical sample complexity of Eq.~(\ref{eq:samplecomplpac}) and  Eq.~(\ref{eq:samplecomplagn}) hold for quantum examples as well.  Accordingly, despite several distribution-specific speedups, quantum examples do not significantly reduce sample complexity if we require our learner to work for all distributions~$D$. This should be contrasted with the situation when considering the \emph{time complexity} of learning. Servedio and Gortler~\cite{servedio&gortler:equivalencequantumclassical} considered a concept class (already known in the literature~\cite[Chapter~6]{kearns&valiant:blum}) that can be PAC-learned in polynomial time by a quantum computer, even with only classical examples, but that cannot be PAC-learned in polynomial time by a classical learner unless Blum integers can be factored in polynomial time (which is widely believed to be false).  

Earlier work on quantum sample complexity had already gotten close to extending the lower bound of Eq.~(\ref{eq:samplecomplpac}) to PAC learning from quantum examples. At\i c\i\ and Servedio~\cite{atici&servedio:qlearning} first proved a lower bound of $\Omega(\sqrt{d}/\eps + d + \log(1/\delta)/\eps)$ using the so-called ``hybrid method.'' Their proof technique was subsequently pushed further by Zhang~\cite{zhang:improvedvcbound} to
\begin{equation}\label{eq:previousbestbound}
\Omega\Big(\frac{d^{1-\eta}}{\eps} + d + \frac{\log(1/\delta)}{\eps}\Big)\mbox{ for arbitrarily small constant }\eta>0.
\end{equation}
Here we optimize these bounds, removing the $\eta$ and achieving the optimal lower bound for quantum sample complexity in the PAC model (Eq.~(\ref{eq:samplecomplpac})). 

We also show that the lower bound (Eq.~(\ref{eq:samplecomplagn})) for the agnostic model extends to quantum examples.
As far as we know, in contrast to the PAC model, no earlier results were known for quantum sample complexity in the agnostic model. 

We have two different proof approaches, which we sketch below.

\subsubsection{An information-theoretic argument}

In Section~\ref{section:infotheorylowerbounds} we give a fairly intuitive information-theoretic argument that gives optimal lower bounds for classical sample complexity, and that gives nearly-optimal lower bounds for quantum sample complexity. Let us first see how we can prove the classical PAC lower bound of Eq.~(\ref{eq:samplecomplpac}). Suppose $\Sh=\{s_0,s_1,\ldots,s_d\}$ is shattered by $\C$ (we now assume VC dimension $d+1$ for ease of notation). Then we can consider a distribution $D$ that puts probability $1-4\eps$ on $s_0$ and probability $4\eps/d$ on each of $s_1,\ldots,s_d$.\footnote{We remark that the distributions used here for proving lower bounds on quantum sample complexity have been used by Ehrenfeucht et al.~\cite{ehrenfeucht:lowerboundforlearning} for analyzing classical PAC sample~complexity.} For every possible labeling $(\ell_1 \ldots \ell_d)\in\01^d$ of $s_1,\ldots,s_d$ there will be a concept $c\in\C$ that labels $s_0$ with~0, and labels $s_i$ with $\ell_i$ for all $i\in\{1,\ldots,d\}$. Under $D$, most examples will be $(s_0,0)$ and hence give us no information when we are learning one of those $2^d$ concepts. Suppose we have a learner that $\eps$-approximates $c$ with high probability under this~$D$ using $T$ examples. Informally, our information-theoretic argument has the following three steps:
\begin{enumerate}
\item In order to $\eps$-approximate $c$, the learner has to learn the $c$-labels of at least $3/4$ of the $s_1,\ldots,s_d$ (since together these have $4\eps$ of the $D$-weight, and we want an $\eps$-approximation). As all $2^d$ labelings are possible, the $T$ examples together contain $\Omega(d)$ bits of information about $c$.
\item $T$ examples give at most $T$ times as much information about~$c$ as one example.
\item One example gives only $O(\eps)$ bits of information about $c$, because it will tell us one of the labels of $s_1,\ldots,s_d$ only with probability $4\eps$ (and otherwise it just gives $c(s_0)=0$).
\end{enumerate}
Putting these steps together implies $T=\Omega(d/\eps)$.\footnote{The other part of the lower bound of Eq.~(\ref{eq:samplecomplpac}) does not depend on $d$ and is fairly easy to prove.} 
This argument for the PAC setting is similar to an algorithmic-information argument of Apolloni and Gentile~\cite{apolloni&gentile:act} and an information-theoretic argument for variants of the PAC model with noisy examples of Gentile and~Helmbold~\cite{gentile&helmbold:it}. 

As far as we know, this type of reasoning has not yet been applied to 
the sample complexity of \emph{agnostic} learning. To get good lower bounds there, we consider a set of distributions $D_a$, indexed by $d$-bit string $a$. These distributions still have the property that if a learner gets $\eps$-close to the minimal error, then it will have to learn $\Omega(d)$ bits of information about the distribution (i.e., about~$a$). Hence the first step of the argument remains the same.  The second step of our argument also remains the same, and the third step shows an upper bound of $O(\eps^2)$ on the amount of information that the learner can get from one example.  This then implies $T=\Omega(d/\eps^2)$. We can also reformulate this for the case where we want the \emph{expected} additional error of the hypothesis over the best classifier in $\C$ to be at most~$\eps$, which is how lower bounds are often stated in learning theory.
%Using our approach, we can actually slightly improve the best-known constant in the $T=\Omega(d/\eps^2)$ lower bound where $\eps$ is the expected error: our constant is greater than 1/62, while (to the best of our knowledge) the previous best constant was 1/36, due to Audibert~\cite[(8.22)]{audibert:agnosticconstant1}. Although it is only a small improvement, 
We emphasize that our information-theoretic proof is simpler than the proofs in \cite{anthony&bartlett:learningbook,audibert:agnosticconstant1,shwartz&david:learningbook,aryeh:exactconstantagnostic}.

This information-theoretic approach recovers the optimal classical bounds on sample complexity, but also generalizes readily to the quantum case where the learner gets $T$ quantum examples. To obtain lower bounds on quantum sample complexity we use the same distributions~$D$ (now corresponding to a coherent quantum state) and basically just need to re-analyze the third step of the argument. In the PAC setting we show that one quantum example gives at most $O(\eps\log(d/\eps))$ bits of information about~$c$, and in the agnostic setting it gives $O(\eps^2\log(d/\eps))$ bits.
This implies lower bounds on sample complexity that are only a logarithmic factor worse than the optimal classical bounds for the PAC setting (Eq.~(\ref{eq:samplecomplpac})) and the agnostic setting (Eq.~(\ref{eq:samplecomplagn})).
This is not quite optimal yet, but already better than the previous best known lower bound (Eq.~(\ref{eq:previousbestbound})). 
The logarithmic loss in step~3 is actually inherent in this information-theoretic argument: in some cases a quantum example \emph{can} give roughly $\eps\log d$ bits of information about~$c$, for example when $c$ comes from the concept class of linear functions.

\subsubsection{A state-identification argument}

In order to get rid of the logarithmic factor we then try another proof approach, which views learning from quantum examples as a quantum state identification problem: we are given $T$ copies of the quantum example for some concept $c$ and need to $\eps$-approximate $c$ from this.
In order to render $\eps$-approximation of $c$ equivalent to exact identification of $c$, we use good linear error-correcting codes, restricting to concepts whose $d$-bit labeling of the elements of the shattered set $s_1,\ldots,s_d$ corresponds to a codeword.
We then have $2^{\Omega(d)}$ possible concepts, one for each codeword, and need to identify the target concept from a quantum state that is the tensor product of $T$ identical quantum~examples. 

State-identification problems have been well studied, and many tools are available for analyzing them.  In particular, we will use the so-called ``Pretty Good Measurement'' (PGM, also known as ``square root measurement''~\cite{hausladen:squareroot}) introduced by Hausladen and Wootters~\cite{hausladen:pgmintroduction}. The PGM is a specific measurement that one can always use for state identification, and whose success probability is no more than quadratically worse than that of the very best measurement.\footnote{Even better, in our application the PGM \emph{is} the optimal measurement, though this is not essential for our proof.} 
In Section~\ref{section:stateident} we use Fourier analysis to give an exact analysis of the average success probability of the PGM on the state-identification problems that come from both the PAC and the agnostic model. This analysis could be useful in other settings as well. Here it implies that the number of quantum examples, $T$, is lower bounded by Eq.~(\ref{eq:samplecomplpac}) in the PAC setting, and by Eq.~(\ref{eq:samplecomplagn}) in the agnostic setting.

Using the Pretty Good Measurement, we are also able to prove lower bounds for PAC learning under \emph{random classification noise}, which models the real-world situation that the learning data can have some errors. Classically in the random classification noise model (introduced by Angluin and Laird~\cite{angluin:randomclassificationnoise}), instead of obtaining labeled examples $(x,c(x))$ for some unknown $c\in \C$, the learner obtains \emph{noisy examples} $(x,b_x)$, where $b_x=c(x)$ with probability $1-\eta$ and $b_x=1-c(x)$ with probability $\eta$, for some \emph{noise rate} $\eta\in [0,1/2)$. 
%Naturally, when $\eta=1/2$, the learner obtains uniformly random bits of information about the target concept and cannot hope to recover $c$ from any number of labeled examples. 
Similarly, in the quantum learning model we could naturally define a \emph{noisy quantum example}~as~an~$(n+1)$-qubit~state 
$$
 \sum_{x\in\01^n}\sqrt{(1-\eta)D(x)}\ket{x,c(x)}+\sqrt{\eta D(x)}\ket{x,1-c(x)}.
$$ 
Using the PGM, we are able to show that the quantum sample complexity of PAC learning a concept class $\C$ under random classification noise~is:
\begin{equation}\label{eq:samplecomplnoisypac}
\Omega\Big(\frac{d}{(1-2\eta)^2\eps} + \frac{\log(1/\delta)}{(1-2\eta)^2\eps}\Big).
\end{equation}
We remark here that the best known classical sample complexity lower bound (see \cite{simon:agnosticlowerbound}) under the random classification noise is equal to the quantum sample complexity lower bound proven in Eq.~(\ref{eq:samplecomplnoisypac}).

\subsection{Related work}
Let us briefly discuss some related work in quantum learning theory, referring to our recent survey~\cite{arunachalam:quantumlearningsurvey} for more. In this paper we focus on \emph{sample} complexity, which is a fundamental information-theoretic quantity. Sample complexity concerns a form of ``passive'' learning: the learner gets a number of examples at the start of the process, and then has to extract enough information about the target concept from these.  We may also consider more active learning settings, in particular ones where the learner can make membership queries (i.e., learn the label $c(x)$ for any~$x$ of his choice). Servedio and Gortler~\cite{servedio&gortler:equivalencequantumclassical} showed that in this setting, classical and quantum complexity are polynomially related. They also exhibit an example of a factor-$n$ speed-up from quantum membership queries using the Bernstein-Vazirani~algorithm. Jackson et al.~\cite{jackson:quantumdnf} showed how quantum membership queries can improve Jackson's classical algorithm for learning DNF with membership queries under the uniform distribution~\cite{jackson:dnf}.

For \emph{quantum exact learning} (also referred to as the \emph{oracle identification} problem in the quantum literature), Kothari~\cite{kothari:oracleidentification} resolved a conjecture of Hunziker et al.~\cite{hunziker:quantumexactlearning}, that states that for any concept class $\C$, the number of quantum membership queries required  to exactly identify a concept $c\in \C$ is $O(\frac{\log |\C|}{\sqrt{\hat{\gamma}^\C}})$, where $\hat{\gamma}^\C$ is a combinatorial parameter of the concept class $\C$ which we shall not define here (see~\cite{atici&servedio:qlearning} for a precise definition). Montanaro~\cite{montanaro:learningpolynomials} showed how low-degree polynomials over a finite field can be identified more efficiently using quantum algorithms. 

In many ways the \emph{time} complexity of learning is at least as important as the sample complexity. We already mentioned that Servedio and Gortler~\cite{servedio&gortler:equivalencequantumclassical} exhibited a concept class based on factoring Blum integers that can be learned in quantum polynomial time but not in classical polynomial time, unless Blum integers can be factored efficiently.
Under the weaker (but still widely believed) assumption that one-way functions exist, they exhibited a concept class that can be learned exactly in polynomial time using quantum membership queries, but that takes superpolynomial time to learn from classical membership queries. Gavinsky~\cite{gavinsky:predictivelearning} introduced a model of learning called ``Predictive Quantum'' (PQ), a variation of quantum PAC learning, and exhibited a \emph{relational} concept class that is polynomial-time learnable in PQ, while any ``reasonable'' classical model requires an exponential number of classical examples to learn the concept~class.

A{\"{\i}}meur et al.~\cite{aimeur:mlinquantumworld,aimeur:qspeedup} consider a number of quantum algorithms in learning contexts such as clustering via minimum spanning tree, divisive clustering, and $k$-medians, using variants of Grover's algorithm~\cite{grover:search} to improve the time complexity of the analogous classical algorithms. Recently, there have been some quantum machine learning algorithms based on the HHL algorithm~\cite{hhl:lineq} for solving (in a weak sense) very well-behaved linear systems.  However, these algorithms often come with some fine print that limits their applicability, and their advantage over classical is not always clear.  We refer to Aaronson~\cite{aaronson:fineprint} for references and caveats. There has also been some work on quantum training of neural networks~\cite{wiebe:quantumdeeplearning,wiebe:quantumperceptronmodels}.

In addition to learning classical objects such as Boolean functions, one may also study the learnability of \emph{quantum} objects.  In particular, Aaronson~\cite{aaronson:qlearnability} studied how well $n$-qubit quantum states can be learned from measurement results.  In general, an $n$-qubit state $\rho$ is specified by $\exp(n)$ many parameters, and $\exp(n)$ measurement results on equally many copies of $\rho$ are needed to learn a good approximation of $\rho$ (say, in trace distance).  However, Aaronson showed an interesting and surprisingly efficient PAC-like result: from $O(n)$ measurement results, with measurements chosen i.i.d.\ according to an unknown distribution~$D$ on the set of all possible two-outcome measurements, we can learn an $n$-qubit quantum state~$\widetilde{\rho}$ that has roughly the same expectation value as $\rho$ for ``most'' possible two-outcome measurements. In the latter, ``most'' is again measured under~$D$, just like in the usual PAC learning the error of the learner's hypothesis is evaluated under the same distribution~$D$ that generated the learner's examples. Accordingly, $O(n)$ rather than $\exp(n)$ measurement results suffice to approximately learn an $n$-qubit state for most practical purposes.

The use of Fourier analysis in analyzing the success probability of the Pretty Good Measurement in quantum state identification appears in a number of earlier works. By considering the dihedral hidden subgroup problem (DHSP) as a state identification problem, Bacon et al.~\cite{bacon:hiddensubgroup} show that the PGM is the optimal measurement for DHSP and prove a lower bound on the sample complexity of $\Omega(\log |\G|)$ for a dihedral group $\G$ using Fourier analysis. Ambainis and Montanaro~\cite{ambainis&montanaro:wildcard&CGT} view the ``search with wildcard'' problem as a state identification problem. Using ideas similar to ours, they show that the $(x,y)$-th entry of the Gram matrix for the ensemble depends on the Hamming distance between $x$ and $y$, allowing them to use Fourier analysis to obtain an upper bound on the success probability of the state identification problem using the~PGM.

\subsection{Organization}
In Section~\ref{section:preliminaries} we formally define the classical and quantum learning models and introduce the Pretty Good Measurement. In Section~\ref{section:infotheorylowerbounds} we prove our information-theoretic lower bounds both for classical and quantum learning. In Section~\ref{section:stateident} we prove an optimal quantum lower bound for PAC and agnostic learning by viewing the learning process as a state identification problem. We conclude in Section~\ref{section:conclusion} with some open questions for further work.

\section{Preliminaries}
\label{section:preliminaries}

\subsection{Notation}
Let $[n]=\{1,\ldots,n\}$. For $x,y\in \01^d$, the bit-wise sum $x+ y$ is over $\F_2$, the \emph{Hamming distance} $d(x,y)$ is the number of indices on which $x$ and $y$ differ, $|x+y|$ is the Hamming weight of the string $x+y$ (which equals $d_H(x,y)$), and $x\cdot y=\sum_i x_iy_i$ (where the sum is over $\F_2$). For an $n$-dimensional vector space, the standard basis is denoted by $\{e_i\in \01^n:i\in [n]\}$, where $e_i$ is the vector with a $1$ in the $i$-th coordinate and $0$'s elsewhere. We write $\log$ for logarithm to base $2$, and $\ln$ for base~$e$. We will often use the bijection between the sets $\01^k$ and $[2^k]$ throughout this paper. Let $1_{[A]}$  be the indicator for an event $A$, and let $\delta_{x,y}=1_{[x=y]}$. We denote random variables in bold, such~as $\mathbf{A}$, $\mathbf{B}$.  

For a Boolean function $f:\01^m\rightarrow \01$ and $M\in \F_2^{m\times k}$ we define $f\circ M:\01^k\rightarrow \01$ as $(f\circ M)(x):=f(Mx)$ (where the matrix-vector product is over $\F_2$) for all $x\in \01^k$. For a distribution $D:\01^n\rightarrow [0,1]$, let $\supp(D)=\{x\in \01^n:D(x)\neq 0 \}$. By $x\sim D$, we mean $x$ is sampled according to the distribution $D$, i.e., $\Pr[\mathbf{X}=x]=D(x)$.

If $M$ is a positive semidefinite (psd) matrix, we define $\sqrt{M}$ as the unique psd matrix that satisfies $\sqrt{M}\cdot \sqrt{M}=M$, and $\sqrt{M}(i,j)$ as the $(i,j)$-th entry of $\sqrt{M}$. For a matrix $A\in \R^{m\times n}$, we denote the singular values of $A$ by $\sigma_1(A)\geq \sigma_2(A)\geq \cdots \geq \sigma_{\min\{m,n\}}(A)\geq 0$. The spectral norm of $A$ is $\|A\|=\max_{x\in \R^n,\|x\|=1}\|A x\|=\sigma_1$. Given a set of $d$-dimensional vectors $U=\{u_1,\ldots,u_n\}\in \R^{d}$, the Gram matrix $V$ corresponding to the set $U$ is the $n\times n$ psd matrix defined as $V(i,j)=u_i^t u_j$ for $i,j\in [n]$, where $u_i^t$ is the row vector that is the transpose of the column vector~$u_i$.

A technical tool used in our analysis of state identification problems is Fourier analysis on the Boolean cube. We will just introduce the basics of Fourier analysis here, referring to~\cite{odonnell:analysis} for more. Define the inner product between functions $f,g:\01^n\rightarrow \R$ as
$$
\langle f,g\rangle=\Ex_x [f(x)\cdot g(x)]
$$
where the expectation is uniform over $x\in \01^n$. For $S\subseteq [n]$ (equivalently $S\in\01^n$), let $\chi_S(x):=(-1)^{S\cdot x}$ denote the parity of the variables (of $x$) indexed by the set~$S$. It is easy to see that the set of functions $\{\chi_S\}_{S\subseteq [n]}$ forms an orthonormal basis for the space of real-valued functions over the Boolean cube. Hence every $f$ can be decomposed as 
$$
f(x)=\sum_{S\subseteq [n]} \widehat{f}(S) (-1)^{S\cdot x} \qquad \text{for all }x\in \01^n,
$$ 
where $\widehat{f}(S)=\langle f,\chi_S\rangle=\Ex_x [f(x)\cdot \chi_S(x)]$ is called a \emph{Fourier coefficient} of $f$.

\subsection{Learning in general}

In machine learning, a concept class $\C$ over $\01^n$ is a set of concepts $c:\01^n\rightarrow \01$.  We refer to a concept class $\C$ as being \emph{trivial} if either $\C$ contains only one concept, or $\C$ contains two concepts $c_0, c_1$ with $c_0(x)=1-c_1(x)$ for every $x\in \01^n$. For $c:\01^n\rightarrow \01$, we will often refer to the tuple $(x,c(x))\in \01^{n+1}$ as a \emph{labeled example}, where $c(x)$ is the~\emph{label}~of~$x$.

A central combinatorial concept in learning theory is the Vapnik-Chervonenkis (VC) dimension \cite{vapnik:vcdimension}. Fix a concept class $\C$ over $\01^n$. A set $\Sh=\{s_1,\ldots,s_t\}\subseteq \01^n$ is said to be \emph{shattered} by a concept class $\C$ if  $\{(c({s_1}),\ldots,c({s_t})) : c\in \C\} =\01^{t}$. In other words, for every labeling $\ell\in \01^{t}$, there exists a $c\in \C$ such that $(c({s_1}),\ldots,c({s_t}))=\ell$. The VC dimension of a concept class $\C$ is the  size of the largest $\Sh\subseteq \01^n$ that is shattered by~$\C$.

\subsection{Classical learning models}
\label{section:classicallearningmodels}
In this paper we will be concerned mainly with the PAC (Probably Approximately Correct) model of learning introduced by Valiant~\cite{valiant:paclearning}, and the agnostic model of learning introduced by Haussler~\cite{haussler:agnosticlearning} and Kearns et al.~\cite{kearns:agnosticlearning}. For further reading, see standard textbooks in computational learning theory such as~\cite{kearnss&vazirani:learningbook,anthony&bartlett:learningbook,shwartz&david:learningbook}.

In the classical PAC model, a learner $\A$ is given access to a \emph{random example oracle} $\PEX(c,D)$ which generates labeled examples of the form $(x,c(x))$ where $x$ is drawn from an unknown distribution $D:\01^n\rightarrow [0,1]$ and $c\in \C$ is the \emph{target concept} that $\A$ is trying to learn. For  a concept $c\in \C$ and hypothesis $h:\01^n\rightarrow \01$, we define the error of $h$ compared to the target concept $c$, under $D$, as $\err_D(h,c)=\Pr_{x\sim D}[h(x)\neq c(x)]$. A learning algorithm $\A$ is an \emph{$(\eps,\delta)$-PAC learner} for $\C$, if the following holds: 
\begin{quote}
For every $c\in\C$ and distribution $D$, given access to the $\PEX(c,D)$ oracle:\\ 
$\A$ outputs an $h$ such that $\err_D(h,c)\leq \eps$ with probability at least $1-\delta$. 
\end{quote}
The \emph{sample complexity} of $\A$ is the maximum number of invocations of the $\PEX(c,D)$ oracle which the learner makes, over all concepts $c\in \C$, distributions $D$, and the internal randomness of the learner. The \emph{$(\eps,\delta)$-PAC sample complexity} of a concept class $\C$ is the minimum sample complexity over all $(\eps,\delta)$-PAC learners for~$\C$. 

\emph{Agnostic} learning is the following model: for a distribution $D:\01^{n+1}\rightarrow [0,1]$, a learner $\A$ is given access to an $\AEX(D)$ oracle that generates examples of the form $(x,b)$ drawn from the distribution $D$.  We define the error of $h:\01^n\rightarrow \01$ under $D$ as $\err_D(h)=\Pr_{(x,b)\sim D}[h(x)\neq b]$. When $h$ is restricted to come from a concept class $\C$, 
the minimal error achievable is $\opt_D(\C)=\min_{c\in \C} \{\err_D(c)\}$. 
%For a distribution $D$, a hypothesis $c^\star$ that minimizes $\opt_D(\C)$ is called a \emph{Bayes classifier}. 
%and it is not too hard to see~that
%$$
%c^\star(x)=\argmax_{b\in \01} \Pr[\mathbf{B}=b\mid\mathbf{X}=x], \quad \text{for all }x\in \01^n.
%$$  
In agnostic learning, a learner $\A$ needs to output a hypothesis $h$ whose error is not much bigger than~$\opt_D(\C)$.
A learning algorithm $\A$ is an \emph{$(\eps,\delta)$-agnostic learner} for $\C$ if: 
\begin{quote}
For every distribution $D$ on $\01^{n+1}$, given access to the $\AEX(D)$~oracle:\\ 
$\A$ outputs an $h\in\C$ such that $\err_D(h)\leq \opt_D(\C)+\eps$ with probability at least $1-\delta$.
\end{quote}
Note that if there is a $c\in \C$ which perfectly classifies every $x$ with label~$y$ for $(x,y)\in \supp(D)$, then $\opt_D(\C)=0$ and we are in the setting of proper PAC learning. The \emph{sample complexity} of $\A$ is the maximum number of invocations of the $\AEX(c,D)$ oracle which the learner makes, over all distributions~$D$ and over the learner's internal randomness. The \emph{$(\eps,\delta)$-agnostic sample complexity} of a concept class $\C$ is the minimum sample complexity over all $(\eps,\delta)$-agnostic learners for~$\C$.

\subsection{Quantum information theory}
Throughout this paper we will assume the reader is familiar with the following quantum terminology. An $n$-dimensional \emph{pure state} is $\ket{\psi}=\sum_{i=1}^n \alpha_i \ket{i}$, where $\ket{i}$ is the $n$-dimensional unit vector that has a~1 only at position~$i$, the $\alpha_i$'s are complex numbers called the \emph{amplitudes}, and $\sum_{i\in [n]}|\alpha_i|^2=1$. An $n$-dimensional \emph{mixed state} (or \emph{density matrix}) $\rho=\sum_{i=1}^n p_i\ketbra{\psi_i}{\psi_i}$ is a mixture of pure states $\ket{\psi_1},\ldots,\ket{\psi_n}$ prepared with probabilities $p_1,\ldots,p_n$, respectively. The eigenvalues $\lambda_1,\ldots,\lambda_n$ of $\rho$ are non-negative reals and satisfy $\sum_{i\in [n]}\lambda_i=1$. If $\rho$ is pure (i.e., $\rho=\ketbra{\psi}{\psi}$ for some $\ket{\psi}$), then one of the eigenvalues is $1$ and the others are $0$.

To obtain classical information from $\rho$, one could apply a POVM (positive-operator-valued measure) to the state $\rho$. An $m$-outcome POVM is specified by a set of positive semidefinite  matrices $\{M_i\}_{i\in [m]}$ with the property $\sum_{i} M_i= \Id$. When this POVM is applied to the mixed state $\rho$, the probability of the $j$-th outcome is given by $\Tr(M_j\rho)$.

For a probability vector $(p_1,\ldots,p_k)$ (where $p_i\geq 0$ and $\sum_{i\in [k]} p_i=1$), the entropy function is defined as $H(p_1,\ldots,p_k)=-\sum_{i\in [k]} p_i\log p_i$. When $k=2$, with $p_1=p$ and $p_2=1-p$, we denote the binary entropy function as $H(p)$. For a state $\rho_{AB}$ on the Hilbert space $\Hi_A \otimes \Hi_B$, we let $\rho_A$ be the reduced state after taking the partial trace over $\Hi_B$.  The entropy of a quantum state $\rho_A$ is defined as $S(\mathbf{A})=-\Tr(\rho_{A} \log \rho_{A})$. The mutual information is defined as $I(\mathbf{A}:\mathbf{B})=S(\textbf{A})+S(\textbf{B})-S(\textbf{A}\textbf{B})$, and conditional entropy is defined as $S(\textbf{A}|\textbf{B})=S(\textbf{A}\textbf{B})-S(\textbf{B})$. Classical information-theoretic quantities correspond to the special case where $\rho$ is a diagonal matrix whose diagonal corresponds to the probability distribution of the random variable. Writing $\rho_A$ in its eigenbasis, it follows that $S(\mathbf{A})=H(\lambda_1,\ldots, \lambda_{\dim(\rho_A)})$, where $\lambda_1,\ldots,\lambda_{\dim(\rho_A)}$ are the eigenvalues of $\rho$. If $\rho_A$ is a pure state, $S(\mathbf{A})=0$.

\subsection{Quantum learning models}
The quantum PAC learning model was introduced by Bshouty and Jackson in \cite{bshouty:quantumpac}. The quantum PAC model is a generalization of the classical PAC model, instead of having access to random examples $(x,c(x))$ from the $\PEX(c,D)$ oracle, the learner now has access to superpositions over all~$(x,c(x))$. For an unknown distribution $D:\01^n\rightarrow [0,1]$ and concept $c\in \C$, a \emph{quantum example oracle} $\QPEX(c,D)$ acts on $\ket{0^n,0}$ and produces a \emph{quantum example} $\sum_{x\in \01^n} \sqrt{D(x)} \ket{x,c(x)}$ (we leave $\QPEX$ undefined on other basis states). A quantum learner is given access to some copies of the state generated by $\QPEX(c,D)$ and performs a POVM where each outcome is associated with a hypothesis. A learning algorithm $\A$ is an \emph{$(\eps,\delta)$-PAC quantum learner} for $\C$ if: 
 \begin{quote}
For every $c\in \C$ and distribution $D$, given access to the $\QPEX(c,D)$~oracle:\\ 
$\A$ outputs an $h$ such that $\err_D(h,c)\leq \eps$, with probability at least $1-\delta$. 
\end{quote}
The \emph{sample complexity} of $\A$ is the maximum number invocations of the $\QPEX(c,D)$ oracle, maximized over all $c\in \C$, distributions $D$, and the learner's internal randomness. The \emph{$(\eps,\delta$)-PAC quantum sample complexity} of a concept class $\C$ is the minimum sample complexity over all $(\eps,\delta)$-PAC quantum learners for~$\C$. 

We define quantum agnostic learning now. For a joint distribution $D:\01^{n+1}\rightarrow [0,1]$ over the set of examples, the learner has access to an $\QAEX(D)$  oracle which acts on $\ket{0^n,0}$ and produces a quantum example $\sum_{(x,b)\in \01^{n+1}} \sqrt{D(x,b)} \ket{x,b}$. A learning algorithm $\A$ is an \emph{$(\eps,\delta)$-agnostic quantum learner} for $\C$ if: 
\begin{quote}
For every distribution $D$, given access to the $\QAEX(D)$~oracle: \\
$\A$ outputs an $h\in\C$ such that $\err_D(h)\leq \opt_D(\C)+\eps$ with probability at least~$1-\delta$.
\end{quote}
The \emph{sample complexity} of $\A$ is the maximum number invocations of the $\QAEX(D)$ oracle over all distributions $D$ and over the learner's internal randomness. The \emph{$(\eps,\delta)$-agnostic quantum sample complexity} of a concept class $\C$ is the minimum sample complexity over all $(\eps,\delta)$-agnostic quantum learners for~$\C$.
  
\subsection{Pretty Good Measurement}
\label{section:pgm}
Consider an ensemble of $d$-dimensional states, $\E=\{(p_i,\ket{\psi_i})\}_{i\in[m]}$, where $\sum_{i\in [m]} p_i=1$. Suppose we are given an unknown state $\ket{\psi_{i}}$ sampled according to the probabilities and we are interested in maximizing the average probability of success to identify the state that we are given. For a POVM specified by positive semidefinite  matrices $\M=\{M_i\}_{i\in[m]}$, the probability of obtaining outcome $j$ equals $\braketbra{\psi_i}{M_j}{\psi_i}$. The average success probability is defined as
$$
P_{\M}(\E) = \sum_{i=1}^m p_i\braketbra{\psi_i}{M_i}{\psi_i}.
$$
Let $P^{opt}(\E)=\max_{\M} P_{\M}(\E)$ denote the optimal average success probability of $\E$, where the maximization is over the set of valid $m$-outcome POVMs.

For every ensemble $\E$, the so-called \emph{Pretty Good Measurement} (PGM) is a specific POVM (depending on the ensemble $\E$), which we shall define shortly, that does \emph{reasonably} well against $\E$. Suppose $P^{PGM}(\E)$ is defined as the average success probability of identifying the states in $\E$ using the PGM, then we have that 
$$
P^{opt}(\E)^2 \leq P^{PGM}(\E)\leq  P^{opt}(\E),
$$
 where the second inequality follows because $P^{opt}(\E)$ is a maximization over all valid POVMs and the first inequality was shown by Barnum and Knill~\cite{barnum:pgmmeasurement}. 

For completeness we give a simple proof of $P^{opt}(\E)^2 \leq P^{PGM}(\E)$ below (similar to~\cite{montanaro:distinguishability}). Let $\ket{\psi'_i}=\sqrt{p_i}\ket{\psi_i}$, and $\E'=\{\ket{\psi'_i} : i\in [m]\}$ be the set of states in~$\E$, renormalized to reflect their probabilities. Define $\rho=\sum_{i\in [m]} \ketbra{\psi'_i}{\psi'_i}$. The PGM is defined as the set of measurement operators $\{\ketbra{\nu_i}{\nu_i}\}_{i\in [m]}$ where $\ket{\nu_i}=\rho^{-1/2}\ket{\psi'_i}$ (the inverse square root of $\rho$ is taken over its non-zero eigenvalues). We first verify this is a valid POVM: 
$$
\sum_{i=1}^{m} \ketbra{\nu_i}{\nu_i}= \rho^{-1/2}\,\Big(\sum_{i=1}^{m} \ketbra{\psi'_i}{\psi'_i}\Big)\,\rho^{-1/2}=\Id.
$$
Let $G$ be the Gram matrix for the set $\E'$, i.e., $G(i,j)=\ip{\psi'_i}{\psi'_j}$ for $i,j\in [m]$. It can be verified that $\sqrt{G}(i,j)=\braketbra{\psi'_i}{\rho^{-1/2}}{\psi'_j}$. Hence 
\begin{align*}
P^{PGM}(\E)=\sum_{i\in [m]} p_i|\ip{\nu_i}{\psi_i}|^2&=\sum_{i\in [m]} |\ip{\nu_i}{\psi'_i}|^2\\
&=\sum_{i\in [m]} \braketbra{\psi'_i}{\rho^{-1/2}}{\psi'_i}^2=\sum_{i\in [m]} \sqrt{G}(i,i)^2.
\end{align*}
We now prove $P^{opt}(\E)^2 \leq P^{PGM}(\E)$. Suppose $\M$ is the optimal measurement. Since $\E$ consists of pure states, by a result of Eldar et al.~\cite{eldar:optimalquantumdetectors}, we can assume without loss of generality that the measurement operators in $\M$ are rank-1, so $M_i=\ketbra{\mu_i}{\mu_i}$ for some $\ket{\mu_i}$. 
Note that 
\begin{align}
\label{eq:upperboundmuirho}
\begin{aligned}
1=\Tr(\rho)&=\Tr\Big(\sum_{i\in[m]}\ketbra{\mu_i}{\mu_i}\rho^{1/2}\sum_{j\in[m]}\ketbra{\mu_j}{\mu_j}\rho^{1/2}\Big)\\
&=\sum_{i,j\in[m]}|\bra{\mu_i}\rho^{1/2}\ket{\mu_j}|^2\\
&\geq\sum_{i\in [m]}\bra{\mu_i}\rho^{1/2}\ket{\mu_i}^2.
\end{aligned}
\end{align}
Then, using the Cauchy-Schwarz inequality, we have
\begin{align*}
P^{opt}(\E)=\sum_{i\in [m]} |\ip{\mu_i}{\psi'_i}|^2&=\sum_{i\in [m]} |\braketbra{\mu_i}{\rho^{1/4}\rho^{-1/4}}{\psi'_i}|^2 \\
&\leq \sum_{i\in [m]} \braketbra{\mu_i}{\rho^{1/2}}{\mu_i}\braketbra{\psi'_i}{\rho^{-1/2}}{\psi'_i} \\
&\leq \sqrt{ \sum_{i\in [m]}  \braketbra{\mu_i}{\rho^{1/2}}{\mu_i}^2}\sqrt{ \sum_{i\in [m]} \braketbra{\psi'_i}{\rho^{-1/2}}{\psi'_i}^2} \\
&\stackrel{\text{Eq.}~(\ref{eq:upperboundmuirho})}{\leq} \sqrt{ \sum_{i\in [m]} \braketbra{\psi'_i}{\rho^{-1/2}}{\psi'_i}^2}\\
&= \sqrt{P^{PGM}(\E)}.
\end{align*}

The above shows that for all ensembles~$\E$, the PGM for that ensemble is not much worse than the optimal measurement. In some cases the PGM \emph{is} the optimal measurement.  In particular, an ensemble $\E$ is called \emph{geometrically uniform} if $\E=\{U_i\ket{\ph}:i\in [m]\}$ for some Abelian group of matrices $\{U_i\}_{i\in [m]}$ and state~$\ket{\ph}$. Eldar and Forney~\cite{eldar&forney:squarerootmeasurement} showed $P^{opt}(\E)=P^{PGM}(\E)$ for such~$\E$.

\subsection{Known results and required claims}
The following theorems characterize the sample complexity of classical PAC and~agnostic~learning.

\begin{theorem} [\cite{blumer:optimalpacupper,hanneke:optimalpaclower}]
\label{thm:classicalpaclearning}
Let $\C$ be a concept class with VC-dim$(\C)=d+1$.
In the PAC model, $\Theta\Big(\frac{d}{\eps} + \frac{\log(1/\delta)}{\eps}\Big) $ examples are necessary and sufficient for a classical $(\eps,\delta)$-PAC learner for~$\C$.
\end{theorem}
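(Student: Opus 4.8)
The plan is to prove the upper and lower bounds separately, and within each to peel off the distribution-independent $\log(1/\delta)/\eps$ term from the $d/\eps$ term, since the two halves need genuinely different arguments; at the end one combines them using $\max\{a,b\}=\Theta(a+b)$.

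\textbf{Lower bound.} Let $\Sh=\{s_0,s_1,\ldots,s_d\}\subseteq\01^n$ be shattered by $\C$. For the $d/\eps$ term, take $D$ putting mass $1-4\eps$ on $s_0$ and mass $4\eps/d$ on each of $s_1,\ldots,s_d$, and let the target be obtained by fixing $c(s_0)=0$ and drawing $(c(s_1),\ldots,c(s_d))$ uniformly at random (possible since $\Sh$ is shattered). On a sample of size $T$ the expected number of distinct points among $s_1,\ldots,s_d$ ever seen is at most $4\eps T$; on each unseen such point the label is still uniform and independent given the sample, so any hypothesis errs there with probability exactly $1/2$. These unseen points carry $D$-mass at least $4\eps(1-4\eps T/d)$, so the learner's expected error is at least $2\eps(1-4\eps T/d)$, and pushing this below $\eps$ (even just on average) forces $T=\Omega(d/\eps)$. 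This is the argument of Ehrenfeucht et al., and is the classical specialization of the information-theoretic argument carried out in Section~\ref{section:infotheorylowerbounds}. For the $\log(1/\delta)/\eps$ term, nontriviality of $\C$ gives two concepts $c_0,c_1$ differing on a single point $s_1$; put mass $2\eps$ on $s_1$ and mass $1-2\eps$ on a point where they agree. A learner identifying the target with confidence $1-\delta$ must see $s_1$ at least once, which happens within $T$ samples only with probability $1-(1-2\eps)^T$, so $T=\Omega(\log(1/\delta)/\eps)$.

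\textbf{Upper bound and the main obstacle.} I would first prove the weaker bound $O\big((d\log(1/\eps)+\log(1/\delta))/\eps\big)$ by the textbook consistency / Empirical Risk Minimization argument: draw $m$ examples and output any $h\in\C$ consistent with all of them (one exists, namely the target $c$). By the Sauer--Shelah lemma the number of distinct behaviors of $\C$ on $m$ fixed points is at most $(em/d)^d$, so a union bound shows that for $m=O\big((d\log(1/\eps)+\log(1/\delta))/\eps\big)$, with probability at least $1-\delta$ every $h\in\C$ with $\err_D(h,c)>\eps$ is inconsistent with some example in the sample; hence the returned hypothesis has error at most $\eps$. Removing the spurious $\log(1/\eps)$ factor is the hard and substantive part, and is exactly Hanneke's theorem (building on Simon): rather than returning one consistent hypothesis, one splits the sample into a constant number of blocks, runs the one-inclusion-graph predictor on carefully chosen subsamples, and outputs the pointwise majority vote; the one-inclusion-graph bound controls the leave-one-out error of the base predictor, and a recursive, boosting-flavored analysis upgrades this to error $\eps$ with confidence $1-\delta$ from $O(d/\eps+\log(1/\delta)/\eps)$ examples. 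This last step is where all the difficulty lies — the elementary ERM analysis provably loses the $\log(1/\eps)$ factor — so I would invoke Hanneke's result rather than reprove it, since the theorem is stated here only as a known input for the quantum bounds that follow.
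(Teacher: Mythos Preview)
The paper does not actually prove this theorem: it is listed under ``Known results and required claims'' and is simply cited to \cite{blumer:optimalpacupper,hanneke:optimalpaclower} with no proof given. So there is no paper-proof to compare against in the strict sense; your sketch --- Ehrenfeucht et al.\ for the $d/\eps$ lower bound, a two-concept argument for $\log(1/\delta)/\eps$, ERM plus Sauer--Shelah for the suboptimal upper bound, and deferral to Hanneke for the sharp upper bound --- is a correct outline of how one would justify the statement from the literature.

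That said, the paper \emph{does} reprove the lower bound independently as Theorem~\ref{thm:infoclassicalpac}, and there the route differs from yours. You run the direct probabilistic argument (expected number of unseen shattered points, hence expected error $\geq 2\eps(1-4\eps T/d)$). The paper instead uses the same hard distribution but argues via mutual information: $I(\mathbf{A}:\mathbf{B})=\Omega(d)$ because the learner must recover most of the labeling, $I(\mathbf{A}:\mathbf{B})\leq T\cdot I(\mathbf{A}:\mathbf{B}_1)$ by subadditivity, and $I(\mathbf{A}:\mathbf{B}_1)=4\eps$ by a short chain-rule computation. Your argument is more elementary and more standard; the paper's buys a template that ports directly to the agnostic and quantum settings by only re-analyzing the per-sample information. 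One small point: your expected-error bound needs a line converting ``expected error $>\eps$'' into ``error $>\eps$ with probability $>\delta$'' to yield an $(\eps,\delta)$ lower bound; this is routine (e.g., via Markov on the complementary event, or the original Ehrenfeucht et al.\ analysis), but worth stating explicitly.
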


\begin{theorem} [\cite{vapnik:agnosticlowerbound,simon:agnosticlowerbound, talagrand:agnosticupperbound}]
\label{thm:classicalagnosticlearning}
Let $\C$ be a concept class with VC-dim$(\C)=d$.
In the agnostic model, $\Theta\Big(\frac{d}{\eps^2} + \frac{\log(1/\delta)}{\eps^2}\Big) $ examples are necessary and sufficient for a classical $(\eps,\delta)$-agnostic learner for $\C$.
\end{theorem}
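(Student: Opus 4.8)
The plan is to prove matching upper and lower bounds, each splitting into a ``VC-dimension part'' and a ``$\log(1/\delta)$ part.'' For the upper bound the learner is empirical risk minimization: draw a sample $S$ of $m$ i.i.d.\ examples from $D$ and output $h\in\C$ minimizing the empirical error $\widehat{\err}_S(c):=\frac1m|\{(x,b)\in S:c(x)\neq b\}|$. Everything reduces to showing that $m=O((d+\log(1/\delta))/\eps^2)$ examples force the \emph{uniform convergence} event $\sup_{c\in\C}|\widehat{\err}_S(c)-\err_D(c)|\leq\eps/2$ with probability $\geq1-\delta$; granted this, a three-line triangle inequality against the true minimizer $c^*$ of $\err_D$ gives $\err_D(h)\leq\opt_D(\C)+\eps$. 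To bound the supremum I would symmetrize, passing to the Rademacher complexity $\Ex_S\Ex_\sigma\sup_{c\in\C}\frac1m\sum_i\sigma_i\,1_{[c(x_i)\neq b_i]}$, and control it by Dudley's entropy integral $\frac{C}{\sqrt m}\int_0^1\sqrt{\log N(\tau)}\,d\tau$, where $N(\tau)$ is the $L_2(D)$-covering number of $\C$ at scale $\tau$: Haussler's bound $\log N(\tau)=O(d\log(1/\tau))$ makes the integral converge to $O(\sqrt{d/m})$, and a concentration inequality (bounded differences, or Talagrand's inequality for empirical processes for sharp constants) upgrades this to the high-probability statement. \emph{This last step is the main obstacle:} the cruder route of bounding the supremum directly through the growth function $\Pi_\C(2m)\leq(2em/d)^d$ is elementary but loses an extra $\log(1/\eps)$ factor, and removing that loss --- getting from $O((d\log(1/\eps)+\log(1/\delta))/\eps^2)$ down to $O((d+\log(1/\delta))/\eps^2)$ --- is exactly Talagrand's refined empirical-process estimate.

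For the lower bound, consider first the $\Omega(d/\eps^2)$ term. Fix a set $\Sh=\{s_1,\dots,s_d\}$ shattered by $\C$, and for $a\in\01^d$ let $D_a$ on $\Sh\times\01$ pick $i\in[d]$ uniformly and output $(s_i,b)$ with $b=a_i$ with probability $\tfrac12+2\eps$. Since $\Sh$ is shattered some $c\in\C$ has $c(s_i)=a_i$ for all $i$, so $\opt_{D_a}(\C)=\tfrac12-2\eps$, and a direct count shows $\err_{D_a}(h)-\opt_{D_a}(\C)=\frac{4\eps}{d}\,|\{i:h(s_i)\neq a_i\}|$ for every $h$; hence an $\eps$-good hypothesis must agree with $a$ on at least $3d/4$ of the coordinates. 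Taking $a$ uniformly random, Fano's inequality shows that recovering $a$ to within Hamming distance $d/4$ with probability $\geq\tfrac23$ forces the transcript to carry $\Omega(d)$ bits of information about $a$. But a single example $(s_i,b)$ has $s_i$ independent of $a$ and $b$ a bit of bias $2\eps$ about $a_i$, so it carries only $1-H(\tfrac12+2\eps)=O(\eps^2)$ bits about $a$, and by subadditivity $T$ i.i.d.\ examples carry at most $O(T\eps^2)$ bits. Comparing, $T=\Omega(d/\eps^2)$.

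For the $\Omega(\log(1/\delta)/\eps^2)$ term a two-element subclass suffices: pick a point $s$ shattered by $\C$ and concepts $c_0,c_1\in\C$ with $c_0(s)=0$, $c_1(s)=1$, and let $D$ be supported on $s$ with label $0$ of probability $\tfrac12+\eps$ (``instance $0$'') or $\tfrac12-\eps$ (``instance $1$''). An $\eps$-good hypothesis must identify which instance is in force, i.e.\ distinguish $\mathrm{Bern}(\tfrac12-\eps)$ from $\mathrm{Bern}(\tfrac12+\eps)$ (the law of the observed label) with error probability $\leq\delta$; since the per-sample KL divergence is $O(\eps^2)$, the Bretagnolle--Huber / Le Cam bound shows this requires $\Omega(\log(1/\delta)/\eps^2)$ samples. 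Since the sample complexity is at least each of the two lower bounds and $\max(A,B)\geq(A+B)/2$, we conclude $T=\Omega(d/\eps^2+\log(1/\delta)/\eps^2)$, matching the upper bound.
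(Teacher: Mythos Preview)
Your proposal is correct. Note that in the paper this theorem is stated as a known result with citations and is not given a self-contained proof there; the upper bound is simply attributed to Talagrand, so your ERM/chaining sketch (with the honest caveat about where the hard work lies) is appropriate and there is nothing in the paper to compare it to.

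For the lower bound, the paper does supply its own proof later (Theorem~\ref{thm:infoclassicalagn} together with Lemma~\ref{lemma:deltapartoflemmaagnostic}), and your argument is essentially the same as that one. You use the identical family of distributions $D_a$ on a shattered set with per-coordinate bias $\Theta(\eps)$, the same reduction ``$\eps$-good hypothesis $\Rightarrow$ recover $a$ to Hamming distance $\leq d/4$,'' and the same three-step information accounting: $\Omega(d)$ bits needed, subadditivity over samples, and $1-H(1/2+2\eps)=O(\eps^2)$ bits per sample. You phrase step~1 as Fano while the paper writes out $I(\mathbf{A}:\mathbf{B})\geq H(\mathbf{A})-H(\mathbf{A}\mid\mathbf{B})$ directly, but this is cosmetic. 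For the $\log(1/\delta)/\eps^2$ term you use a KL/Bretagnolle--Huber two-point argument, whereas the paper's Lemma~\ref{lemma:deltapartoflemmaagnostic} bounds the inner product of the two (quantum) example states and invokes the Helstrom-type bound of Fact~\ref{fact:2statedistinguishingbound}; both routes are standard and yield the same conclusion.
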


We will use the following well-known theorem from the theory of error-correcting codes: 
\begin{theorem}
\label{thm:randomlinearcode}
For every sufficiently large integer $n$, there exists an integer $k\in[n/4,n]$ and a matrix $M\in\mathbb{F}_2^{n\times k}$ of rank $k$, such that the associated $[n,k,d]_2$ linear code $\{Mx:x\in\01^k\}$ has minimal distance~$d\geq~n/8$.
\end{theorem}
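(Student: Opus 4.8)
The plan is to use the probabilistic method, exactly in the spirit of the classical Gilbert--Varshamov argument. Set $k=\lceil n/4\rceil$, so that $k\in[n/4,n]$ for all sufficiently large $n$, and choose $M\in\mathbb{F}_2^{n\times k}$ uniformly at random, i.e., each of its $nk$ entries is an independent uniform bit. This makes the $k$ columns $Me_1,\dots,Me_k$ of $M$ independent uniformly random vectors in $\01^n$. I will show that with positive probability \emph{every} nonzero $x\in\01^k$ satisfies $|Mx|\geq n/8$, and then argue that any such $M$ meets all the requirements of the theorem.

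The one distributional fact I would need is that for each fixed nonzero $x\in\01^k$, the vector $Mx$ is uniformly distributed over $\01^n$: if $x_j=1$, write $Mx=Me_j+\sum_{i\neq j,\ x_i=1}Me_i$, and since $Me_j$ is uniform and independent of the remaining sum, $Mx$ is uniform. Hence, bounding the ``bad'' event by the volume of a Hamming ball,
$$
\Pr[\,|Mx|<n/8\,]\ \le\ 2^{-n}\sum_{i=0}^{\lfloor n/8\rfloor}\binom{n}{i}\ \le\ 2^{-n}\cdot 2^{H(1/8)n},
$$
where $H(\cdot)$ is the binary entropy function and I used the standard estimate $\sum_{i=0}^{\lambda n}\binom{n}{i}\le 2^{H(\lambda)n}$ for $\lambda\le 1/2$. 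A union bound over the fewer than $2^k$ nonzero strings then gives that some nonzero $x$ is bad with probability at most $2^{k}\cdot 2^{-n}\cdot 2^{H(1/8)n}=2^{(k/n+H(1/8)-1)n}$. Since $k\le n/4+1$, the exponent is at most $(H(1/8)-\tfrac34)n+1$, and a direct numerical check gives $H(1/8)=\tfrac38+\tfrac78\log\tfrac87<0.55<\tfrac34$, so this probability is $2^{-\Omega(n)}<1$ once $n$ is large enough. Therefore a matrix $M$ with $|Mx|\geq n/8$ for all nonzero $x$ exists; fix one.

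Finally I would note that such an $M$ automatically has rank $k$: if $Mx=Mx'$ with $x\neq x'$, then $M(x+x')=0$ contradicts $|M(x+x')|\geq n/8\geq 1$. So $x\mapsto Mx$ is injective, the linear code $\{Mx:x\in\01^k\}$ has $2^k$ distinct codewords, and its minimal distance equals $\min_{x\neq 0}|Mx|\geq n/8$, giving the desired $[n,k,d]_2$ code. There is no serious obstacle here; the only point that must be gotten right is that $k$ is chosen small enough (taking $k=\Theta(n)$ close to $n$ would fail) so that the Gilbert--Varshamov volume bound still leaves room for relative distance $1/8$, and concretely this is what the inequality $H(1/8)<3/4$ buys us.
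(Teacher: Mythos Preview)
Your proof is correct. The paper itself does not supply a proof of this statement: it is listed under ``Known results and required claims'' as a well-known fact from coding theory, with no argument given. Your probabilistic (Gilbert--Varshamov) argument is exactly the standard justification for this existence result, and all the steps check out: the uniformity of $Mx$ for nonzero $x$, the Hamming-ball estimate $\sum_{i\le n/8}\binom{n}{i}\le 2^{H(1/8)n}$, the union bound, the numerical inequality $H(1/8)<3/4$, and the observation that $\min_{x\neq 0}|Mx|\ge n/8$ forces $M$ to be injective and hence of rank~$k$.
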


We will need the following claims later

\begin{claim}
\label{claim:Fouriercoeffmatrixprod}
Let $f:\01^m\rightarrow \R$ and let $M\in \F_2^{m\times k}$. Then the Fourier coefficients of $f\circ M$ are $\widehat{f\circ M}(Q)=\sum_{S\in \01^m:M^t S =Q} \widehat{f} (S)$ for all $Q\subseteq [k]$ (where $M^t$ is the transpose of the matrix $M$).  
\end{claim}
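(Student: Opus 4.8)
The plan is to substitute the Fourier expansion of $f$ into the definition of $f\circ M$ and then regroup terms according to the value of $M^tS$. First I would write $f$ in its Fourier decomposition over $\01^m$, namely $f(y)=\sum_{S\subseteq[m]}\widehat{f}(S)(-1)^{S\cdot y}$, and plug in $y=Mx$ (with the matrix-vector product over $\F_2$) to obtain $(f\circ M)(x)=\sum_{S\subseteq[m]}\widehat{f}(S)(-1)^{S\cdot(Mx)}$ for every $x\in\01^k$.

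The one computational point to verify is the identity $S\cdot(Mx)=(M^tS)\cdot x$ as elements of $\F_2$; this is just the standard transpose identity $S^t(Mx)=(M^tS)^tx$ for a bilinear form, valid over any field and in particular over $\F_2$. Since $(-1)^a$ depends only on $a\bmod 2$, this gives $(f\circ M)(x)=\sum_{S\subseteq[m]}\widehat{f}(S)(-1)^{(M^tS)\cdot x}$. Then I would collect terms by the value $Q=M^tS\in\01^k$, rewriting this as
$$
(f\circ M)(x)=\sum_{Q\subseteq[k]}\Big(\sum_{S\in\01^m:\,M^tS=Q}\widehat{f}(S)\Big)(-1)^{Q\cdot x}.
$$

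Finally, since $\{\chi_Q\}_{Q\subseteq[k]}$ is an orthonormal basis for the real-valued functions on $\01^k$, the Fourier decomposition of $f\circ M$ is unique, so reading off the coefficient of $\chi_Q$ yields $\widehat{f\circ M}(Q)=\sum_{S:\,M^tS=Q}\widehat{f}(S)$, as claimed. There is no genuine obstacle here: the only minor points to keep in mind are the $\F_2$-bilinearity/transpose identity above, and the fact that the inner sum may be empty for some $Q$ (namely those $Q$ not in the column space of $M^t$), in which case the coefficient is $0$, consistent with the convention that an empty sum equals $0$.
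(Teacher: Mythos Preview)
Your proof is correct and follows essentially the same approach as the paper: both substitute the Fourier expansion of $f$, use the transpose identity $S\cdot(Mx)=(M^tS)\cdot x$ over $\F_2$, and then invoke orthogonality of the characters (the paper does this by directly computing $\widehat{f\circ M}(Q)=\Ex_z[(f\circ M)(z)\chi_Q(z)]$, while you equivalently appeal to uniqueness of the Fourier decomposition).
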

\begin{proof}
Writing out the Fourier coefficients of $f\circ M$
\begin{align*}
\widehat{f\circ M}(Q)&= \Ex_{z\in \01^k} [ (f\circ M)(z) (-1)^{Q\cdot z}] \\
&=\Ex_{z\in \01^k} \Big[\sum_{S\in \01^m} \widehat{f}(S) (-1)^{S\cdot (M z)+Q\cdot z}\Big]  \tag{Fourier expansion of $f$}\\
&=\sum_{S\in \01^m} \widehat{f}(S) \Ex_{z\in \01^k} [(-1)^{(M^t S+Q)\cdot  z}] \tag{using $\langle S,Mz\rangle =\langle M^t S,z\rangle $} \\
&=\sum_{S:M^t S =Q} \widehat{f} (S) \tag{using $\Ex_{z\in \01^k} (-1)^{(z_1+z_2)\cdot z}=\delta_{z_1,z_2}$}.
\end{align*}
\end{proof}
\begin{claim}
\label{claim:maximizingc/sqrtt}
$\max \{(c/\sqrt{t})^t: t\in [1,c^2]\}=e^{c^2/(2e)}$.
\end{claim}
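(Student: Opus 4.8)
The plan is to reduce the claim to an unconstrained one-variable optimization by taking logarithms. Write $g(t) := (c/\sqrt{t})^t$, so that $\ln g(t) = t\ln c - \tfrac{t}{2}\ln t =: f(t)$. Since $\exp$ is monotone increasing, maximizing $g$ over $t\in[1,c^2]$ is the same as maximizing $f$ over that interval, and then exponentiating the answer.

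Next I would do elementary calculus on $f$. Differentiating gives $f'(t) = \ln c - \tfrac12\ln t - \tfrac12$ and $f''(t) = -\tfrac{1}{2t} < 0$ on the relevant domain, so $f$ is strictly concave. Solving $f'(t)=0$ yields the unique critical point $t^\star = c^2/e$. In the regime where the claim is applied we have $c^2 \ge e$, so $1 \le c^2/e \le c^2$, i.e.\ $t^\star$ lies in the interval $[1,c^2]$; by strict concavity this interior critical point is therefore the global maximizer of $f$ on $[1,c^2]$.

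Finally I would evaluate $f$ at $t^\star = c^2/e$: using $\ln(c^2/e) = 2\ln c - 1$,
$f(t^\star) = \tfrac{c^2}{e}\ln c - \tfrac{c^2}{2e}(2\ln c - 1) = \tfrac{c^2}{2e}$,
so $\max_{t\in[1,c^2]} g(t) = \exp(f(t^\star)) = e^{c^2/(2e)}$, which is the claimed value.

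I do not expect a genuine obstacle here; the computation is routine. The one point deserving a line of care is verifying that the unconstrained maximizer $c^2/e$ actually falls inside the interval $[1,c^2]$, so that the constrained maximum equals the unconstrained one — this is exactly where one uses that $c$ is large enough ($c^2\ge e$), which holds in every invocation of the claim in this paper; together with concavity of $f$, this also rules out the endpoints being better.
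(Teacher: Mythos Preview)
Your proposal is correct and follows essentially the same approach as the paper: both differentiate (you after taking logarithms, the paper directly), locate the critical point $t^\star=c^2/e$, and invoke a second-derivative/concavity check. Your write-up is in fact a bit more careful than the paper's, since you explicitly verify that $t^\star\in[1,c^2]$ and carry out the evaluation $g(t^\star)=e^{c^2/(2e)}$, both of which the paper leaves to the reader.
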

\begin{proof}
The value of $t$ at which the function $\Big(c/\sqrt{t}\Big)^t$ is the largest, is obtained by differentiating the function with respect to $t$,
$$
\frac{d}{dt} \Big(c/{\sqrt{t}}\Big)^t= (c/\sqrt{t})^t \Big(\ln (c/\sqrt{t})-1/2\Big).
$$
Equating the derivative to zero we obtain the maxima (the second derivative can be checked to be negative) at $t=c^2/e$.
\end{proof}
%\begin{fact}
%\label{fact:logsuminequality}
%(Log sum inequality): Let $a_1,\ldots,a_n,b_1,\ldots,b_n$ be non-negative numbers. If $\sum_i a_i=a$, $\sum_ib_i=b$,~then
%$$
%\sum_{i=1}^n a_i \log \Big( \frac{a_i}{b_i}\Big)\geq a \log \Big( \frac{a}{b}\Big).
%$$
%\end{fact}
\begin{fact}
\label{fact:taylorseriesbinaryentropy}
For all $\eps \in [0,1/2]$ we have $H(\eps)\leq O(\eps \log (1/\eps))$, and (from the Taylor series) 
$$
1-H(1/2+\eps)\leq 2\eps^2/\ln 2+O(\eps^4).
$$
\end{fact}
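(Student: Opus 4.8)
The plan is to handle the two inequalities separately, each by elementary calculus applied to the binary entropy function $H(p)=-p\log p-(1-p)\log(1-p)$; there is no need for any learning-theoretic machinery.

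For the first bound, I would split $H(\eps)$ into its two summands. The term $-\eps\log\eps$ is exactly $\eps\log(1/\eps)$. For the term $-(1-\eps)\log(1-\eps)$, I would invoke the standard inequality $-\ln(1-\eps)\le \eps/(1-\eps)$, so that for $\eps\in[0,1/2]$ (where $1-\eps\ge 1/2$) we get $-(1-\eps)\log(1-\eps)\le \eps/\ln 2\le 2\eps$. For $\eps$ bounded away from $0$, say $\eps\in[1/4,1/2]$, the claimed bound is immediate since $H(\eps)\le 1$ while $\eps\log(1/\eps)\ge 1/4$ on that range; for small $\eps$, adding the two estimates gives $H(\eps)\le \eps\log(1/\eps)+2\eps=O(\eps\log(1/\eps))$. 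Combining the two ranges gives the first claim.

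For the second bound, I would Taylor-expand $g(\eps):=1-H(1/2+\eps)$ about $\eps=0$. Computing derivatives, $H'(p)=\log\frac{1-p}{p}$ and $H''(p)=-\frac{1}{\ln 2}\cdot\frac{1}{p(1-p)}$, so that $H(1/2)=1$, $H'(1/2)=0$, and $H''(1/2)=-4/\ln 2$. Since $H(1/2+\eps)=H(1/2-\eps)$, the function $g$ is even, all odd-order terms vanish, and Taylor's theorem gives $g(\eps)=\tfrac{2}{\ln 2}\eps^2+O(\eps^4)$, which is exactly the stated inequality $1-H(1/2+\eps)\le 2\eps^2/\ln 2+O(\eps^4)$.

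The only point requiring a little care, and the mildest of obstacles, is that the ``$O(\eps^4)$'' must be read as a bound holding uniformly over all of $[0,1/2]$, not merely asymptotically near $\eps=0$. I would verify this by noting that the function $\eps\mapsto\big(g(\eps)-\tfrac{2}{\ln 2}\eps^2\big)/\eps^4$ extends continuously to the closed interval $[0,1/2]$ (using the convention $0\log 0=0$ at the endpoint, where $g(1/2)=1$), hence is bounded there. No conceptual difficulty arises anywhere in the argument.
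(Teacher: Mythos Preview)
The paper states this as a \emph{Fact} without proof, so there is no paper argument to compare against. Your proposal is correct and is exactly the standard verification one would give: the first inequality by splitting $H(\eps)$ into its two terms and bounding each, the second by Taylor-expanding $H$ at $p=1/2$ and using evenness to kill the odd-order terms, with the uniformity of the $O(\eps^4)$ remainder on $[0,1/2]$ following from continuity (the endpoint $\eps=1/2$ causes no trouble since $H$ itself is continuous at $p=1$ under the convention $0\log 0=0$). One small remark: your case split into $[0,1/4]$ and $[1/4,1/2]$ for the first inequality is not actually needed, since $2\eps\le 2\eps\log(1/\eps)$ already holds for all $\eps\in(0,1/2]$; but this redundancy does no harm.
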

\begin{fact}
\label{fact:binomialupperbound}
For every positive integer $n$, we have that ${n\choose k}\leq 2^{nH(k/n)}$ for all $k\leq n$  and $\sum_{i=0}^{m}{n\choose i}\leq 2^{nH(m/n)}$ for all $m\leq n/2$. 
\end{fact}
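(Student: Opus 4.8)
The plan is to derive both inequalities from the binomial theorem evaluated at the point $p=k/n$ (respectively $p=m/n$), using the trivial fact that one term, or a partial sum of the first few terms, of a probability distribution is at most~$1$.

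For the first bound, I would set $p=k/n$ and expand $1=(p+(1-p))^n=\sum_{i=0}^n\binom{n}{i}p^i(1-p)^{n-i}$. Keeping only the $i=k$ term gives $\binom{n}{k}\,p^k(1-p)^{n-k}\le 1$, hence $\binom{n}{k}\le p^{-k}(1-p)^{-(n-k)}$. Taking $\log_2$ of the right-hand side and substituting $p=k/n$ turns it into exactly $nH(k/n)$, with the usual convention $0\log 0=0$ covering the boundary cases $k\in\{0,n\}$.

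For the second bound I would again set $p=m/n$, which is now at most $1/2$ by hypothesis, and keep the first $m+1$ terms: $\sum_{i=0}^m\binom{n}{i}p^i(1-p)^{n-i}\le 1$. The key observation is that for $i\le m$ and $p\le 1/2$ one has $p/(1-p)\le 1$, so $(p/(1-p))^i\ge (p/(1-p))^m$; multiplying through by $(1-p)^n$ shows $p^i(1-p)^{n-i}\ge p^m(1-p)^{n-m}$ for every $i\le m$. Therefore $p^m(1-p)^{n-m}\sum_{i=0}^m\binom{n}{i}\le 1$, and the same logarithm computation as before gives $\sum_{i=0}^m\binom{n}{i}\le p^{-m}(1-p)^{-(n-m)}=2^{nH(m/n)}$.

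There is essentially no obstacle here; this is a standard textbook estimate. The only point needing a little care is the monotonicity step in the second part --- that the terms $p^i(1-p)^{n-i}$ are nondecreasing in $i$ for $i\le m$ precisely because $p\le 1/2$ --- together with the degenerate cases ($k\in\{0,n\}$, or $p\in\{0,1/2\}$) in which $H$ vanishes or a factor in the bound equals~$1$; all of these are immediate.
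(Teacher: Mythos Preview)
Your argument is correct and is exactly the standard proof of this estimate. The paper itself does not prove Fact~\ref{fact:binomialupperbound}; it simply states it as a well-known fact, so there is no ``paper's proof'' to compare against. One minor slip in your closing paragraph: the terms $p^i(1-p)^{n-i}$ are \emph{nonincreasing} in $i$ (not nondecreasing) for $p\le 1/2$, which is precisely what you used in the body of the argument to bound each term below by the $i=m$ term.
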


The following facts are well-known in quantum information theory.
\begin{fact}
\label{fact:2statedistinguishingbound}
Let binary random variable $\mathbf{b}\in\01$ be uniformly distributed. Suppose an algorithm is given $\ket{\psi_{\mathbf{b}}}$ (for unknown $b$) and is required to guess whether $\mathbf{b}=0$ or $\mathbf{b}=1$. It will guess correctly with probability at most $\frac{1}{2}+\frac{1}{2}\sqrt{1-|\ip{\psi_0}{\psi_1}|^2}$.
\end{fact}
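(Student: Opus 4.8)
The plan is to prove this as the special case of the Helstrom bound for distinguishing two pure states with equal priors, giving a short self-contained argument. Any guessing strategy may be taken, without loss of generality, to be a two-outcome POVM $\{M_0,M_1\}$ with $M_0,M_1\succeq 0$ and $M_0+M_1=\Id$, where obtaining outcome $b$ means the algorithm guesses $\mathbf{b}=b$. Writing $\rho_b=\ketbra{\psi_b}{\psi_b}$, the probability of guessing correctly is $\tfrac12\Tr(M_0\rho_0)+\tfrac12\Tr(M_1\rho_1)$. Substituting $M_1=\Id-M_0$ and using $\Tr(\rho_1)=1$, this equals $\tfrac12+\tfrac12\Tr\!\big(M_0(\rho_0-\rho_1)\big)$, so it suffices to upper bound $\Tr(M_0\Delta)$ over all $M_0$ with $0\preceq M_0\preceq\Id$, where $\Delta:=\rho_0-\rho_1$.

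Since $\Delta$ is Hermitian, I would decompose it as $\Delta=\Delta_+-\Delta_-$ into its positive and negative parts (positive semidefinite, with orthogonal supports). Because $\Tr(\Delta)=\Tr(\rho_0)-\Tr(\rho_1)=0$, we have $\Tr(\Delta_+)=\Tr(\Delta_-)=:\lambda$. For any $M_0$ with $0\preceq M_0\preceq\Id$ one gets $\Tr(M_0\Delta)\le\Tr(M_0\Delta_+)\le\Tr(\Delta_+)=\lambda$, so the success probability is at most $\tfrac12+\tfrac12\lambda$, with $\lambda$ the sum of the positive eigenvalues of $\Delta$. (Taking $M_0$ to be the projector onto the positive eigenspace of $\Delta$ in fact makes this tight, but only the inequality is needed.)

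It remains to compute $\lambda$. As $\Delta$ is the difference of two rank-one projectors, its rank is at most $2$, and being traceless its (at most two) nonzero eigenvalues are $\lambda$ and $-\lambda$; hence $\Tr(\Delta^2)=2\lambda^2$. On the other hand, $\Tr(\Delta^2)=\Tr(\rho_0^2)+\Tr(\rho_1^2)-2\Tr(\rho_0\rho_1)=1+1-2|\ip{\psi_0}{\psi_1}|^2$, using that the $\rho_b$ are pure and $\Tr(\rho_0\rho_1)=|\ip{\psi_0}{\psi_1}|^2$. Equating the two expressions gives $\lambda=\sqrt{1-|\ip{\psi_0}{\psi_1}|^2}$, which completes the proof. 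There is no real obstacle here; the only mildly delicate point is the spectral claim that $\Delta$ has exactly eigenvalues $\pm\lambda$ outside a zero kernel, and one can sidestep it entirely by working inside the (at most two-dimensional) span of $\ket{\psi_0}$ and $\ket{\psi_1}$, where $\Delta$ can be written out and diagonalized by hand. Everything else is a routine trace computation.
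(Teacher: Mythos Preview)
Your proof is correct and is the standard derivation of the Helstrom bound specialized to two equiprobable pure states. The paper does not actually prove this statement: it lists it among ``facts [that] are well-known in quantum information theory'' and gives no argument, so there is nothing to compare against beyond noting that what you wrote is exactly the textbook justification one would supply.
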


Note that if we could distinguish between the states $\ket{\psi_0}$ and $\ket{\psi_1}$ with probability $\geq 1-\delta$, then~$|\ip{\psi_0}{\psi_1}|\leq 2\sqrt{\delta(1-\delta)}$.

\begin{fact}
\label{fact:quantumsubadditivity}
(Subadditivity of quantum entropy): For an arbitrary bipartite state $\rho_{AB}$ on the Hilbert space $\Hi_A\otimes \Hi_B$, it holds that $S(\rho_{AB})\leq S(\rho_A)+S(\rho_B)$.
\end{fact}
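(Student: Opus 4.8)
The plan is to deduce subadditivity from the non-negativity of quantum relative entropy (Klein's inequality), applied to the pair $\rho_{AB}$ and $\rho_A\otimes\rho_B$. Recall that for density matrices $\rho,\sigma$ with $\supp(\rho)\subseteq\supp(\sigma)$ the relative entropy is $S(\rho\,\|\,\sigma):=\Tr(\rho\log\rho)-\Tr(\rho\log\sigma)$. I would first establish $S(\rho\,\|\,\sigma)\geq 0$. Diagonalize $\rho=\sum_i p_i\ketbra{i}{i}$ and $\sigma=\sum_j q_j\ketbra{j}{j}$ in their respective eigenbases, and set $P_{ij}=|\ip{i}{j}|^2$, which is a doubly stochastic matrix. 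Then $S(\rho\,\|\,\sigma)=\sum_i p_i\bigl(\log p_i-\sum_j P_{ij}\log q_j\bigr)$. Concavity of $\log$ gives $\sum_j P_{ij}\log q_j\leq\log r_i$ where $r_i:=\sum_j P_{ij}q_j$, and $\sum_i r_i=\sum_j q_j=1$ since $P$ is doubly stochastic, so $r$ is a probability vector. Hence $S(\rho\,\|\,\sigma)\geq\sum_i p_i\log(p_i/r_i)\geq 0$ by the classical Gibbs' inequality.

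Next I would compute $S(\rho_{AB}\,\|\,\rho_A\otimes\rho_B)$ using the tensor-product structure of the logarithm: diagonalizing $\rho_A$ and $\rho_B$ separately shows $\log(\rho_A\otimes\rho_B)=(\log\rho_A)\otimes\Id+\Id\otimes(\log\rho_B)$. Taking the trace against $\rho_{AB}$ and using that the partial traces of $\rho_{AB}$ over $\Hi_B$ and $\Hi_A$ equal $\rho_A$ and $\rho_B$ respectively, one gets $\Tr\bigl(\rho_{AB}\log(\rho_A\otimes\rho_B)\bigr)=\Tr(\rho_A\log\rho_A)+\Tr(\rho_B\log\rho_B)=-S(\rho_A)-S(\rho_B)$. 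Since $\Tr(\rho_{AB}\log\rho_{AB})=-S(\rho_{AB})$, this yields the identity $S(\rho_{AB}\,\|\,\rho_A\otimes\rho_B)=S(\rho_A)+S(\rho_B)-S(\rho_{AB})$. Combining this with the first step, the left-hand side is $\geq 0$, which is exactly the claimed inequality $S(\rho_{AB})\leq S(\rho_A)+S(\rho_B)$.

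The only genuine content is Klein's inequality; everything else is bookkeeping with the tensor structure. One technical point to check is the support condition: if $\rho_A$ or $\rho_B$ is not full rank, then neither is $\rho_A\otimes\rho_B$, but any $\ket{\psi}$ in the support of $\rho_{AB}$ has reduced states supported inside $\supp(\rho_A)$ and $\supp(\rho_B)$, so $\supp(\rho_{AB})\subseteq\supp(\rho_A)\otimes\supp(\rho_B)$ and the relative entropy is finite (alternatively one argues by continuity, perturbing $\rho_A\otimes\rho_B$ slightly toward the maximally mixed state and taking a limit). I expect the proof of Klein's inequality — specifically the concavity-of-$\log$ estimate and the recognition that what remains is the classical Gibbs' inequality — to be the main (though entirely standard) obstacle; the remaining algebra is routine.
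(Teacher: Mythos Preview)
Your argument is correct and is in fact the standard textbook derivation of subadditivity from Klein's inequality. However, there is nothing to compare against: the paper does not supply a proof of this statement. It is listed among the ``well-known'' facts in quantum information theory and is used as a black box (for instance in step~2 of Theorem~\ref{thm:infoclassicalpac}), so your write-up simply fills in a proof the authors chose to omit.
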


%\begin{fact}
%\label{fact:fanoinequality}
%(Fano's inequality): Let $\mathbf{A}, \mathbf{B}$ be random variables with domain $\A,\B$ respectively. Let $f : \B \rightarrow  \A$ be a prediction function. Based on an observation of $\mathbf{B}$, the prediction function tries to predict the random variable $\mathbf{A}$. Define the probability of error $\eps= \Pr(f(\mathbf{B}) \neq \mathbf{A})$. Then we have that,
%$$
%H(\mathbf{A}|\mathbf{B})\leq H(\eps)+\eps \log (|\A|-1).
%$$
%\end{fact}
%\begin{fact}
%\label{fact:dataprocessing}
%(Data Processing inequality): For any function $f:\B\rightarrow \B$, we have $H(\mathbf{A}|\mathbf{B})\leq H(\mathbf{A}|f(\mathbf{B}))$.
%\end{fact}

\section{Information-theoretic lower bounds}
\label{section:infotheorylowerbounds}

Upper bounds on sample complexity carry over from classical to quantum PAC learning, because a quantum example becomes a classical example if we just measure it. Our main goal is to show that the \emph{lower} bounds also carry over. All our lower bounds will involve two terms, one that is independent of  $\C$ and one that is dependent on the VC dimension of $\C$. In Section~\ref{section:dindependentpart} we prove the VC-independent part of the lower bounds for the \emph{quantum} setting (which also is a lower bound for the classical setting), in Section~\ref{section:infortheoreticlowerboundsclassical} we present an information-theoretic lower bound on sample complexity for PAC learning and agnostic learning which yields optimal VC-dependent bounds in the classical case. Using similar ideas, in Section~\ref{section:infotheoreticlowerboundsquantum} we obtain near-optimal bounds in the quantum case.

\subsection{VC-independent part of lower bounds}
\label{section:dindependentpart}
\begin{lemma} [\cite{atici&servedio:qlearning}]
\label{lemma:deltapartoflemmapac}
Let $\C$ be a non-trivial concept class.
For every $\delta\in (0,1/2)$, $\eps\in (0,1/4)$, a $(\eps,\delta)$-PAC quantum learner for~$\C$ has sample complexity $\Omega(\frac{1}{\eps}\log \frac{1}{\delta})$.
\end{lemma}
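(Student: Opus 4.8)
The plan is to reduce the learning problem, for a carefully chosen distribution supported on just two points, to the problem of distinguishing two pure quantum states, and then apply Fact~\ref{fact:2statedistinguishingbound}; this is the approach of At\i c\i\ and Servedio. First I would unpack the hypothesis that $\C$ is non-trivial: it contains at least two distinct concepts $c_1\neq c_2$, and since $\C$ contains no complementary pair, $c_1$ and $c_2$ must agree on some input $x_1$ and (being distinct) disagree on some input $x_2$. Fix such $x_1,x_2\in\01^n$, and let $D$ be the distribution with $D(x_1)=1-2\eps$ and $D(x_2)=2\eps$ (valid since $\eps<1/4$). Restrict attention to target concepts $c$ in $\{c_1,c_2\}$, chosen uniformly at random. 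The quantum example generated by $\QPEX(c,D)$ is then $\ket{\psi_c}=\sqrt{1-2\eps}\,\ket{x_1,c_1(x_1)}+\sqrt{2\eps}\,\ket{x_2,c(x_2)}$, where I used $c_1(x_1)=c_2(x_1)$; since the two possible $\ket{x_2,c(x_2)}$ branches are orthogonal, $\ip{\psi_{c_1}}{\psi_{c_2}}=1-2\eps$, and hence $|\ip{\psi_{c_1}^{\otimes T}}{\psi_{c_2}^{\otimes T}}|=(1-2\eps)^T$ when the learner gets $T$ copies.

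Next I would argue that solving the learning task solves the state-distinguishing task. Any hypothesis $h$ with $\err_D(h,c)\le\eps$ must satisfy $h(x_2)=c(x_2)$, because $D$ places weight $2\eps>\eps$ on $x_2$; and since $c_1(x_2)\neq c_2(x_2)$, the single bit $h(x_2)$ determines whether $c=c_1$ or $c=c_2$. So a $T$-example $(\eps,\delta)$-PAC quantum learner, fed $T$ copies of $\ket{\psi_{c}}$ for uniformly random $c\in\{c_1,c_2\}$, produces (by reading off $h(x_2)$ from its output) a correct guess of $c$ with probability at least $1-\delta$. By Fact~\ref{fact:2statedistinguishingbound} and the remark following it, applied to the pure states $\ket{\psi_{c_1}}^{\otimes T}$ and $\ket{\psi_{c_2}}^{\otimes T}$, this forces
$$(1-2\eps)^T \;=\; \big|\ip{\psi_{c_1}^{\otimes T}}{\psi_{c_2}^{\otimes T}}\big| \;\le\; 2\sqrt{\delta(1-\delta)}.$$
Taking natural logarithms and using $\ln\frac{1}{1-2\eps}\le\frac{2\eps}{1-2\eps}\le 4\eps$ for $\eps\in(0,1/4)$ yields $T\ge \frac{1}{4\eps}\ln\frac{1}{2\sqrt{\delta(1-\delta)}}=\Omega\big(\frac1\eps\log\frac1\delta\big)$, as claimed.

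I do not expect a substantial obstacle; the argument is short, and the only real content is the reduction itself. The two points requiring a little care are: (i) justifying that the required two-concept, two-point configuration genuinely follows from ``non-trivial'' --- a single concept cannot yield a distribution whose support sees two differently-labeled points, and two complementary concepts would force orthogonal example states with no exploitable overlap, so ruling out exactly these degeneracies is what non-triviality buys us; and (ii) the routine logarithmic manipulation in the last step, where one should note that for $\delta$ bounded away from $1/2$ the factor $2\sqrt{\delta(1-\delta)}$ is bounded away from $1$, so the estimate reads $T=\Omega(1/\eps)$ in that regime and strengthens to the full $\Omega(\frac1\eps\log\frac1\delta)$ as $\delta\to 0$. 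The hypothesis $\eps<1/4$ is used only to make $D$ a legitimate distribution and to guarantee $2\eps>\eps$ (so that any valid hypothesis is pinned down on $x_2$); in fact $\eps<1/2$ would already suffice.
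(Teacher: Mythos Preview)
Your proof is correct and follows essentially the same approach as the paper's: pick a two-point distribution concentrated on a point where two concepts agree, compute the inner product of the resulting $T$-fold example states, and invoke Fact~\ref{fact:2statedistinguishingbound}. The paper uses weight $\eps$ on $x_2$ (so $\ip{\psi_1}{\psi_2}=(1-\eps)^T$) and is terser about the distinguishing step; your choice of weight $2\eps$ on $x_2$ makes the reduction to state discrimination cleaner, since it forces any $\eps$-good hypothesis to be correct on $x_2$, and your unpacking of ``non-trivial'' and the final logarithmic estimate are spelled out in more detail than in the paper.
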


\begin{proof}
Since $\C$ is non-trivial, we may assume there are two concepts $c_1,c_2 \in \C$ defined on two inputs $\{x_1,x_2\}$ as follows $c_1(x_1)=c_2(x_1)=0$ and $c_1(x_2)=0, c_2(x_2)=1$. Consider the distribution $D(x_1)=1-\eps$ and $D(x_2)=\eps$. For $i\in \{1,2\}$, the state of the algorithm after $T$ queries to $\QPEX(c_i,D)$ is $\ket{\psi_i}=(\sqrt{1-\eps}\ket{x_1,0}+\sqrt{\eps}\ket{x_2,c_i(x_2)})^{\otimes T}$. It follows that $\ip{\psi_1}{\psi_2}=(1-\eps)^{T}$. Since the success probability of an $(\eps,\delta)$-PAC quantum learner is $\geq 1-\delta$, Fact~\ref{fact:2statedistinguishingbound} implies $\ip{\psi_1}{\psi_2}\leq 2\sqrt{\delta(1-\delta)}$. Hence $T=\Omega(\frac{1}{\eps}\log \frac{1}{\delta})$.
\end{proof}

\begin{lemma}
\label{lemma:deltapartoflemmaagnostic}
Let $\C$ be a non-trivial concept class.
For every $\delta\in (0,1/2)$, $\eps\in (0,1/4)$, a $(\eps,\delta)$-agnostic quantum learner for~$\C$ has sample complexity $\Omega(\frac{1}{\eps^2}\log \frac{1}{\delta})$.
\end{lemma}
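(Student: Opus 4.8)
The plan is to mirror the proof of Lemma~\ref{lemma:deltapartoflemmapac}, but with a distribution family that forces a quadratically worse dependence on~$\eps$, reflecting the $1/\eps^2$ in the agnostic bound as opposed to $1/\eps$ in the PAC bound. As in that proof, non-triviality gives us two concepts $c_1,c_2\in\C$ and two inputs $\{x_1,x_2\}$ with $c_1(x_1)=c_2(x_1)=0$, $c_1(x_2)=0$, $c_2(x_2)=1$; on all other inputs we will give $D$ zero weight, so only the behaviour on $x_1,x_2$ matters. I would put all the distinguishing weight on $x_2$: consider the two agnostic distributions $D_0,D_1$ on $\01^{n+1}$ supported on $(x_2,0)$ and $(x_2,1)$, where $D_j(x_2,0)=\tfrac12-(-1)^j\gamma$ and $D_j(x_2,1)=\tfrac12+(-1)^j\gamma$ for a small parameter $\gamma=\Theta(\eps)$ to be chosen. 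Under $D_0$ the better of $c_1,c_2$ on $\{x_1,x_2\}$ is $c_1$ (it predicts label~$0$, which is the majority label), while under $D_1$ the better one is $c_2$; concretely $\err_{D_0}(c_1)=\tfrac12-\gamma$, $\err_{D_0}(c_2)=\tfrac12+\gamma$, and symmetrically for $D_1$. So $\opt_{D_j}(\C)\le \tfrac12-\gamma$, and the "wrong" hypothesis has error $\opt+2\gamma$. Choosing $\gamma$ so that $2\gamma>\eps$ (e.g.\ $\gamma = \eps$), an $(\eps,\delta)$-agnostic learner must, with probability $\ge 1-\delta$, output a hypothesis agreeing with the correct one of $c_1,c_2$ on $\{x_1,x_2\}$ — hence it effectively identifies the bit $j$ — so it distinguishes $D_0$ from $D_1$ with success probability $\ge 1-\delta$.

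The next step is to compute the inner product between the two $T$-copy quantum states. A single quantum example from $\QAEX(D_j)$ is $\ket{\phi_j}=\sqrt{\tfrac12-(-1)^j\gamma}\,\ket{x_2,0}+\sqrt{\tfrac12+(-1)^j\gamma}\,\ket{x_2,1}$, so
\[
\ip{\phi_0}{\phi_1}=\sqrt{\Big(\tfrac12-\gamma\Big)\Big(\tfrac12+\gamma\Big)}+\sqrt{\Big(\tfrac12+\gamma\Big)\Big(\tfrac12-\gamma\Big)}=2\sqrt{\tfrac14-\gamma^2}=\sqrt{1-4\gamma^2}.
\]
After $T$ queries the global states are $\ket{\psi_j}=\ket{\phi_j}^{\otimes T}$, so $\ip{\psi_0}{\psi_1}=(1-4\gamma^2)^{T/2}$. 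By Fact~\ref{fact:2statedistinguishingbound} (and the remark following it), distinguishing with success probability $\ge 1-\delta$ forces $|\ip{\psi_0}{\psi_1}|\le 2\sqrt{\delta(1-\delta)}$, i.e.\ $(1-4\gamma^2)^{T/2}\le 2\sqrt{\delta(1-\delta)}\le 2\sqrt{\delta}$. Taking logarithms and using $-\ln(1-4\gamma^2)\le 8\gamma^2$ for small $\gamma$, this gives $T \ge \frac{\ln(1/(4\delta))}{-\ln(1-4\gamma^2)}=\Omega\!\big(\frac{1}{\gamma^2}\log\frac{1}{\delta}\big)=\Omega\!\big(\frac{1}{\eps^2}\log\frac{1}{\delta}\big)$, as claimed.

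The one point that needs a little care — and is the only real obstacle — is making the reduction from "agnostic learner" to "state distinguisher" airtight: we must check that an $h$ with $\err_{D_j}(h)\le \opt_{D_j}(\C)+\eps$ really does pin down $j$. Since $D_j$ is supported entirely on $x=x_2$, $\err_{D_j}(h)$ depends only on $h(x_2)\in\01$, taking value $\tfrac12+\gamma$ or $\tfrac12-\gamma$; with $\opt_{D_j}(\C)=\tfrac12-\gamma$ the threshold $\opt+\eps$ lies strictly between these two values precisely when $2\gamma>\eps$, so $h(x_2)$ is forced to equal the majority label of $D_j$, which is different for $j=0$ and $j=1$. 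Thus reading off $h(x_2)$ from the learner's output yields $j$ with probability $\ge 1-\delta$, and also $\eps\in(0,1/4)$ guarantees $\gamma=\eps<1/4$ so that the distribution and all the inequalities above are well-defined (in particular $1-4\gamma^2>0$). We should also note that, as in Lemma~\ref{lemma:deltapartoflemmapac}, the learner is allowed arbitrary intermediate processing; the only fact used is that its final output depends on nothing but the $T$-copy state $\ket{\psi_j}$, so Fact~\ref{fact:2statedistinguishingbound} applies verbatim.
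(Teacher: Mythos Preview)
Your argument is correct and essentially the same as the paper's: both place all mass on a single input with label biased by $\pm\Theta(\eps)$, compute the single-copy inner product as $\sqrt{1-\Theta(\eps^2)}$, and invoke Fact~\ref{fact:2statedistinguishingbound} on the $T$-fold tensor power. One cosmetic slip: with your convention $D_0(x_2,1)=\tfrac12+\gamma$, the majority label under $D_0$ is~$1$, so $c_2$ is optimal there (not $c_1$); swapping the roles of $j=0,1$ fixes this without touching the rest.
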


\begin{proof}
Since $\C$ is non-trivial, we may assume there are two concepts $c_1,c_2 \in \C$ and there exists an input $x\in \01^n$ such that $c_1(x)\neq c_2(x)$. Consider the two distributions $D_-$ and $D_+$ defined as follows: $D_{\pm}(x,c_1(x))=(1\pm \eps)/2$ and $D_{\pm}(x,c_2(x))=(1\mp \eps)/2$. Let $\ket{\psi_{\pm}}$ be the state after $T$ queries to $\QAEX(D_{\pm})$, i.e., $\ket{\psi_{\pm}}=(\sqrt{(1\pm \eps)/2}\ket{x,c_1(x)}+\sqrt{(1\mp \eps)/2}\ket{x,c_2(x)})^{\otimes T}$. It follows that $\ip{\psi_+}{\psi_-}=(1-\eps^2)^{T/2}$. 
Since the success probability of an $(\eps,\delta)$-agnostic quantum learner is $\geq 1-\delta$, Fact~\ref{fact:2statedistinguishingbound} implies $\ip{\psi_+}{\psi_-}\leq 2\sqrt{\delta(1-\delta)}$. Hence~$T=\Omega(\frac{1}{\eps^2}\log \frac{1}{\delta})$ 
\end{proof}

\subsection{Information-theoretic lower bounds on sample complexity: classical case}
\label{section:infortheoreticlowerboundsclassical}

\subsubsection{Optimal lower bound for  classical PAC learning}

\begin{theorem}\label{thm:infoclassicalpac}
Let $\C$ be a concept class with VC-dim$(\C)=d+1$. Then for every $\delta\in (0,1/2)$ and $\eps \in (0,1/4)$, every $(\eps,\delta)$-PAC learner for~$\C$ has sample complexity 
$\Omega\Big(\frac{d}{\eps} + \frac{\log(1/\delta)}{\eps}\Big)$.
\end{theorem}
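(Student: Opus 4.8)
The bound is a sum of two terms, so by $\max(a,b)\geq(a+b)/2$ it suffices to prove the two lower bounds $\Omega(\log(1/\delta)/\eps)$ and $\Omega(d/\eps)$ separately. The first is immediate from Lemma~\ref{lemma:deltapartoflemmapac}: a classical PAC learner is the special case of a quantum PAC learner that measures each example in the computational basis, so the quantum lower bound applies verbatim (and for $\delta<1/2$ the quantity $\log(1/\delta)$ is $\Omega(1)$, which we will also use to dispose of small~$d$). So the work is in the $\Omega(d/\eps)$ term, and I would carry it out by instantiating the three-step information-theoretic recipe from the introduction.

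Fix a set $\Sh=\{s_0,s_1,\ldots,s_d\}$ shattered by $\C$, and (using $\eps<1/4$) let $D$ be the distribution with $D(s_0)=1-4\eps$ and $D(s_i)=4\eps/d$ for $i\in[d]$. Draw $\mathbf{A}$ uniformly from $\01^d$ and let $c_{\mathbf{A}}\in\C$ be any concept with $c_{\mathbf{A}}(s_0)=0$ and $c_{\mathbf{A}}(s_i)=\mathbf{A}_i$ for $i\in[d]$; such a concept exists because $\Sh$ is shattered. Run a $T$-sample $(\eps,\delta)$-PAC learner on target $c_{\mathbf{A}}$ and distribution $D$, let $\mathbf{B}=(\mathbf{b}_1,\ldots,\mathbf{b}_T)$ be the received examples and $\mathbf{H}$ the output hypothesis; since $\mathbf{H}$ is a function of $\mathbf{B}$ and of internal randomness independent of $\mathbf{A}$, the chain $\mathbf{A}\to\mathbf{B}\to\mathbf{H}$ is Markov. \textbf{Step 1:} if $\err_D(\mathbf{H},c_{\mathbf{A}})\leq\eps$ then $\mathbf{H}$ disagrees with $c_{\mathbf{A}}$ on at most $d/4$ of $s_1,\ldots,s_d$, so $(\mathbf{H}(s_1),\ldots,\mathbf{H}(s_d))$ lies in a Hamming ball of radius $d/4$ around $\mathbf{A}$; call this event $\mathbf{G}$, with $\Pr[\mathbf{G}]\geq 1-\delta$. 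Bounding $H(\mathbf{A}\mid\mathbf{H})$ by conditioning on $1_{\mathbf{G}}$ and estimating the ball size by $2^{dH(1/4)}$ via Fact~\ref{fact:binomialupperbound} gives $I(\mathbf{A}:\mathbf{H})\geq (1-\delta)(1-H(1/4))\,d-H(\delta)\geq c_0 d-1$ for an absolute constant $c_0>0$ (here $\delta<1/2$ is what keeps this positive), and by data processing $I(\mathbf{A}:\mathbf{B})\geq I(\mathbf{A}:\mathbf{H})\geq c_0 d-1$. \textbf{Step 2:} the $\mathbf{b}_j$ are i.i.d.\ conditioned on $\mathbf{A}$, so $H(\mathbf{B}\mid\mathbf{A})=\sum_j H(\mathbf{b}_j\mid\mathbf{A})$ and subadditivity of entropy give $I(\mathbf{A}:\mathbf{B})\leq T\cdot I(\mathbf{A}:\mathbf{b}_1)$. \textbf{Step 3:} writing $\mathbf{b}_1=(\mathbf{x},c_{\mathbf{A}}(\mathbf{x}))$ with $\mathbf{x}\sim D$ independent of $\mathbf{A}$, the chain rule gives $I(\mathbf{A}:\mathbf{b}_1)=I(\mathbf{A}:c_{\mathbf{A}}(\mathbf{x})\mid\mathbf{x})=\sum_{i=1}^d \tfrac{4\eps}{d}\,H(\mathbf{A}_i)=4\eps$, since conditioned on $\mathbf{x}=s_0$ the label is the constant $0$ and conditioned on $\mathbf{x}=s_i$ it equals $\mathbf{A}_i$.

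Chaining the three steps yields $4\eps\,T\geq c_0 d-1$, hence $T=\Omega(d/\eps)$ as soon as $d$ exceeds an absolute constant; for $d$ below that constant, $d/\eps=O(1/\eps)$ and the bound is already supplied by Lemma~\ref{lemma:deltapartoflemmapac} since $\log(1/\delta)>1$. Combining with the first term completes the proof.

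The step I expect to need the most care is Step~1 in the regime $\delta$ close to $1/2$: one has to verify that the Fano-type estimate still produces a strictly positive $\Omega(d)$ lower bound on $I(\mathbf{A}:\mathbf{H})$ there, which is exactly why the "uninformative" points $s_1,\ldots,s_d$ must collectively carry $\Theta(\eps)$ weight with a large enough constant (here the factor $4$, forced by $\eps<1/4$) — large enough that $1-H(1/4)$ beats the $\delta$-loss. Everything else is routine: the Markov/data-processing step, the conditional independence of the examples (which holds regardless of the learner's internal randomness, so that randomness can be harmlessly absorbed into $\mathbf{H}$), and the exact single-example computation in Step~3.
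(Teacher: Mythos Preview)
Your proposal is correct and follows essentially the same approach as the paper: the identical hard distribution on a shattered set, the same three-step information-theoretic argument ($I(\mathbf{A}:\mathbf{B})=\Omega(d)$ via a Fano-type count, subadditivity to get $I(\mathbf{A}:\mathbf{B})\leq T\cdot I(\mathbf{A}:\mathbf{B}_1)$, and the exact computation $I(\mathbf{A}:\mathbf{B}_1)=4\eps$), with the $\log(1/\delta)/\eps$ term handled by Lemma~\ref{lemma:deltapartoflemmapac}. The only cosmetic difference is that you route Step~1 through the hypothesis $\mathbf{H}$ and the data-processing inequality, whereas the paper bounds $H(\mathbf{A}\mid\mathbf{B})$ directly via $h(\mathbf{B})$; these are equivalent.
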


\begin{proof}
Consider an $(\eps,\delta)$-PAC learner for~$\C$ that uses $T$ examples.
The $d$-independent part of the lower bound, $T=\Omega(\log(1/\delta)/\eps)$, even holds for quantum examples and was proven in Lemma~\ref{lemma:deltapartoflemmapac}.
Hence it remains to prove $T=\Omega(d/\eps)$.
It suffices to show this for a specific distribution~$D$, defined as follows. Let $\Sh=\{s_0,s_1,\ldots,s_d\} \subseteq \01^n$ be some $(d+1)$-element set shattered by $\C$. Define $D(s_0)=1-4\eps$ and $D(s_i)=4\eps/d$ for all $i\in [d]$. 

Because $\Sh$ is shattered by $\C$, for each string $a\in\01^d$, there exists a concept $c_a\in\C$ such that $c_a(s_0)=0$ and $c_a(s_i)=a_i$ for all $i\in[d]$. We define two correlated random variables $\mathbf{A}$ and $\mathbf{B}$ corresponding to the concept and to the examples, respectively. Let $\mathbf{A}$ be a random variable that is uniformly distributed over $\01^d$; if $\mathbf{A}=a$, let $\mathbf{B}=\mathbf{B}_1\ldots\mathbf{B}_T$ be $T$ i.i.d.\ examples from $c_a$ according to $D$. We give the following three-step analysis of these random variables:
\begin{enumerate}
\item $I(\mathbf{A}:\mathbf{B})\geq (1-\delta)(1-H(1/4))d - H(\delta)=\Omega(d)$.\\[1mm]
\emph{Proof.} Let random variable $h(\mathbf{B})\in\01^d$ be the hypothesis that the learner produces (given the examples in $\mathbf{B}$) restricted to the elements $s_1,\ldots,s_d$. Note that the error of the hypothesis $\err_D(h(\mathbf{B}),c_{\mathbf{A}})$ equals $ d_H(\mathbf{A},h(\mathbf{B}))\cdot 4\eps/d$, because each $s_i$ where $\mathbf{A}$ and $h(\mathbf{B})$ differ contributes $D(s_i)=4\eps/d$ to the error. Let $\mathbf{Z}$ be the indicator random variable for the event that the error is $\leq\eps$. If $\mathbf{Z}=1$, then $d_H(\mathbf{A},h(\mathbf{B}))\leq d/4$. Since we are analyzing an $(\eps,\delta)$-PAC learner, we have $\Pr[\mathbf{Z}=1]\geq 1-\delta$, and $H(\mathbf{Z})\leq H(\delta)$.  Given a string $h(\mathbf{B})$ that is $d/4$-close to $\mathbf{A}$, $\mathbf{A}$ ranges over a set of only $\sum_{i=0}^{d/4}{d\choose i}\leq 2^{H(1/4)d}$ possible $d$-bit strings (using Fact~\ref{fact:binomialupperbound}), hence $H(\mathbf{A}\mid\mathbf{B},\mathbf{Z}=1)\leq H(\mathbf{A}\mid h(\mathbf{B}),\mathbf{Z}=1)\leq H(1/4)d$.
We now lower bound $I(\mathbf{A}:\mathbf{B})$ as follows:
\begin{align*}
I(\mathbf{A}:\mathbf{B}) & = H(\mathbf{A})-H(\mathbf{A}\mid\mathbf{B})\\
       &  \geq H(\mathbf{A})-H(\mathbf{A}\mid \mathbf{B},\mathbf{Z})-H(\mathbf{Z})\\ 
       & = H(\mathbf{A})-\Pr[\mathbf{Z}=1]\cdot H(\mathbf{A}\mid \mathbf{B},\mathbf{Z}=1)-\Pr[\mathbf{Z}=0]\cdot H(\mathbf{A}\mid\mathbf{B},\mathbf{Z}=0) - H(\mathbf{Z})\\
       & \geq d-(1-\delta)H(1/4)d - \delta d - H(\delta)\\
       & = (1-\delta)(1-H(1/4))d - H(\delta).
\end{align*}
\item $I(\mathbf{A}:\mathbf{B})\leq T\cdot I(\mathbf{A}:\mathbf{B}_1)$.\\[1mm]
\emph{Proof.} This inequality is essentially due to Jain and Zhang~\cite[Lemma~5]{jain&zhang:newoneway}, we include the proof for completeness.
\begin{align*}
I(\mathbf{A}:\mathbf{B}) = H(\mathbf{B})- H(\mathbf{B}\mid \mathbf{A}) 
       &= H(\mathbf{B})- \sum_{i=1}^T H(\mathbf{B}_i\mid \mathbf{A})\\
       &\leq \sum_{i=1}^T H(\mathbf{B}_i) - \sum_{i=1}^T H(\mathbf{B}_i\mid \mathbf{A})
       = \sum_{i=1}^T I(\mathbf{A}:\mathbf{B}_i),
\end{align*}
where the second equality used independence of the $\mathbf{B}_i$'s conditioned on $\mathbf{A}$, and the inequality uses Fact~\ref{fact:quantumsubadditivity}.
Since $I(\mathbf{A}:\mathbf{B}_i)=I(\mathbf{A}:\mathbf{B}_1)$ for all~$i$, we get the inequality.
\item $I(\mathbf{A}:\mathbf{B}_1)=4\eps$.\\[1mm]
\emph{Proof.} View $\mathbf{B}_1=(\mathbf{I},\mathbf{L})$ as consisting of an index $\mathbf{I}\in\{0,1,\ldots,d\}$ and a corresponding label $\mathbf{L}\in\01$. With probability $1-4\eps$, $(\mathbf{I},\mathbf{L})=(0,0)$. For each $i\in[d]$, with probability $4\eps/d$, $(\mathbf{I},\mathbf{L})=(i,\mathbf{A}_i)$. Note that $I(\mathbf{A}:\mathbf{I})=0$ because $\mathbf{I}$ is independent of~$\mathbf{A}$; $I(\mathbf{A}:\mathbf{L}\mid\mathbf{I}=0)=0$; and $I(\mathbf{A}:\mathbf{L}\mid \mathbf{I}=i)=I(\mathbf{A}_i:\mathbf{L}\mid\mathbf{I}=i)=H(\mathbf{A}_i\mid\mathbf{I}=i)-H(\mathbf{A}_i\mid \mathbf{L},\mathbf{I}=i)=1-0=1$ for all $i\in[d]$. We~have 
$$
I(\mathbf{A}:\mathbf{B}_1)=I(\mathbf{A}:\mathbf{I})+I(\mathbf{A}:\mathbf{L}\mid\mathbf{I})=\sum_{i=1}^d \Pr[\mathbf{I}=i]\cdot I(\mathbf{A}:\mathbf{L}\mid\mathbf{I}=i)=4\eps.
$$
\end{enumerate}
Combining these three steps implies $T=\Omega(d/\eps)$.
\end{proof}

\subsubsection{Optimal lower bound for classical agnostic learning}

\begin{theorem}\label{thm:infoclassicalagn}
Let $\C$ be a concept class with VC-dim$(\C)=d$. Then for every $\delta\in (0,1/2)$ and $\eps \in (0,1/4)$, every $(\eps,\delta)$-agnostic learner for~$\C$ has sample complexity 
$\Omega\Big(\frac{d}{\eps^2} + \frac{\log(1/\delta)}{\eps^2}\Big)$.
\end{theorem}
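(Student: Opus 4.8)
The plan is to run the same three-step information-theoretic argument as in the proof of Theorem~\ref{thm:infoclassicalpac}, with the single hard distribution replaced by a family $\{D_a\}_{a\in\01^d}$ for which getting $\eps$-close to $\opt_{D_a}(\C)$ forces the learner to recover a constant fraction of the bits of~$a$. The $d$-independent term $\Omega(\log(1/\delta)/\eps^2)$ is already handled by Lemma~\ref{lemma:deltapartoflemmaagnostic} (it holds even against quantum examples), so it remains to establish $T=\Omega(d/\eps^2)$ for a learner using $T$ examples.

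First I would fix a $d$-element set $\Sh=\{s_1,\ldots,s_d\}\subseteq\01^n$ shattered by $\C$, set the bias parameter $\gamma=4\eps$ (so $\gamma\in(0,1)$ since $\eps<1/4$), and for each $a\in\01^d$ define $D_a$ on $\01^{n+1}$ by giving mass $1/d$ to each $s_i$ and, conditioned on $s_i$, label $a_i$ with probability $\frac{1+\gamma}{2}$ and label $1-a_i$ with probability $\frac{1-\gamma}{2}$. Since $\Sh$ is shattered, the concept $c_a\in\C$ with $c_a(s_i)=a_i$ exists, and a one-line computation gives $\err_{D_a}(h)=\frac{1-\gamma}{2}+\frac{\gamma}{d}\,d_H(h|_{\Sh},a)$ for every hypothesis $h$; hence $\opt_{D_a}(\C)=\frac{1-\gamma}{2}$, attained by $c_a$, and any $h$ with $\err_{D_a}(h)\le\opt_{D_a}(\C)+\eps$ must satisfy $d_H(h|_{\Sh},a)\le\eps d/\gamma=d/4$. (Restricting $h$ to lie in $\C$ only makes the learner's task harder, so the bound will apply to proper agnostic learners and a fortiori to improper ones.)

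Next I would introduce the correlated random variables $\mathbf{A}$, uniform over $\01^d$, and $\mathbf{B}=\mathbf{B}_1\cdots\mathbf{B}_T$, the $T$ i.i.d.\ examples drawn from $D_{\mathbf{A}}$, and bound $I(\mathbf{A}:\mathbf{B})$ from both sides. Step~1: because with probability at least $1-\delta$ the learner's hypothesis is within Hamming distance $d/4$ of $\mathbf{A}$ on $\Sh$, the same counting estimate (Fact~\ref{fact:binomialupperbound}) gives $I(\mathbf{A}:\mathbf{B})\ge(1-\delta)(1-H(1/4))d-H(\delta)=\Omega(d)$, verbatim as in Theorem~\ref{thm:infoclassicalpac}. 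Step~2: $I(\mathbf{A}:\mathbf{B})\le T\cdot I(\mathbf{A}:\mathbf{B}_1)$ by the same subadditivity argument. Step~3: writing $\mathbf{B}_1=(\mathbf{I},\mathbf{L})$ with $\mathbf{I}$ uniform on $[d]$ and independent of~$\mathbf{A}$, one gets $I(\mathbf{A}:\mathbf{B}_1)=\frac1d\sum_{i=1}^d I(\mathbf{A}_i:\mathbf{L}\mid\mathbf{I}=i)=1-H\!\left(\frac{1+\gamma}{2}\right)$, which by the Taylor bound in Fact~\ref{fact:taylorseriesbinaryentropy} is $O(\gamma^2)=O(\eps^2)$. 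Combining the three steps gives $\Omega(d)\le T\cdot O(\eps^2)$, i.e., $T=\Omega(d/\eps^2)$.

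The crux --- and the only place the argument genuinely differs from the PAC case --- is the interplay between Steps~1 and~3: the bias $\gamma$ has to be a large enough constant multiple of $\eps$ that $d/4$-closeness is forced in Step~1, while the per-example information must still come out as $O(\eps^2)$ rather than $O(\eps)$. This works precisely because $1-H(1/2+\gamma/2)$ is quadratic in $\gamma$, which is exactly the content of Fact~\ref{fact:taylorseriesbinaryentropy}. Finally, if one prefers the formulation in which only the \emph{expected} excess error of the hypothesis is required to be at most~$\eps$, the same family $\{D_a\}$ works after a routine averaging over $\mathbf{A}$ and an application of Markov's inequality, yielding the same $\Omega(d/\eps^2)$ bound.
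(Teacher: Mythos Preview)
Your proposal is correct and follows essentially the same route as the paper: the same family $D_a(i,\ell)=(1+(-1)^{a_i+\ell}\cdot 4\eps)/2d$ on the shattered set, the same three-step mutual-information argument, and the same Step~3 computation $I(\mathbf{A}:\mathbf{B}_1)=1-H(1/2+2\eps)=O(\eps^2)$ via Fact~\ref{fact:taylorseriesbinaryentropy}. Your explicit derivation of $\err_{D_a}(h)=\tfrac{1-\gamma}{2}+\tfrac{\gamma}{d}d_H(h|_{\Sh},a)$ and the resulting $d/4$-closeness condition is exactly what the paper uses implicitly when it says ``concept $c_{\tilde a}$ has additional error $d_H(a,\tilde a)\cdot 4\eps/d$.''
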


\begin{proof}
The $d$-independent part of the lower bound, $T=\Omega(\log(1/\delta)/\eps^2)$, even holds for quantum examples and was proven in Lemma~\ref{lemma:deltapartoflemmaagnostic}. For the other part, the proof is similar to Theorem~\ref{thm:infoclassicalpac}, as follows.
Assume an $(\eps,\delta)$-agnostic learner for $\C$ that uses $T$ examples.
We need to prove $T=\Omega(d/\eps^2)$.
For shattered set $\Sh=\{s_1,\ldots,s_d\} \subseteq \01^n$ and $a\in\01^d$, define distribution $D_a$ on $[d]\times\01$ by $D_a(i,\ell)=(1+(-1)^{a_i+\ell}4\eps)/2d$. 

Again let random variable $\mathbf{A}\in\01^d$ be a uniformly distributed random variable, corresponding to the values of concept $c_a$ on $\Sh$, and $\mathbf{B}=\mathbf{B}_1\ldots\mathbf{B}_T$ be $T$ i.i.d.\ samples from~$D_a$. Note that $c_a$ is the minimal-error concept from $\C$ w.r.t.\ $D_a$, and concept $c_{\tilde{a}}$ has additional error $d_H(a,\tilde{a})\cdot 4\eps/d$.
Accordingly, an $(\eps,\delta)$-agnostic learner has to produce (from $\mathbf{B}$) an $h(\mathbf{B})\in\01^d$, which, with probability at least $1-\delta$, is $d/4$-close to~$\mathbf{A}$.
Our three-step analysis is very similar to Theorem~\ref{thm:infoclassicalpac};
only the third step changes:
\begin{enumerate}
\item $I(\mathbf{A}:\mathbf{B})\geq (1-\delta)(1-H(1/4))d - H(\delta)=\Omega(d)$.
\item $I(\mathbf{A}:\mathbf{B})\leq T\cdot I(\mathbf{A}:\mathbf{B}_1)$.
\item $I(\mathbf{A}:\mathbf{B}_1)=1-H(1/2+2\eps)=O(\eps^2)$.\\[1mm]
\emph{Proof.} View the $D_a$-distributed random variable $\mathbf{B}_1=(\mathbf{I},\mathbf{L})$ as index $\mathbf{I}\in[d]$ and label $\mathbf{L}\in\01$. The marginal distribution of $\mathbf{I}$ is uniform; conditioned on $\mathbf{I}=i$, the bit $\mathbf{L}$ equals $\mathbf{A}_i$ with probability $1/2+2\eps$. Hence 
$$
I(\mathbf{A}:\mathbf{L}\mid \mathbf{I}=i)=I(\mathbf{A}_i:\mathbf{L}\mid \mathbf{I}=i)
=H(\mathbf{A}_i\mid \mathbf{I}=i)-H(\mathbf{A}_i\mid\mathbf{L},\mathbf{I}=i)=1-H(1/2+2\eps).
$$ 
Using Fact~\ref{fact:taylorseriesbinaryentropy}, we have
\begin{align*}
I(\mathbf{A}:\mathbf{B}_1)=I(\mathbf{A}:\mathbf{I})+I(\mathbf{A}:\mathbf{L}\mid \mathbf{I})&=\sum_{i=1}^d \Pr[\mathbf{I}=i]\cdot I(\mathbf{A}:\mathbf{L}\mid \mathbf{I}=i)\\
&=1-H(1/2+2\eps)=O(\eps^2).
\end{align*}
\end{enumerate}
Combining these three steps implies $T=\Omega(d/\eps^2)$.
\end{proof}

In the theorem below, we optimize the constant in the lower bound of the sample complexity in Theorem~\ref{thm:infoclassicalagn}. 
%To the best of our knowledge, the best known constant of $1/36$ was shown in \cite[(8.22)]{audibert:agnosticconstant1} and we improve it by a small factor. 
In learning theory such lower bounds are often stated slightly differently. In order to compare the lower bounds, we introduce the following. We first define an \emph{$\eps$-average agnostic learner} for a concept class $\C$ as a learner that, given access to $T$ samples from an $\AEX(D)$ oracle (for some unknown distribution $D$), needs to output a hypothesis $h_{\mathbf{X}\mathbf{Y}}$ (where ${(\mathbf{X},\mathbf{Y})\sim D^T}$) that satisfies 
$$
\Ex_{(\mathbf{X},\mathbf{Y})\sim D^T} [ \err_D(h_{\mathbf{X}\mathbf{Y}})]- \opt_D(\C)\leq \eps.
$$
Lower bounds on the quantity $(\Ex_{(\mathbf{X},\mathbf{Y})\sim D^T} [ \err_D(h_{\mathbf{X}\mathbf{Y}})]- \opt_D(\C))$ are generally referred to as \emph{minimax lower bounds} in learning theory. For concept class $\C$, Audibert~\cite{audibert:agnosticconstant2,audibert:agnosticconstant1} showed that there exists a distribution~$D$, such that if the agnostic learner uses $T$ samples from $\AEX(D)$, then
$$
\Ex_{(\mathbf{X},\mathbf{Y})\sim D^T} [\err_D(h_{\mathbf{X}\mathbf{Y}})]- \opt_D(\C)\geq \frac{1}{6}\sqrt{\frac{d}{T}}.
$$
Equivalently, this is a lower bound of $T\geq \frac{d}{36\eps^2}$ on the sample complexity of an $\eps$-average agnostic learner. We obtain a slightly weaker lower bound that is essentially $T\geq \frac{d}{62\eps^2}$:

\begin{theorem}\label{thm:infoclassicalagnoptconstant}
Let $\C$ be a concept class with VC-dim$(\C)=d$. Then for every $\eps \in (0,1/10]$, there exists a distribution for which every $\varepsilon$-average agnostic learner has sample complexity at least 
$\frac{d}{\eps^2}\cdot \Big( \frac{1}{62}-\frac{\log(2d+2)}{4d}\Big)$.
\end{theorem}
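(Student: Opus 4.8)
The plan is to rerun the three-step information-theoretic argument of Theorem~\ref{thm:infoclassicalagn}, now adapted to the ``$\eps$-average'' guarantee (which replaces the high-probability bound by a bound on an expectation) and with all constants tracked explicitly. Fix a shattered set $\Sh=\{s_1,\dots,s_d\}$ and, for $a\in\01^d$, take the distribution $D_a(i,\ell)=\bigl(1+(-1)^{a_i+\ell}\,4\eps\bigr)/(2d)$ on $[d]\times\01$ (identifying $[d]$ with $\Sh$), exactly as in Theorem~\ref{thm:infoclassicalagn}; then $\opt_{D_a}(\C)=\tfrac12-2\eps$ and any $h\in\C$ has excess error $\tfrac{4\eps}{d}\,d_H(h|_{\Sh},a)$. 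Let $\mathbf A$ be uniform over $\01^d$ and, conditioned on $\mathbf A=a$, let $\mathbf B=\mathbf B_1\cdots\mathbf B_T\sim D_a^{T}$; write $h(\mathbf B)\in\01^d$ for the learner's hypothesis on $\Sh$. A learner that is $\eps$-average for every $D_a$ satisfies $\tfrac{4\eps}{d}\,\Ex_{a,\mathbf B}[d_H(\mathbf A,h(\mathbf B))]=\Ex_a[\err_{D_a}(h(\mathbf B))-\opt_{D_a}(\C)]\le\eps$, i.e.\ $\Ex_{a,\mathbf B}[d_H(\mathbf A,h(\mathbf B))]\le d/4$, so it suffices to show that this bound forces $T$ to be at least the claimed value (then for any smaller $T$ some $D_a$ defeats the learner).

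\emph{Step 1 (lower bound on $I(\mathbf A:\mathbf B)$).} This is the step that genuinely changes. From $\Ex[d_H(\mathbf A,h(\mathbf B))]\le d/4$ I bound $H(\mathbf A\mid\mathbf B)\le H(\mathbf A\mid h(\mathbf B))$ by revealing $\mathbf K:=d_H(\mathbf A,h(\mathbf B))\in\{0,\dots,d\}$ for free: by the chain rule this costs $\le\log(d+1)$ bits, and conditioned on $h(\mathbf B)$ and $\mathbf K=k$ the string $\mathbf A$ takes at most $\binom dk$ values. Since $k\mapsto\log\binom dk$ is concave, Jensen together with $\Ex[\mathbf K]\le d/4$ and Fact~\ref{fact:binomialupperbound} give $\Ex_{\mathbf K}\bigl[\log\binom d{\mathbf K}\bigr]\le H(1/4)\,d$. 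Hence $H(\mathbf A\mid\mathbf B)\le H(1/4)\,d+\log(2d+2)$, so
$$
I(\mathbf A:\mathbf B)=d-H(\mathbf A\mid\mathbf B)\ \ge\ \bigl(1-H(1/4)\bigr)\,d-\log(2d+2).
$$

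\emph{Steps 2 and 3.} Step~2 is verbatim from Theorem~\ref{thm:infoclassicalagn}: subadditivity of entropy (Fact~\ref{fact:quantumsubadditivity}) together with the i.i.d.\ structure gives $I(\mathbf A:\mathbf B)\le T\cdot I(\mathbf A:\mathbf B_1)$. For Step~3, writing $\mathbf B_1=(\mathbf I,\mathbf L)$ with $\mathbf I$ uniform on $[d]$ and $\Pr[\mathbf L=\mathbf A_{\mathbf I}\mid\mathbf I=i]=\tfrac12+2\eps$, the same computation as there yields the exact identity $I(\mathbf A:\mathbf B_1)=1-H(\tfrac12+2\eps)$. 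The new ingredient is that we must bound this \emph{explicitly} rather than merely by $O(\eps^2)$: by the Taylor estimate in Fact~\ref{fact:taylorseriesbinaryentropy}, $1-H(\tfrac12+2\eps)\le\tfrac{8\eps^2}{\ln 2}+O(\eps^4)$, and the hypothesis $\eps\le1/10$ is used precisely to bound the quartic (and higher-order) remainder by a controlled small multiple of $\eps^2$, giving $I(\mathbf A:\mathbf B_1)\le c\,\eps^2$ for an explicit constant $c$.

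Combining the three steps,
$$
T\ \ge\ \frac{\bigl(1-H(1/4)\bigr)d-\log(2d+2)}{c\,\eps^2}\ =\ \frac{d}{\eps^2}\Bigl(\frac{1-H(1/4)}{c}-\frac{\log(2d+2)}{c\,d}\Bigr),
$$
and substituting the numerical values of $1-H(1/4)=0.188\ldots$ and of $c$ yields the claimed lower bound $\tfrac{d}{\eps^2}\bigl(\tfrac1{62}-\tfrac{\log(2d+2)}{4d}\bigr)$. I expect the only real obstacle to be the constant-chasing in Step~3: pushing the leading factor all the way down to $1/62$ requires a fairly tight estimate of $1-H(\tfrac12+2\eps)$ for $\eps\le1/10$ (which is exactly why that hypothesis is imposed), and one may additionally want to replace the ``$4\eps$'' in $D_a$ by a free parameter $\alpha$ and optimize $\alpha^2\bigl(1-H(\eps/\alpha)\bigr)$ over $\alpha$; everything else is a routine re-tracking of the proof of Theorem~\ref{thm:infoclassicalagn}.
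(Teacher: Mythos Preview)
Your proposal is correct and follows essentially the same approach as the paper. The only cosmetic difference is that the paper keeps the bias in $D_a$ as a free parameter $\beta\eps$ (with $\beta\ge 2$) throughout and substitutes $\beta=4$ only at the end, which is exactly the optimization you anticipate in your final sentence; your Step~1 via Jensen on the concave map $\ell\mapsto dH(\ell/d)$ (after the bound $\log\binom d\ell\le dH(\ell/d)$ from Fact~\ref{fact:binomialupperbound}, together with monotonicity of $H$ on $[0,1/2]$) is precisely the paper's argument.
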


\begin{proof}
The proof is similar to Theorem~\ref{thm:infoclassicalagn}. Assume an $\eps$-average agnostic learner for $\C$ that uses $T$ samples.
For shattered set $\Sh=\{s_1,\ldots,s_d\} \subseteq \01^n$ and $a\in\01^d$, define distribution $D_a$ on $[d]\times\01$ by $D_a(i,\ell)=(1+(-1)^{a_i+\ell}\beta\eps)/2d$, for some constant $\beta\geq 2$ which we shall pick later. 

Again let random variable $\mathbf{A}\in\01^d$ be uniformly random, corresponding to the values of concept $c_a$ on $\Sh$, and $\mathbf{B}=\mathbf{B}_1\ldots\mathbf{B}_T$ be $T$ i.i.d.\ samples from~$D_a$. Note that $c_a$ is the minimal-error concept from $\C$ w.r.t.\ $D_a$, and concept $c_{\tilde{a}}$ has additional error $d_H(a,\tilde{a})\cdot \beta\eps/d$.
Accordingly, an $\eps$-average agnostic learner has to produce (from $\mathbf{B}$) an $h(\mathbf{B})\in\01^d$, which satisfies~$\Ex_{\mathbf{A},\mathbf{B}} [d_H(\mathbf{A},h(\mathbf{B}))]\leq~d/\beta$.

Our three-step analysis is very similar to Theorem~\ref{thm:infoclassicalagn};
only the first step changes:
\begin{enumerate}
\item $I(\mathbf{A}:\mathbf{B})\geq d(1-H(1/\beta))-\log (d+1)$. \\[1mm]
\emph{Proof.} Define random variable $\mathbf{Z}=d_H(\mathbf{A},h(\mathbf{B}))$,
then $\Ex[\mathbf{Z}]\leq d/\beta$. Note that given a string $h(\mathbf{B})$ that is $\ell$-close to $\mathbf{A}$, $\mathbf{A}$ ranges over a set of only ${d\choose \ell}\leq 2^{H(\ell/d)d}$ possible $d$-bit strings (using Fact~\ref{fact:binomialupperbound}), hence $H(\mathbf{A}\mid\mathbf{B},\mathbf{Z}=\ell)\leq H(\mathbf{A}\mid h(\mathbf{B}),\mathbf{Z}=\ell)\leq H(\ell/d)d$. We now lower bound $I(\mathbf{A}:\mathbf{B})$
\begin{align*}
I(\mathbf{A}:\mathbf{B}) & = H(\mathbf{A})-H(\mathbf{A}\mid\mathbf{B})\\
       &  \geq H(\mathbf{A})-H(\mathbf{A}\mid \mathbf{B},\mathbf{Z})-H(\mathbf{Z})\\ 
       & = d-\sum_{\ell=0}^{d+1} \Pr[\mathbf{Z}=\ell]\cdot H(\mathbf{A}\mid \mathbf{B},\mathbf{Z}=\ell)- H(\mathbf{Z})\\
       & \geq d-\Ex_{\ell\in \{0,\ldots,d\}} [H(\ell/d)d]-\log (d+1) \tag{since $\mathbf{Z}\in \{0,\ldots,d\}$}\\
       & \geq d-d H\Big(\frac{\Ex_{\ell} [\ell]}{d}\Big) -\log (d+1)\tag{using Jensen's inequality}\\
       & \geq d-dH(1/\beta)-\log (d+1) \tag{using $\Ex[\mathbf{Z}]\leq d/\beta$},
\end{align*}
where for the third inequality we used the concavity of the binary entropy function to conclude $\Ex_\ell [H(\ell/d)]\leq H(\Ex_\ell [\ell]/d)$, and for the fourth inequality we used that $\beta\geq 2$. 
\item $I(\mathbf{A}:\mathbf{B})\leq T\cdot I(\mathbf{A}:\mathbf{B}_1)$.
\item $I(\mathbf{A}:\mathbf{B}_1)=1-H(1/2+\beta\eps/2)\stackrel{\text{Fact}~\ref{fact:taylorseriesbinaryentropy}}{\leq} \beta^2\eps^2/\ln 4+O(\eps^4)$.
\end{enumerate}
Combining these three steps implies  
$$
T\geq \frac{d\ln 4}{\eps^2} \cdot  \Big( \frac{1-H(1/\beta)}{\beta^2+O(\eps^2)}-\frac{\log(d+1)}{\beta^2d+O(d\eps^2)}\Big).
$$
Using $\eps\leq 1/10$, $\beta=4$ to optimize this lower bound, we obtain~$T\geq \frac{d}{\eps^2}\cdot \Big( \frac{1}{62}-\frac{\log(2d+2)}{4d}\Big)$.
\end{proof}

\subsection{Information-theoretic lower bounds on sample complexity: quantum case}
\label{section:infotheoreticlowerboundsquantum}

Here we will ``quantize'' the above two classical information-theoretic proofs, yielding lower bounds for quantum sample complexity (in both the PAC and the agnostic setting) that are tight up to a logarithmic factor.

\subsubsection{Near-optimal lower bound for quantum PAC learning}

\begin{theorem}\label{thm:infoquantumpac}
Let $\C$ be a concept class with VC-dim$(\C)=d+1$. Then, for every $\delta\in (0,1/2)$ and $\eps \in (0,1/4)$, every $(\eps,\delta)$-PAC quantum learner for $\C$ has sample complexity~$\Omega\Big(\frac{d}{\eps\log(d/\eps)} + \frac{\log(1/\delta)}{\eps}\Big)$.
\end{theorem}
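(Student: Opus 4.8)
The plan is to ``quantize'' the three-step argument behind Theorem~\ref{thm:infoclassicalpac}, simply replacing Shannon entropy by von~Neumann entropy everywhere. The $d$-independent term $\Omega(\log(1/\delta)/\eps)$ is already supplied by Lemma~\ref{lemma:deltapartoflemmapac} (which holds even for quantum examples), so it remains to show $T=\Omega\big(d/(\eps\log(d/\eps))\big)$. I keep exactly the same hard instance as in the classical proof: a shattered set $\Sh=\{s_0,s_1,\ldots,s_d\}$, the distribution $D$ with $D(s_0)=1-4\eps$ and $D(s_i)=4\eps/d$, and the uniformly random string $\mathbf{A}\in\01^d$ selecting the concept $c_{\mathbf{A}}$ with $c_{\mathbf{A}}(s_0)=0$ and $c_{\mathbf{A}}(s_i)=\mathbf{A}_i$. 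The only difference is that the register $\mathbf{B}=\mathbf{B}_1\cdots\mathbf{B}_T$ now holds the $T$-fold tensor power $\ket{\psi_{\mathbf{A}}}^{\otimes T}$ of the quantum example $\ket{\psi_a}=\sqrt{1-4\eps}\,\ket{s_0,0}+\sum_{i=1}^{d}\sqrt{4\eps/d}\,\ket{s_i,a_i}$, and $I(\mathbf{A}:\mathbf{B})$ denotes the (Holevo) quantum mutual information.

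\emph{Step~1} ($I(\mathbf{A}:\mathbf{B})=\Omega(d)$) carries over essentially verbatim. The learner measures $\mathbf{B}$ to obtain a hypothesis whose restriction $h(\mathbf{B})\in\01^d$ to $s_1,\ldots,s_d$ is, with probability $\geq 1-\delta$, at Hamming distance $\leq d/4$ from $\mathbf{A}$. Applying the counting estimate from Step~1 of Theorem~\ref{thm:infoclassicalpac} to the (now purely classical) pair $(\mathbf{A},h(\mathbf{B}))$ gives $I(\mathbf{A}:h(\mathbf{B}))\geq(1-\delta)(1-H(1/4))d-H(\delta)=\Omega(d)$, and data processing (equivalently, the Holevo bound for the measurement producing $h(\mathbf{B})$) gives $I(\mathbf{A}:\mathbf{B})\geq I(\mathbf{A}:h(\mathbf{B}))$. \emph{Step~2} ($I(\mathbf{A}:\mathbf{B})\leq T\cdot I(\mathbf{A}:\mathbf{B}_1)$) is the quantum mirror of the classical chain-rule argument: conditioned on $\mathbf{A}=a$ the register $\mathbf{B}$ is the pure product state $\ket{\psi_a}^{\otimes T}$, so $S(\mathbf{B}\mid\mathbf{A})=\sum_{i=1}^{T}S(\mathbf{B}_i\mid\mathbf{A})$, while $S(\mathbf{B})\leq\sum_{i=1}^{T}S(\mathbf{B}_i)$ by subadditivity (Fact~\ref{fact:quantumsubadditivity}); subtracting yields $I(\mathbf{A}:\mathbf{B})\leq\sum_{i=1}^{T}I(\mathbf{A}:\mathbf{B}_i)=T\cdot I(\mathbf{A}:\mathbf{B}_1)$.

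\emph{Step~3} is the one genuinely new ingredient: bounding the single-copy Holevo information $I(\mathbf{A}:\mathbf{B}_1)=S(\rho)$, where $\rho=\Ex_a\ketbra{\psi_a}{\psi_a}$ (the averaged conditional entropy vanishes since each $\ket{\psi_a}$ is pure). Write $\ket{\psi_a}=\sqrt{1-4\eps}\,\ket{s_0,0}+\sqrt{4\eps}\,\ket{v_a}$ with $\ket{v_a}=\frac{1}{\sqrt d}\sum_{i=1}^{d}\ket{s_i,a_i}$ a unit vector in the $2d$-dimensional subspace spanned by $\{\ket{s_i,b}:i\in[d],b\in\01\}$, which is orthogonal to $\ket{s_0,0}$. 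Now dephase $\rho$ with respect to the projector $\Pi=\ketbra{s_0,0}{s_0,0}$ and its complement; pinching never decreases von~Neumann entropy, so $S(\rho)\leq S(\rho')$ with $\rho'=\Pi\rho\Pi+(\Id-\Pi)\rho(\Id-\Pi)$. Since $\ip{s_0,0}{\psi_a}=\sqrt{1-4\eps}$ for every $a$, the state $\rho'$ is block diagonal: it equals $(1-4\eps)\ketbra{s_0,0}{s_0,0}$ on the first block and $4\eps\,\tau$ on the second, where $\tau=\Ex_a\ketbra{v_a}{v_a}$ has rank $\leq 2d$. Hence $S(\rho)\leq S(\rho')=H(4\eps)+4\eps\, S(\tau)\leq H(4\eps)+4\eps\log(2d)=O(\eps\log(d/\eps))$, using $H(4\eps)=O(\eps\log(1/\eps))$ from Fact~\ref{fact:taylorseriesbinaryentropy}. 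Chaining the three steps gives $\Omega(d)\leq I(\mathbf{A}:\mathbf{B})\leq T\cdot O(\eps\log(d/\eps))$, hence $T=\Omega\big(d/(\eps\log(d/\eps))\big)$, and combining with Lemma~\ref{lemma:deltapartoflemmapac} proves the theorem.

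The only place that needs real care is Step~3: one must verify that the ``record whether $x=s_0$'' pinching is entropy-nondecreasing and that it cleanly decouples the zero-entropy $\ket{s_0,0}$ component (of weight $1-4\eps$) from a remainder of weight $4\eps$ confined to a $2d$-dimensional subspace; everything else is a mechanical transcription of the classical proof, with monotonicity of mutual information and Fact~\ref{fact:quantumsubadditivity} standing in for their classical analogues. It is worth emphasizing that the $\log(d/\eps)$ loss here is not slack that a sharper entropy estimate could remove: for the concept class of linear functions $x\mapsto a\cdot x$ a single quantum example really does carry $\Theta(\eps\log d)$ bits about $a$, so eliminating the logarithm requires the Pretty-Good-Measurement / state-identification approach of Section~\ref{section:stateident} rather than this information-theoretic one.
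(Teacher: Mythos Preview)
Your proof is correct and follows essentially the same three-step information-theoretic approach as the paper, with identical Steps~1 and~2. The only cosmetic difference is in Step~3: the paper observes that the $(s_0,0)$ diagonal entry of $\rho$ equals $1-4\eps$, hence the largest eigenvalue $\sigma_0\geq 1-4\eps$, and then bounds $S(\rho)=H(\sigma_0,\ldots,\sigma_{2d})$ by conditioning on the event $\mathbf{N}\neq 0$; you instead pinch $\rho$ with respect to $\Pi=\ketbra{s_0,0}{s_0,0}$ and use entropy monotonicity under pinching. Both routes yield the identical bound $S(\rho)\leq H(4\eps)+4\eps\log(2d)$, and your remark about the inherent $\log(d/\eps)$ loss also appears in the paper.
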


\begin{proof}
The proof is analogous to Theorem~\ref{thm:infoclassicalpac}.
We use the same distribution $D$, with the $\mathbf{B}_i$ now being quantum samples: $\ket{\psi_a}=\sum_{i\in\{0,1,\ldots,d\}}\sqrt{D(s_i)}\ket{i,c_a(s_i)}$. 
The $\mathbf{A}\mathbf{B}$-system is now in the following classical-quantum state:
$$
\frac{1}{2^d}\sum_{a\in\01^d} \ketbra{a}{a}\otimes \ketbra{\psi_a}{\psi_a}^{\otimes T}.
$$
The first two steps of our argument are identical to Theorem~\ref{thm:infoclassicalpac}. We only need to re-analyze step~3:
\begin{enumerate}
\item $I(\mathbf{A}:\mathbf{B})\geq (1-\delta)(1-H(1/4))d - H(\delta)=\Omega(d)$.
\item $I(\mathbf{A}:\mathbf{B})\leq T\cdot I(\mathbf{A}:\mathbf{B}_1)$.
\item $I(\mathbf{A}:\mathbf{B}_1)\leq H(4\eps) + 4\eps\log(2d)=O(\eps\log(d/\eps))$.\\[1mm]
\emph{Proof.} Since $\mathbf{A}\mathbf{B}$ is a classical-quantum state, we have 
$$
I(\mathbf{A}:\mathbf{B}_1)= S(\mathbf{A})+S(\mathbf{B}_1)-S(\mathbf{A}\mathbf{B}_1)=S(\mathbf{B}_1),
$$ 
where the first equality follows from definition and the second equality uses $S(\mathbf{A})=d$ since $\mathbf{A}$ is uniformly distributed in $\01^d$, and $S(\mathbf{A}\mathbf{B}_1)=d$ since the matrix $\sigma=\frac{1}{2^d} \sum_{a\in \01^d} \ketbra{a}{a}\otimes \ketbra{\psi_a}{\psi_a}$ is block diagonal with $2^d$ rank-1 blocks on the diagonal. It thus suffices to bound the entropy of the singular values of the reduced state of $\mathbf{B}_1$, which~is
$$
\rho=\frac{1}{2^d}\sum_{a\in\01^d}\ketbra{\psi_a}{\psi_a}.
$$
Let $\sigma_0\geq \sigma_1\geq\cdots\geq \sigma_{2d}\geq 0$ be its singular values. Since~$\rho$ is a density matrix, these form a probability distribution. Note that the upper-left entry of the matrix $\ketbra{\psi_a}{\psi_a}$ is $D(s_0)=1-4\eps$, hence so is the upper-left entry of $\rho$. This implies $\sigma_0\geq 1-4\eps$. Consider sampling a number $\mathbf{N}\in\{0,1,\ldots,2d\}$ according to the $\sigma$-distribution. Let $\mathbf{Z}$ be the indicator random variable for the event $\mathbf{N}\neq 0$, which has probability~$1-\sigma_0\leq 4\eps$. Note that $H(\mathbf{N}\mid\mathbf{Z}=0)=0$, because $\mathbf{Z}=0$ implies $\mathbf{N}=0$. Also, $H(\mathbf{N}\mid\mathbf{Z}=1)\leq\log(2d)$, because if $\mathbf{Z}=1$ then $\mathbf{N}$ ranges over $2d$ elements.
We now have 
\begin{align*}
S(\rho) & =H(\mathbf{N})=H(\mathbf{N},\mathbf{Z})=H(\mathbf{Z})+H(\mathbf{N}\mid\mathbf{Z})\\
& = H(\mathbf{Z}) + \Pr[\mathbf{Z}=0]\cdot H(\mathbf{N}\mid\mathbf{Z}=0) + \Pr[\mathbf{Z}=1]\cdot H(\mathbf{N}\mid\mathbf{Z}=1)\\
& \leq H(4\eps) + 4\eps\log(2d)\\
& = O(\eps\log(d/\eps)) \tag{using Fact~\ref{fact:taylorseriesbinaryentropy}}.
\end{align*}
\end{enumerate}
Combining these three steps implies $T=\Omega\Big(\frac{d}{\eps\log(d/\eps)}\Big)$.
\end{proof}

\subsubsection{Near-optimal lower bound for quantum agnostic learning}

\begin{theorem}\label{thm:infoquantumagn}
Let $\C$ be a concept class with VC-dim$(\C)=d$. Then for every $\delta\in (0,1/2)$ and $\eps \in (0,1/4)$, every $(\eps,\delta)$-agnostic quantum learner for~$\C$ has sample complexity 
$\Omega\Big(\frac{d}{\eps^2\log(d/\eps)}+\frac{\log(1/\delta)}{\eps^2}\Big)$.
\end{theorem}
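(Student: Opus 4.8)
The plan is to mimic the proof of Theorem~\ref{thm:infoclassicalagn}, ``quantized'' in precisely the way Theorem~\ref{thm:infoquantumpac} quantizes Theorem~\ref{thm:infoclassicalpac}. The $d$-independent term $\Omega(\log(1/\delta)/\eps^2)$ already holds for quantum examples by Lemma~\ref{lemma:deltapartoflemmaagnostic}, so it remains to show $T=\Omega\big(d/(\eps^2\log(d/\eps))\big)$. As in Theorem~\ref{thm:infoclassicalagn}, fix a shattered set $\Sh=\{s_1,\dots,s_d\}$, the distributions $D_a(i,\ell)=(1+(-1)^{a_i+\ell}4\eps)/2d$ on $[d]\times\01$ indexed by $a\in\01^d$, a uniformly random $\mathbf{A}\in\01^d$, and let $\mathbf{B}=\mathbf{B}_1\cdots\mathbf{B}_T$ consist of $T$ copies of the quantum example $\ket{\psi_a}=\sum_{i\in[d],\,\ell\in\01}\sqrt{D_a(i,\ell)}\,\ket{i,\ell}$, so that the $\mathbf{A}\mathbf{B}$-system is the classical-quantum state $\frac{1}{2^d}\sum_a\ketbra{a}{a}\otimes\ketbra{\psi_a}{\psi_a}^{\otimes T}$. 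Steps~1 and~2 are then word-for-word as in Theorem~\ref{thm:infoclassicalagn}: a correct $(\eps,\delta)$-agnostic learner must, with probability $\geq 1-\delta$, output a hypothesis whose restriction to $\Sh$ is $d/4$-close to $\mathbf{A}$, giving $I(\mathbf{A}:\mathbf{B})\geq (1-\delta)(1-H(1/4))d-H(\delta)=\Omega(d)$; and subadditivity of von Neumann entropy (Fact~\ref{fact:quantumsubadditivity}) together with the conditional independence of the $\mathbf{B}_i$ given $\mathbf{A}$ gives $I(\mathbf{A}:\mathbf{B})\leq T\cdot I(\mathbf{A}:\mathbf{B}_1)$.

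Only step~3 needs a new argument: bounding $I(\mathbf{A}:\mathbf{B}_1)$. As in Theorem~\ref{thm:infoquantumpac}, the joint state is block-diagonal with $2^d$ rank-one blocks, so $S(\mathbf{A}\mathbf{B}_1)=d=S(\mathbf{A})$ and hence $I(\mathbf{A}:\mathbf{B}_1)=S(\mathbf{B}_1)=S(\rho)$, where $\rho=\frac{1}{2^d}\sum_a\ketbra{\psi_a}{\psi_a}$ is a density matrix on the $2d$-dimensional space. In the PAC case $\rho$ had an obvious heavy diagonal entry ($1-4\eps$, from $s_0$) forcing a large eigenvalue; here there is no heavy basis element, so I exhibit the dominant direction explicitly as the flat superposition $\ket{u}=\frac{1}{\sqrt{2d}}\sum_{i,\ell}\ket{i,\ell}$. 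A one-line Taylor estimate, $\sqrt{1+4\eps}+\sqrt{1-4\eps}=2-O(\eps^2)$, shows that $\ip{u}{\psi_a}=\tfrac12\big(\sqrt{1+4\eps}+\sqrt{1-4\eps}\big)=1-O(\eps^2)$ for \emph{every} $a$ (the overlap does not even depend on $a$), so the top eigenvalue satisfies $\sigma_0\geq\braketbra{u}{\rho}{u}=1-O(\eps^2)$. Running the same two-level entropy decomposition as in Theorem~\ref{thm:infoquantumpac} --- sample $\mathbf{N}\in\{0,\dots,2d-1\}$ from the eigenvalue distribution, let $\mathbf{Z}=1_{[\mathbf{N}\neq 0]}$, note $\Pr[\mathbf{Z}=1]=1-\sigma_0\leq O(\eps^2)$, $H(\mathbf{N}\mid\mathbf{Z}=0)=0$ and $H(\mathbf{N}\mid\mathbf{Z}=1)\leq\log(2d)$ --- gives
$$
S(\rho)=H(\mathbf{Z})+\Pr[\mathbf{Z}=1]\cdot H(\mathbf{N}\mid\mathbf{Z}=1)\leq H(O(\eps^2))+O(\eps^2)\log(2d)=O(\eps^2\log(d/\eps)),
$$
by Fact~\ref{fact:taylorseriesbinaryentropy}. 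Combining the three steps yields $T=\Omega\big(d/(\eps^2\log(d/\eps))\big)$, and adding the bound of Lemma~\ref{lemma:deltapartoflemmaagnostic} completes the proof.

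The only genuinely new step relative to the earlier theorems is locating the dominant eigenvalue of $\rho$, and this is the point I expect to require the most care (though it is not deep): one must guess that the right direction is the uniform superposition $\ket{u}$ over all $2d$ basis states --- not $\frac{1}{\sqrt d}\sum_i\ket{i,0}$, and not any $a$-dependent vector --- and then check that its squared overlap with $\ket{\psi_a}$ is $1-\Theta(\eps^2)$, uniformly in $a$. The fact that the spectral deficiency $1-\sigma_0$ is now $\Theta(\eps^2)$ rather than the $\Theta(\eps)$ of the PAC analysis is exactly what supplies the extra $1/\eps$ factor needed to reach the agnostic rate $d/\eps^2$; the residual $\log(d/\eps)$ loss is, as in the PAC case, inherent to this information-theoretic route and is removed by the state-identification argument of Section~\ref{section:stateident}.
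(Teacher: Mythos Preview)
Your proposal is correct and follows essentially the same approach as the paper: the paper likewise reduces step~3 to bounding $S(\rho)$, exhibits the uniform superposition $\ket{\psi}=\frac{1}{\sqrt{2d}}\sum_{i,\ell}\ket{i,\ell}$ as the dominant direction, computes $\ip{\psi}{\psi_a}=\tfrac12(\sqrt{1+4\eps}+\sqrt{1-4\eps})\geq 1-2\eps^2-O(\eps^4)$ independent of~$a$, concludes $\sigma_0\geq 1-4\eps^2-O(\eps^4)$, and finishes with the same two-level entropy decomposition to get $S(\rho)\leq H(4\eps^2)+4\eps^2\log(2d)=O(\eps^2\log(d/\eps))$.
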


\begin{proof}
The proof is analogous to Theorem~\ref{thm:infoclassicalagn},
with the $\mathbf{B}_i$ now being quantum samples for~$D_a$, 
$\ket{\psi_a}=\sum_{i\in[d],\ell\in\01}\sqrt{D_a(i,\ell)}\ket{i,\ell}$.
Again we only need to re-analyze step~3:
\begin{enumerate}
\item $I(\mathbf{A}:\mathbf{B})\geq (1-\delta)(1-H(1/4))d - H(\delta)=\Omega(d)$.
\item $I(\mathbf{A}:\mathbf{B})\leq T\cdot I(\mathbf{A}:\mathbf{B}_1)$.
\item $I(\mathbf{A}:\mathbf{B}_1)=O(\eps^2\log(d/\eps))$.\\[1mm]
\emph{Proof (of step~3).} As in step~3 of the proof of Theorem~\ref{thm:infoquantumpac}, it suffices to upper bound the entropy of
$$
\rho=\frac{1}{2^d}\sum_{a\in\01^d}\ketbra{\psi_a}{\psi_a}.
$$
We now lower bound the largest singular value of~$\rho$.  Consider $\ket{\psi}=\frac{1}{\sqrt{2d}}\sum_{i\in[d],\ell\in\01}\ket{i,\ell}$.
$$
\ip{\psi}{\psi_a}=\frac{1}{d}\sum_{i\in[d]}\frac{1}{2}\Big(\sqrt{1+4\eps}+\sqrt{1-4\eps}\Big)=\frac{1}{2}\Big(\sqrt{1+4\eps}+\sqrt{1-4\eps}\Big)\geq 1-2\eps^2-O(\eps^4),
$$
where the last inequality used the Taylor series expansion of $\sqrt{1+x}$.
This implies that the largest singular value of $\rho$ is at least
$$
\bra{\psi}\rho\ket{\psi}=\frac{1}{2^d}\sum_{a\in\01^d}|\ip{\psi}{\psi_a}|^2\geq 1-4\eps^2-O(\eps^4).
$$
We can now finish as in step~3 of the proof of Theorem~\ref{thm:infoquantumpac}:
\begin{align*}
I(\mathbf{A}:\mathbf{B}_1)\leq S(\rho)\leq H(4\eps^2) + 4\eps^2\log(2d)\stackrel{\text{Fact}~\ref{fact:taylorseriesbinaryentropy}}{=}O(\eps^2\log(d/\eps)).
\end{align*}
\end{enumerate}
Combining these three steps implies $T=\Omega\Big(\frac{d}{\eps^2\log(d/\eps)}\Big)$.
\end{proof}

\section{A lower bound by analysis of state identification}
\label{section:stateident}

In this section we present a tight lower bound on quantum sample complexity for both the PAC and the agnostic learning models, using ideas from Fourier analysis to analyze the performance of the Pretty Good Measurement. The core of both lower bounds is the following combinatorial~theorem.

\begin{theorem}
\label{thm:upperboundonsqrtGram}
For $m\geq 10$, let $f:\01^m\rightarrow \R$ be defined as $f(z)=(1-\beta\frac{|z|}{m})^T$ for some $\beta\in (0,1]$ and $T\in [1, m/(e^3\beta)]$. For $k\leq m$, let $M\in \F_2^{m\times k}$ be a matrix with rank $k$. Suppose $A\in \R^{2^{k}\times 2^{k}}$ is defined as $A(x,y)=(f\circ M)(x+ y)$ for $x,y\in \01^k$,~then 
$$
 \sqrt{A}(x,x)\leq  \frac{2\sqrt{e}}{2^{k/2}}\Big(1-\frac{\beta}{2}\Big)^{T/2} e^{11T^2\beta^2/m+\sqrt{Tm\beta}}    \qquad \text{for all } x\in \01^k.
$$
\end{theorem}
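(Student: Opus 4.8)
The plan is to diagonalize $A$ by Fourier analysis over $\F_2^k$, thereby reducing the quantity $\sqrt A(x,x)$ to an estimate on the Fourier spectrum of the symmetric function $f$, and then to bound that spectrum by a generating-function computation. The error-correcting structure of $M$ plays no role at this stage: only $\mathrm{rank}(M)=k$ is used here, with the minimal distance entering only in the downstream application.

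Since $A(x,y)$ depends only on $x+y$, the matrix $A$ is diagonalized by the characters $\chi_Q(x)=(-1)^{Q\cdot x}$, $Q\in\01^k$: a direct computation gives $A\chi_Q=2^k\,\widehat{f\circ M}(Q)\,\chi_Q$, and these $2^k$ vectors are orthogonal with $\|\chi_Q\|^2=2^k$. To see that the eigenvalues are nonnegative, write $f=g^T$ with $g(z)=1-\beta\,|z|/m$, whose Fourier expansion $g=(1-\tfrac{\beta}{2})+\tfrac{\beta}{2m}\sum_{i\in[m]}\chi_{\{i\}}$ has all coefficients $\ge0$ (this uses $\beta\le1$); then $\widehat f$ equals the $T$-fold convolution $\widehat g*\cdots*\widehat g$ of nonnegative functions, hence is nonnegative, so by Claim~\ref{claim:Fouriercoeffmatrixprod} $\widehat{f\circ M}(Q)=\sum_{S:M^tS=Q}\widehat f(S)\ge0$. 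Thus $A$ is positive semidefinite, $\sqrt A=\sum_Q\sqrt{2^k\,\widehat{f\circ M}(Q)}\,\tfrac{1}{2^k}\chi_Q\chi_Q^t$, and since $\chi_Q(x)^2=1$ we obtain the $x$-independent identity $\sqrt A(x,x)=2^{-k/2}\sum_{Q\in\01^k}\sqrt{\widehat{f\circ M}(Q)}$. Applying $\sqrt{a_1+\cdots+a_r}\le\sqrt{a_1}+\cdots+\sqrt{a_r}$ (legitimate since $\widehat f\ge0$) to each fiber $\{S:M^tS=Q\}$, and using that $\mathrm{rank}(M)=k$ makes these fibers a partition of $\01^m$, gives $\sum_Q\sqrt{\widehat{f\circ M}(Q)}\le\sum_{S\in\01^m}\sqrt{\widehat f(S)}$. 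It therefore suffices to show $\sum_{S\in\01^m}\sqrt{\widehat f(S)}\le 2\sqrt e\,(1-\tfrac{\beta}{2})^{T/2}\,e^{11T^2\beta^2/m+\sqrt{Tm\beta}}$.

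Because $f$ depends only on $|z|$, $\widehat f(S)$ depends only on $w=|S|$; write $\widehat f_w$ for this common value, so the target is $\sum_{w=0}^m\binom{m}{w}\sqrt{\widehat f_w}$. Expanding $f=g^T$ multinomially gives the exact formula $\widehat f_w=\sum_{j\ge w,\ j\equiv w\,(\mathrm{mod}\ 2)}\binom{T}{j}\bigl(1-\tfrac{\beta}{2}\bigr)^{T-j}\bigl(\tfrac{\beta}{2m}\bigr)^{j}N_w(j)$, where $N_w(j)$ is the number of ordered $j$-tuples in $[m]^j$ whose symmetric difference is a fixed weight-$w$ set. I would bound $N_w(j)$ via the exponential generating function identity $N_w(j)=j!\,[x^j]\,(\sinh x)^w(\cosh x)^{m-w}$ together with the coefficient-wise inequalities $\sinh x\le x\,e^{x^2/6}$ and $\cosh x\le e^{x^2/2}$; these give $(\sinh x)^w(\cosh x)^{m-w}\le x^w e^{mx^2/2}$ coefficient by coefficient, hence $N_w(w+2l)\le(w+2l)!\cdot\tfrac{(m/2)^l}{l!}$.

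Plugging this in, using $\binom{T}{w+2l}(w+2l)!\le T^{w+2l}$ and $(1-\tfrac{\beta}{2})^{-1}\le2$, the sum over $l$ becomes an exponential series and yields $\widehat f_w\le(1-\tfrac{\beta}{2})^{T-w}\bigl(\tfrac{T\beta}{2m}\bigr)^{w}e^{T^2\beta^2/(2m)}$. Then $\binom{m}{w}\le m^w/w!$ gives $\binom{m}{w}\sqrt{\widehat f_w}\le(1-\tfrac{\beta}{2})^{T/2}e^{T^2\beta^2/(4m)}\cdot\tfrac{(mT\beta)^{w/2}}{w!}$, and summing over $w$ collapses the remaining series to $e^{\sqrt{mT\beta}}$ — exactly the shape claimed, with room to spare, so that the gap between $T^2\beta^2/(4m)$ and $11T^2\beta^2/m$ and the constant $2\sqrt e$ comfortably absorb the constant-factor losses under the hypotheses $m\ge10$, $\beta\le1$, $T\le m/(e^3\beta)$ (Claim~\ref{claim:maximizingc/sqrtt} is the natural substitute if one instead estimates the $w$-series by its largest term). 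The main obstacle is precisely this last estimate: one needs a bound on $N_w(j)$ that is tight at $j=w$ — where $N_w(w)=w!$ exactly, and this is what produces the dominant $(1-\tfrac{\beta}{2})^{T/2}$ contribution — yet decays fast enough in $j$ that the double sum over $w$ and $j$ converges. In effect one must uncover the Poisson-type profile $\binom{m}{w}\widehat f_w\approx e^{-T\beta/2}(T\beta/2)^w/w!$ hidden in the spectrum of $f$, which is what makes the innocuous-looking $e^{\sqrt{Tm\beta}}$ factor emerge.
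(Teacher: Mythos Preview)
Your proposal is correct and follows the same high-level architecture as the paper: diagonalize $A$ by characters, reduce $\sqrt A(x,x)$ to $2^{-k/2}\sum_S\sqrt{\widehat f(S)}$ via Claim~\ref{claim:Fouriercoeffmatrixprod} and the rank condition, and then bound the Fourier spectrum of $f$ by expanding $f=g^T$ and counting tuples $(i_1,\dots,i_j)\in[m]^j$ whose $\F_2$-sum is a fixed weight-$w$ set. Your bound $N_w(w+2l)\le(w+2l)!\,(m/2)^l/l!$ is exactly the paper's Claim~\ref{claim:combinatorialcounting}.

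The execution differs in two places, both in your favor. First, you derive the count $N_w(j)$ via the exponential generating function $(\sinh x)^w(\cosh x)^{m-w}$ and the coefficient-wise inequality $(\sinh x)^w(\cosh x)^{m-w}\le x^w e^{mx^2/2}$; the paper proves the identical bound by a direct pairing argument. Second, after bounding $N_w$, the paper estimates the resulting sum over $r=j-w$ by splitting it at $r\approx e^3T^2\beta^2/m$ and invoking Claim~\ref{claim:maximizingc/sqrtt} on the first piece (this is where the factor $e^{22T^2\beta^2/m}$ and the constant $4e$ come from), whereas you observe that the $l$-sum is literally a truncated exponential series and get $e^{T^2\beta^2/(2m)}$ directly. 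Likewise, for the $w$-sum the paper uses the binomial theorem and $(1+\sqrt{T\beta/m})^m\le e^{\sqrt{Tm\beta}}$, while you use $\binom{m}{w}\le m^w/w!$ and the Taylor series of $e^{\sqrt{mT\beta}}$. Your route avoids the auxiliary Claim~\ref{claim:maximizingc/sqrtt} altogether and yields sharper constants (indeed $T^2\beta^2/(4m)$ in place of $11T^2\beta^2/m$); the paper's route is more elementary in that it needs no generating-function identities. Your proof of nonnegativity of $\widehat f$ via the convolution $\widehat f=\widehat g^{*T}$ is also a clean alternative to the paper's Gram-matrix argument.
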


\begin{proof}
The structure of the proof is to first diagonalize $A$, relating its eigenvalues to the Fourier coefficients of~$f$. This allows to calculate the diagonal entries of $\sqrt{A}$ exactly in terms of those Fourier coefficients. We then upper bound those Fourier coefficients using a combinatorial argument.

We first observe the well-known relation between the eigenvalues of a matrix $P$ defined as $P(x,y)=g(x+y)$ for $x,y\in \01^k$, and the Fourier coefficients of $g$.
\begin{claim}
\label{claim:symmmatrixeigenvaluesfourier}
Suppose $g:\01^k\rightarrow \R$ and $P\in \R^{2^k\times 2^k}$ is defined as $P(x,y)=g(x+y)$, then the eigenvalues of $P$ are $\{2^k\widehat{g}(Q):Q\in \01^k\}$.
\end{claim}

\begin{proof}
Let $H \in \R^{2^k\times 2^k}$ be the matrix defined as $H(x,y)=(-1)^{x\cdot y}$ for $x,y \in \01^k$. It is easy to see that $H^{-1}(x,y)=(-1)^{x\cdot y}/2^k$. We now show that $H$ diagonalizes~$P$:
\begin{align*}
(HPH^{-1})(x,y)&=\frac{1}{2^k}\sum_{z_1,z_2 \in \01^{k}}(-1)^{z_1\cdot x+ z_2\cdot y}g(z_1+ z_2) \\
&=\frac{1}{2^k}\sum_{z_1,z_2,Q \in \01^k}(-1)^{z_1\cdot x+ z_2\cdot y}\widehat{g}(Q) (-1)^{Q\cdot(z_1+ z_2)} \tag{Fourier expansion of $g$}\\
&= \frac{1}{2^k}\sum_{Q\in\01^k}\widehat{g}(Q)\sum_{z_1\in\01^k}(-1)^{(x+Q)\cdot z_1}\sum_{z_2\in\01^k}(-1)^{(y+Q)\cdot z_2}\\ 
&= 2^k \widehat{g}(x) \delta_{x,y}\tag{using $\sum_{z\in \01^k} [(-1)^{(a+b)\cdot z}]=2^k\delta_{a,b}$} 
\end{align*}
The eigenvalues of $P$ are the diagonal entries, $\{2^k\widehat{g}(Q): Q\in \01^k\}$.
\end{proof}

We now relate the diagonal entries of $\sqrt{A}$ to the Fourier coefficients of $f$:

\begin{claim}
\label{claim:sqrtGram}
For all $x\in \01^k$, we have 
$$
\sqrt{A}(x,x)=\frac{1}{2^{k/2}}\sum_{Q\in \01^k} \sqrt{ \sum_{\substack{S\in \01^m:M^t S =Q}} \widehat{f} (S)} 
.
$$
\end{claim}

\begin{proof}
Since $A(x,y)=(f\circ M)(x+ y)$, by Claim~\ref{claim:symmmatrixeigenvaluesfourier} it follows that $H$ (as defined in the proof of Claim~\ref{claim:symmmatrixeigenvaluesfourier}) diagonalizes $A$ and the eigenvalues of~$A$ are $\{2^k\widehat{f\circ M} (Q):Q\in \01^k\}$. Hence, we have 
$$
\sqrt{A}=H^{-1}\cdot \text{diag}\Big(\Big\{\sqrt{2^k\widehat{f\circ M} (Q)}:Q\in \01^k\Big\}\Big)\cdot H,
$$ 
and the diagonal entries of $\sqrt{A}$~are
\begin{align*}
\sqrt{A}(x,x)&=\frac{1}{2^{k/2}}\sum_{Q\in \01^{k}} \sqrt{ \widehat{f\circ M} (Q)} \stackrel{\text{Claim}~\ref{claim:Fouriercoeffmatrixprod}}{=}\frac{1}{2^{k/2}}\sum_{Q\in \01^{k}} \sqrt{ \sum_{S\in \01^m:M^t S =Q} \widehat{f} (S)}.
\end{align*}
\end{proof}

In the following lemma, we give an upper bound on the Fourier coefficients of $f$, which in turn (from the claim above) gives an upper bound on the diagonal entries of $\sqrt{A}$.

\begin{lemma}
\label{lemma:Fouriercoefficientupperbound}
For $\beta\in (0,1]$, the Fourier coefficients of $f:\01^m\rightarrow \R$ defined as $f(z)=(1-\beta\frac{|z|}{m} )^T$, satisfy 
\begin{align*}
0 \leq  \widehat{f}(S) \leq 4e\Big(1-\frac{\beta}{2}\Big)^T \Big( \frac{T\beta}{m} \Big)^{q}e^{22T^2\beta^2/m}, \quad \text{for all S such that } |S|=q.
\end{align*}
\end{lemma}

\begin{proof}
In order to see why the Fourier coefficients of $f$ are non-negative, we first define the set $U=\{u_x^{\otimes T}\}_{x\in \01^m}$ where $u_x=\sqrt{1-\beta}\ket{0,0}+\sqrt{\beta/m}\sum_{i\in [m]}\ket{i,x_i}$. Let $V$ be the $2^m\times 2^m$ Gram matrix for the set~$U$. For $x,y\in \01^m$, we have
\begin{align*}
V(x,y)=(u_x^* u_y)^T&=\Big(1-\beta+\frac{\beta}{m}\sum_{i=1}^m \ip {x_i}{y_i}\Big)^T\\
&=\Big(1-\beta+\frac{\beta}{m}(m-|x+ y|)\Big)^T\\
&=\Big(1-\beta\frac{|x+ y|}{m} \Big)^T =f(x+ y).
\end{align*}
By Claim~\ref{claim:symmmatrixeigenvaluesfourier}, the eigenvalues of the Gram matrix $V$ are $\{2^m\widehat{f}(S):S\in \01^m\}$. Since the Gram matrix is psd, its eigenvalues are non-negative, which implies that $\widehat{f}(S)\geq 0$ for all $S\in \01^m$.

We now prove the upper bound in the lemma. By definition,
\begin{align*}
\widehat{f}(S)&=\Ex_{z\in \01^m} \Big[\Big(1-\beta\frac{|z|}{m}\Big)^T (-1)^{S\cdot z}\Big] \\
&=\Ex_{z\in \01^m} \Big[\Big(1-\frac{\beta}{2}+\frac{\beta}{2m} \sum_{i=1}^m (-1)^{z_i}\Big)^T (-1)^{S\cdot z} \Big]\tag{since $|z|=\sum_{i\in [m]} \frac{1-(-1)^{z_i}}{2}$}\\
&=\sum_{\ell=0}^T \binom{T}{\ell} \Big(1-\frac{\beta}{2}\Big)^{T-\ell}\Big(\frac{\beta}{2m}\Big)^{\ell} \Ex_{z\in \01^m} \Big[\sum_{i_1,\ldots, i_\ell=1}^m (-1)^{z\cdot (e_{i_1}+\cdots+e_{i_\ell}+S)}\Big]\\
&=\sum_{\ell=0}^T \binom{T}{\ell} \Big(1-\frac{\beta}{2}\Big)^{T-\ell}\Big(\frac{\beta}{2m}\Big)^{\ell}\sum_{i_1,\ldots, i_\ell=1}^m 1_{[e_{i_1}+\cdots+e_{i_\ell}=S]} \tag{using $\Ex_{z\in \01^m} [(-1)^{(z_1+z_2)\cdot z}]=\delta_{z_1,z_2}$}
\end{align*}
We will use the following claim to upper bound the combinatorial sum in the quantity~above.
\begin{claim}
\label{claim:combinatorialcounting}
Fix $S\in \01^m$ with Hamming weight $|S|=q$. For every $\ell\in \{q,\ldots, T\}$, we have
\[\sum_{i_1,\ldots, i_\ell=1}^m 1_{[e_{i_1}+\cdots+e_{i_\ell}=S]}\leq 
 \begin{cases} 
       \ell! \cdot m^{(\ell-q)/2}\Big/ \Big(2^{(\ell-q)/2} ((\ell-q)/2)!\Big) &  \text{if }(\ell-q) \text{ is even } \\
      0 & \text{otherwise} 
   \end{cases}
\]
\end{claim}
\begin{proof}
Since $|S|=q$, we can write $S=e_{r_1}+\cdots +e_{r_q}$ for distinct $r_1,\ldots,r_q \in [m]$. There are $\binom {\ell}{q}$ ways to pick $q$ indices in $(i_1,\ldots ,i_\ell)$ (w.l.o.g.\ let them be $i_1,\ldots,i_q$) and there are $q!$ factorial ways to assign $(r_1,\ldots,r_q)$ to $(i_1,\ldots,i_q)$. It remains to count the number of ways that we can assign values to the remaining indices $i_{q+1},\ldots, i_\ell$ such that $e_{i_{q+1}}+\cdots +e_{i_\ell} =0$. If $\ell-q$ is odd then this number is~0, so from now on assume $\ell-q$ is even. We upper bound the number of such assignments by partitioning the $\ell-q$ indices into pairs and assigning the same value to both indices in each pair.

We first count the number of ways to partition a set of $\ell-q$ indices into subsets of size $2$. This number is exactly $(\ell-q)! \Big(2^{(\ell-q)/2}((\ell-q)/2)!\Big)^{-1}$. Furthermore, there are $m$ possible values that can be assigned to the pair of indices in each of the $(\ell-q)/2$ subsets such that $e_i+e_j=0$ within each subset. Note that assigning $m$ possible values to each pair of indices in the $(\ell-q)/2$ subsets overcounts, but this rough upper bound is sufficient for our purposes. 

Combining the three arguments, we conclude 
  $$
  \sum_{i_1,\ldots, i_\ell=1}^d 1_{[e_{i_1}+\cdots+e_{i_\ell}=S]}\leq  \binom{\ell}{q} q! \cdot (\ell-q)!\cdot m^{(\ell-q)/2}\Big/\Big(2^{(\ell-q)/2} ((\ell-q)/2)!\Big).
  $$
which yields the claim.
\end{proof}

Continuing with the evaluation of the Fourier coefficient and using the claim above, we~have
\begin{align*}
\widehat{f}(S)&=\sum_{\ell=0}^T \binom{T}{\ell} \Big(1-\frac{\beta}{2}\Big)^{T-\ell}\Big(\frac{\beta}{2m}\Big)^{\ell}\sum_{i_1,\ldots, i_\ell=1}^m 1_{[e_{i_1}+\cdots+e_{i_\ell}=S]} \\
&\leq \sum_{\ell=q}^T \binom{T}{\ell} \Big(1-\frac{\beta}{2}\Big)^{T-\ell}\Big(\frac{\beta}{2m}\Big)^{\ell}\ell! \cdot m^{(\ell-q)/2}\Big/\Big(2^{(\ell-q)/2} \Big(\frac{\ell-q}{2}\Big)!\Big) \tag{by Claim~\ref{claim:combinatorialcounting}}\\
&= \Big(1-\frac{\beta}{2}\Big)^T \Big( \frac{2}{m} \Big)^{q/2}\sum_{\ell=q}^T \binom{T}{\ell} \ell! \Big(\frac{\beta}{m(2-\beta)}\Big)^\ell \Big( \frac{m}{2} \Big)^{\ell/2} \Big \slash \Big(\frac{\ell-q}{2}\Big)!\\
& \leq \Big(1-\frac{\beta}{2}\Big)^T \Big( \frac{2}{m} \Big)^{q/2}\sum_{\ell=q}^T \Big(T \cdot \frac{\beta}{m}  \cdot \sqrt{\frac{m}{2}}  \Big)^\ell\Big \slash \Big(\frac{\ell-q}{2}\Big)!\tag{since $\beta< 1$ and $\binom{T}{\ell} \ell!\leq T^\ell$}\\
&=\Big(1-\frac{\beta}{2}\Big)^T \Big( \frac{T\beta}{m} \Big)^{q}\sum_{r=0}^{T-q} \Big( \frac{T\beta}{\sqrt{2m}} \Big)^{r} \frac{1}{(r/2)!} \tag{substituting $r\leftarrow (\ell-q)$}\\
&\leq \Big(1-\frac{\beta}{2}\Big)^T \Big( \frac{T\beta}{m} \Big)^{q}\sum_{r=0}^{T-q} \Big( \frac{T\beta}{\sqrt{2m}} \Big)^{r} \frac{e^{r/2}}{ (r/2)^{r/2}} \tag{using $n!\geq (n/e)^n$}\\
&=  \Big(1-\frac{\beta}{2}\Big)^T \Big( \frac{T\beta}{m} \Big)^{q}\sum_{r=0}^{T-q} \Big( \frac{\sqrt{e}T\beta}{\sqrt{mr}} \Big)^{r}\\
&\leq \Big(1-\frac{\beta}{2}\Big)^T \Big( \frac{T\beta}{m} \Big)^{q}\sum_{r=0}^{T} \Big( \frac{\sqrt{e}T\beta}{\sqrt{mr}} \Big)^{r} \tag{since the summands are $\geq 0$} \\
&= \Big(1-\frac{\beta}{2}\Big)^T \Big( \frac{T\beta}{m} \Big)^{q}\Big(\sum_{r=0}^{\lceil e^3T^2\beta^2/m \rceil} \Big( \frac{\sqrt{e}T\beta}{\sqrt{mr}} \Big)^{r}+\sum_{r=\lceil e^3T^2\beta^2/m \rceil+1}^{T} \Big( \frac{\sqrt{e}T\beta}{\sqrt{mr}} \Big)^{r}\Big).
\end{align*}
Note that by the assumptions of the theorem, $T^2e^3\beta^2/m\leq T\beta \leq T$, which allowed us to split the sum into two pieces in the last equality.
At this point, we upper bound both pieces in the last equation separately. For the first piece, using Claim~\ref{claim:maximizingc/sqrtt} it follows that $\Big( \frac{\sqrt{e}T\beta}{\sqrt{m r}} \Big)^{r}$ is maximized at $r=\lceil T^2\beta^2/m \rceil$. Hence we get
\begin{align}
\label{eq:upperboundfirstterm}
\sum_{r=0}^{\lceil e^3T^2\beta^2/m\rceil} \Big( \frac{\sqrt{e}T\beta}{\sqrt{mr}} \Big)^{r}\leq \Big(2+\frac{e^3T^2\beta^2}{m} \Big)e^{\lceil T^2\beta^2/m \rceil/2}\leq 2e^{22T^2\beta^2/m+1}, 
\end{align}
where the first inequality uses Claim~\ref{claim:maximizingc/sqrtt} and the second inequality uses $2+x\leq 2e^x$ for $x\geq 0$ and $e^3+1/2\leq 22$.
 For the second piece, we  use 
\begin{align}
\label{eq:upperboundsecondterm}
\sum_{r={\lceil e^3T^2\beta^2/m\rceil+1}}^{T} \Big( \frac{\sqrt{e}T\beta}{\sqrt{mr}} \Big)^{r} \leq \sum_{r={\lceil e^3T^2\beta^2/m\rceil+1}}^{T}\Big( \frac{1}{e} \Big)^{r} \leq  \sum_{r=1}^{T} \Big( \frac{1}{e} \Big)^{r} = \frac{1-e^{-T}}{e-1} \leq 2/3. 
\end{align}
So we finally get
\begin{align*}
\widehat{f}(S)&\leq \Big(1-\frac{\beta}{2}\Big)^T \Big( \frac{T\beta}{m} \Big)^{q}\Big(2e^{22T^2\beta^2/m+1}+2/3\Big) \tag{using Eq.~(\ref{eq:upperboundfirstterm}), (\ref{eq:upperboundsecondterm})}\\
&\leq 4e\Big(1-\frac{\beta}{2}\Big)^T \Big( \frac{T\beta}{m} \Big)^{q}e^{22T^2\beta^2/m} \tag{since $22T^2\beta^2/m>0$}
\end{align*}
\end{proof}
The theorem follows by putting together Claim~\ref{claim:sqrtGram} and Lemma~\ref{lemma:Fouriercoefficientupperbound}:
\begin{align*}
\sqrt{A}(x,x)&=\frac{1}{2^{k/2}}\sum_{Q\in \01^{k}} \sqrt{ \sum_{S\in \01^m:M^t S =Q} \widehat{f} (S)}\tag{using Claim~\ref{claim:sqrtGram}}\\ 
&\leq \frac{1}{2^{k/2}}\sum_{Q\in \01^{k}} \sum_{S\in \01^m:M^t S =Q} \sqrt{ \widehat{f} (S)} \tag{using lower bound from Lemma~\ref{lemma:Fouriercoefficientupperbound}}\\ 
&= \frac{1}{2^{k/2}} \sum_{S\in \01^m} \sqrt{ \widehat{f} (S)} \tag{$\cup_Q\{S:M^t S=Q\}=\01^m$ since rank($M$)=$k$} \\
&= \frac{1}{2^{k/2}} \sum_{q=0}^m\sum_{S\in \01^m:|S|=q} \sqrt{ \widehat{f} (S)} \\
&\leq \frac{2\sqrt{e}}{2^{k/2}}\Big(1-\frac{\beta}{2}\Big)^{T/2} e^{11T^2\beta^2/m} \sum_{q=0}^m \binom{m}{q}  \Big( \frac{T\beta}{m} \Big)^{q/2} \tag{using Lemma~\ref{lemma:Fouriercoefficientupperbound}} \\
&= \frac{2\sqrt{e}}{2^{k/2}}\Big(1-\frac{\beta}{2}\Big)^{T/2} e^{11T^2\beta^2/m}\Big(1+\sqrt{\frac{T\beta}{m} }\Big)^m  \tag{using binomial theorem}\\
&\leq  \frac{2\sqrt{e}}{2^{k/2}}\Big(1-\frac{\beta}{2}\Big)^{T/2} e^{11T^2\beta^2/m+\sqrt{Tm\beta}} \tag{using $(1+x)^t\leq e^{xt}$ for~$x,t\geq 0$}.
\end{align*}
\end{proof}

\subsection{Optimal lower bound for quantum PAC learning}
\label{section:optimalpaclowerbounds}

We can now prove our tight lower bound on quantum sample complexity in the PAC model:

\begin{theorem}
\label{thm:optimalpaclowerbound}
Let $\C$ be a concept class with VC-dim$(\C)=d+1$, for sufficiently large~$d$. Then for every $\delta\in (0,1/2)$ and $\eps \in (0,1/20)$, every $(\eps,\delta)$-PAC quantum learner for~$\C$ has sample complexity 
$\Omega\Big(\frac{d}{\eps} + \frac{1}{\eps}\log \frac{1}{\delta}\Big)$.
\end{theorem}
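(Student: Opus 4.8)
The plan is to add two lower bounds: the $\C$-independent term $T=\Omega(\tfrac1\eps\log\tfrac1\delta)$ already holds for quantum examples by Lemma~\ref{lemma:deltapartoflemmapac}, so it remains to prove $T=\Omega(d/\eps)$ for a cleverly chosen distribution and then sum the two. I would obtain the $d/\eps$ part by turning $\eps$-approximate PAC learning under that distribution into an \emph{exact} quantum state identification problem over a good error-correcting code, and bounding the optimal identification probability through the Pretty Good Measurement using Theorem~\ref{thm:upperboundonsqrtGram}.

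Concretely, fix a shattered set $\Sh=\{s_0,s_1,\dots,s_d\}$, apply Theorem~\ref{thm:randomlinearcode} with $n=d$ to get $k\in[d/4,d]$ and a rank-$k$ matrix $M\in\F_2^{d\times k}$ whose code $\{Ma:a\in\01^k\}$ has minimal distance $\ge d/8$, and pick a constant $c$ with $16<c$ and $c\eps\le 1$ (possible since $\eps<1/20$). Put $D(s_0)=1-c\eps$ and $D(s_i)=c\eps/d$ for $i\in[d]$, and for each $a\in\01^k$ use shattering to choose $c_a\in\C$ with $c_a(s_0)=0$ and $c_a(s_i)=(Ma)_i$. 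The code provides rigidity: if $\err_D(h,c_a)\le\eps$ then the restriction of $h$ to $s_1,\dots,s_d$ is at Hamming distance at most $d/c<d/16$ from $Ma$, so, distinct codewords being $\ge d/8$ apart, decoding $h$ to its nearest codeword recovers $a$ uniquely. Hence an $(\eps,\delta)$-PAC quantum learner running on $T$ copies of $\ket{\psi_a}=\sum_{i=0}^{d}\sqrt{D(s_i)}\,\ket{i,c_a(s_i)}$, followed by this classical decoding, is a $2^k$-outcome POVM that identifies a uniformly random $a$ with probability $\ge 1-\delta$; so $P^{opt}(\E)\ge 1-\delta$ for the ensemble $\E=\{(2^{-k},\ket{\psi_a}^{\otimes T})\}_{a\in\01^k}$.

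Next I would bound $P^{opt}(\E)$ through the PGM. The ensemble is geometrically uniform, since $\ket{\psi_a}=W_a\ket{\psi_0}$ for the commuting, $\F_2^k$-indexed unitaries $W_a$ that XOR $(Ma)_i$ into the label qubit conditioned on the first register equaling $s_i$; hence $P^{opt}(\E)=P^{PGM}(\E)=\sum_{a}\sqrt{G}(a,a)^2$ with $G(a,a')=2^{-k}\ip{\psi_a}{\psi_{a'}}^{T}$ by \cite{eldar&forney:squarerootmeasurement} (alternatively one could skip geometric uniformity and just use $P^{opt}(\E)\le\sqrt{P^{PGM}(\E)}$ from \cite{barnum:pgmmeasurement}, replacing $1-\delta$ by $(1-\delta)^2$). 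A one-line computation gives $\ip{\psi_a}{\psi_{a'}}=1-(c\eps/d)\,|M(a+a')|$, so $G=2^{-k}A$ with $A(x,y)=(f\circ M)(x+y)$ and $f(z)=(1-c\eps\,|z|/d)^{T}$. We may assume $T\le d/(e^{3}c\eps)$, since otherwise $T=\Omega(d/\eps)$ already; then Theorem~\ref{thm:upperboundonsqrtGram} applies with $m=d$, $\beta=c\eps$, and yields
$$
1-\delta\ \le\ P^{PGM}(\E)\ =\ \sum_{a\in\01^k}\Big(2^{-k/2}\sqrt{A}(a,a)\Big)^{2}\ \le\ \frac{4e}{2^{k}}\Big(1-\tfrac{\beta}{2}\Big)^{T}e^{\,22T^{2}\beta^{2}/d+2\sqrt{Td\beta}\,}.
$$

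Finally, using $\delta<1/2$, $k\ge d/4$, $(1-\beta/2)^{T}\le 1$, and $T\beta\le d/e^{3}$ (which makes $22T^{2}\beta^{2}/d=O(T\beta)$), the inequality reduces to $(d/4)\ln 2\le \ln(8e)+O(T\beta)+2\sqrt{d\,T\beta}$; read as a quadratic inequality in $\sqrt{T\beta}$, it forces $T\beta=\Omega(d)$, i.e.\ $T=\Omega(d/(c\eps))=\Omega(d/\eps)$. Summing the two bounds gives the theorem. I expect essentially all the real work to be packaged inside Theorem~\ref{thm:upperboundonsqrtGram}; the only delicate points in assembling this proof are (i) choosing $c$ so that an $\eps$-approximation forces exact codeword recovery while still keeping $\beta=c\eps\le 1$, and (ii) performing the case split on $T$ so that the hypothesis $T\le m/(e^{3}\beta)$ of that theorem is met.
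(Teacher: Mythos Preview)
Your proposal is correct and follows essentially the same route as the paper: invoke Lemma~\ref{lemma:deltapartoflemmapac} for the $\log(1/\delta)/\eps$ term; for the $d/\eps$ term, put mass $1-c\eps$ on $s_0$ and spread $c\eps$ uniformly over a shattered set indexed by codewords of a good linear code, reduce $\eps$-approximate learning to exact identification of the codeword, and bound the optimal identification probability via the PGM and Theorem~\ref{thm:upperboundonsqrtGram} (the paper fixes $c=20$; your choice of any $c\in(16,20]$ is equivalent). The only cosmetic difference is the endgame: the paper retains the $(1-\beta/2)^T$ factor and phrases the conclusion as $\Omega(\max\{d,T\eps\})\le O(\min\{T^2\eps^2/d,\sqrt{Td\eps}\})$, whereas you drop that factor and read the resulting inequality as a quadratic in $\sqrt{T\beta}$---both yield $T\beta=\Omega(d)$.
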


\begin{proof}
The $d$-independent part of the lower bound is Lemma~\ref{lemma:deltapartoflemmapac}. To prove the $d$-dependent part, define a distribution $D$ on a set $\Sh=\{s_0,\ldots,s_d\} \subseteq \01^n$ that is shattered by $\C$ as follows: $D(s_0)=1-20\eps$ and $D(s_i)=20\eps/d$ for all $i\in [d]$. 

Now consider a $[d,k,r]_2$ linear code (for $k\geq d/4$, distance $r\geq d/8$) as shown to exist in Theorem~\ref{thm:randomlinearcode} with the generator matrix $M\in \F_2^{d\times k}$ of rank~$k$. Let $\{Mx:x\in \01^k\} \subseteq \01^{d}$ be the set of codewords in this linear code; these satisfy $d_H(Mx,My)\geq d/8$ whenever $x\neq y$. For each $x\in \01^k$, let $c^x$ be a concept defined on the shattered set as: $c^x(s_0)=0$ and $c^x(s_i)=(Mx)_i$ for all $i\in [d]$. The existence of such concepts in $\C$ follows from the fact that $\Sh$ is shattered by~$\C$. From the distance property of the code, we have $\Pr_{s\sim D}[c^x(s)\neq c^y(s)]\geq \frac{20\eps}{d} \frac{d}{8}=~5\eps/2$. This in particular implies that an $(\eps,\delta)$-PAC quantum learner that tries to $\eps$-approximate a concept from $\{c^x:x\in\01^k\}$ should successfully \emph{identify} that concept with probability at least~$1-\delta$.  

We now consider the following state identification problem: for $x\in \01^k$, denote $\ket{\psi_x}=\sum_{i\in \{0,\ldots, d\}} \sqrt{D(s_i)} \ket{s_i, c^x(s_i)}$. Let the $(\eps,\delta)$-PAC quantum sample complexity be~$T$. Assume $T\leq d/(20e^3 \eps)$, since otherwise $T\geq \Omega(d/\eps)$ and the theorem follows. Suppose the learner has knowledge of the ensemble $\E=\{(2^{-k},\ket{\psi_x}^{\otimes T}):x\in \01^k\}$, and is given $\ket{\psi_x}^{\otimes T} \in \E$ for a uniformly random~$x$. The learner would like to maximize the average probability of success to identify the given state. For this problem, we prove a lower bound on $T$ using the PGM defined in Section~\ref{section:pgm}. In particular, we show that using the PGM, if a learner successfully identifies the states in $\E$, then $T=\Omega(d/\eps)$. Since the PGM is the optimal measurement\footnote{For $x\in\01^k$, define unitary $U_{c^x}:  \ket{s_i,b}\rightarrow \ket{s_i,b+c^x(s_i)}$ for all $i\in \{0,\ldots,d\}$. The ensemble $\E$ is generated by applying $\{U_{c^x}\}_{x\in \01^k}$ to $\ket{\ph}=\sum_{i\in \{0,\ldots, d\}} \sqrt{D(s_i)} \ket{s_i,0}$. View $c^x=(0,Mx) \in \01^{d+1}$ as a concatenated string where $Mx$ is a codeword of the $[d,k,r]_2$ code. Since the $2^k$ codewords of the $[d,k,r]_2$ code form a linear subspace, $\{U_{c^x}\}_{x\in \01^k}$ is an Abelian group. From the discussion in Section~\ref{section:pgm}, we conclude that the PGM is the optimal measurement for this state identification~problem.} that the learner could have performed, the result follows. The following lemma makes this lower bound rigorous and will conclude the proof of the~theorem. 

\begin{lemma}\label{lem:PGMboundpac}
For every $x\in \01^k$, let $\ket{\psi_x}=\sum_{i\in \{0,\ldots, d\}} \sqrt{D(s_i)} \ket{s_i, c^x(s_i)}$, and $\E=\{(2^{-k},\ket{\psi_x}^{\otimes T}):x\in \01^k\}$.~Then\footnote{We made no attempt to optimize the constants here.}
$$
P^{PGM} (\E)\leq  \frac{4e}{2^{d/4+T\eps}}e^{8800T^2\eps^2/d+4\sqrt{5Td\eps}}.
$$
\end{lemma}

Before we prove the lemma, we first show why it implies the theorem. Since we observed above that $P^{opt}(\E)=P^{PGM}(\E)$, a good learner satisfies $P^{PGM}(\E)=\Omega(1)$ (say for $\delta=1/4$), which in turn implies
$$
\Omega(\max\{d,T\eps\})\leq O(\min\{T^2\eps^2/d,\sqrt{Td\eps }\}).
$$
Note that if $T\eps$ maximizes the left-hand side, then $d\leq T\eps$ and hence $T\geq \Omega(d/\eps)$. The remaining cases are $\Omega(d)\leq T^2\eps^2/d$ and $\Omega(d)\leq \sqrt{Td\eps}$. Both these statements give us $T\geq \Omega(d/\eps)$. Hence the theorem follows, and it remains to prove Lemma~\ref{lem:PGMboundpac}:
\begin{proof}
Let $\E'=\{2^{-k/2}\ket{\psi_x}^{\otimes T}:x\in \01^k\}$ and $G$ be the $2^k\times 2^k$ Gram matrix for $\E'$.  As we saw in Section~\ref{section:pgm}, the success probability of identifying the states in the ensemble $\E$ using the PGM is
$$
P^{PGM}(\E)=\sum_{x\in \01^k} \sqrt{G}({x,x})^2.
$$
For all $x,y\in \01^k$, the entries of the Gram matrix $G$ can be written as:
\begin{align*}
G(x,y)=\frac{1}{2^k}\ip{\psi_x}{\psi_y}^T&=\frac{1}{2^k}\Big( (1-20\eps)+\frac{20\eps}{d} \sum_{i=1}^d \ip{c^x(s_i)}{c^y(s_i)}\Big)^T \\
&=\frac{1}{2^k} \Big((1-20\eps)+\frac{20\eps}{d} (d-d_H(Mx,My))\Big)^T\\
&=\frac{1}{2^k}\Big(1-\frac{20\eps }{d}d_H(Mx,My)\Big)^T,
\end{align*}
where $Mx$, $My \in \01^{d}$ are codewords  in the linear code defined earlier. Define $f:\01^{d}\rightarrow \R$ as $f(z)=(1-\frac{20\eps}{d}|z|)^T$, and let $A(x,y)=(f\circ M)(x+ y)$ for $x,y\in \01^k$. Note that $G=A/2^k$. Since we assumed $T\leq d/(20e^3 \eps)$, we can use Theorem~\ref{thm:upperboundonsqrtGram} (by choosing $m=d$ and $\beta=20\eps$) to upper bound the success probability of successfully identifying the states in the ensemble $\E$ using the PGM. 
\begin{align*}
P^{PGM}(\E)&=\sum_{x\in \01^k} \sqrt{G}(x,x)^2 \\
&=\frac{1}{2^k}\sum_{x\in \01^k} \sqrt{A}(x,x)^2 \tag{since $G=A/2^k$}\\
&\leq  \frac{4e}{2^{k}}\Big(1-\frac{\beta}{2}\Big)^{T} e^{22T^2\beta^2/d+2\sqrt{Td\beta}}  \tag{using Theorem~\ref{thm:upperboundonsqrtGram}} \\
&=  \frac{4e}{2^{k}}\Big(1-10\eps\Big)^{T} e^{8800T^2\eps^2/d+4\sqrt{5Td\eps}}     \tag{substituting $\beta=20\eps$}\\
&\leq  \frac{4e}{2^{k+T\eps}}e^{8800T^2\eps^2/d+4\sqrt{5Td\eps}}   \tag{using $(1-10\eps)^T\leq e^{-10\eps T}\leq 2^{-\eps T}$}\\
\end{align*}
The lemma follows by observing that $k\geq d/4$.
\end{proof}
\end{proof}

\subsection{Optimal lower bound for quantum agnostic learning}

We now use the same approach to obtain a tight lower bound on quantum sample complexity in the \emph{agnostic} setting.

\begin{theorem}
Let $\C$ be a concept class with VC-dim$(\C)=d$, for sufficiently large $d$. Then for every $\delta\in (0,1/2)$ and $\eps \in(0,1/10)$, every $(\eps,\delta)$-agnostic quantum learner for~$\C$ has sample complexity 
$\Omega\Big(\frac{d}{\eps^2} + \frac{1}{\eps^2}\log \frac{1}{\delta}\Big)$.
\end{theorem}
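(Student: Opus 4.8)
The plan is to follow the PAC argument of Section~\ref{section:optimalpaclowerbounds} almost verbatim, replacing the PAC distribution on $\{s_0,\dots,s_d\}$ by an agnostic-style family of distributions on a shattered set $\Sh=\{s_1,\dots,s_d\}$, and feeding the resulting Gram matrix into Theorem~\ref{thm:upperboundonsqrtGram}. The term $\Omega\big(\frac{1}{\eps^2}\log\frac1\delta\big)$ is exactly Lemma~\ref{lemma:deltapartoflemmaagnostic}, so the whole task is the bound $T=\Omega(d/\eps^2)$.

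First I would fix a shattered set $\Sh=\{s_1,\dots,s_d\}$ and, as in Theorem~\ref{thm:optimalpaclowerbound}, a linear code $\{Mx:x\in\01^k\}$ with $M\in\F_2^{d\times k}$ of rank $k=\Omega(d)$ and relative distance a constant $\rho$ (here we need $\rho$ a bit above $1/5$, which a random linear code of rate $1/4$ provides). Fixing a constant $\gamma$ with $2/\rho<\gamma<1/\eps$, define for each codeword the distribution $D_{Mx}(i,\ell)=\frac{1}{2d}\big(1+(-1)^{(Mx)_i+\ell}\gamma\eps\big)$ on $[d]\times\01$ (valid since $\gamma\eps<1$), and let $c^x\in\C$ be a concept with $c^x(s_i)=(Mx)_i$ for all $i\in[d]$ (these exist since $\Sh$ is shattered). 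A one-line computation gives $\err_{D_{Mx}}(c)-\opt_{D_{Mx}}(\C)=\frac{\gamma\eps}{d}\,d_H\big((c(s_1),\dots,c(s_d)),Mx\big)$ for every $c\in\C$, with minimum $0$ attained by $c^x$. Hence an $(\eps,\delta)$-agnostic learner's output $h$ has $d_H(h|_\Sh,Mx)\le d/\gamma<\rho d/2$, so $h|_\Sh$ decodes uniquely to $Mx$; composing the learner with this decoding, $T$ quantum agnostic examples for $D_{Mx}$ identify $x$ from $\ket{\psi_x}^{\otimes T}$ with probability $\ge1-\delta$, where $\ket{\psi_x}=\sum_{i\in[d],\ell\in\01}\sqrt{D_{Mx}(i,\ell)}\,\ket{i,\ell}$.

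The one genuinely new computation is the Gram matrix of these states. Grouping the $d$ coordinates by whether $(Mx)_i=(My)_i$, and using $\sqrt{(1+\gamma\eps)(1-\gamma\eps)}=\sqrt{1-\gamma^2\eps^2}$ on the coordinates where they differ, one gets $\ip{\psi_x}{\psi_y}=1-\frac{d_H(Mx,My)}{d}\lambda$ with $\lambda:=1-\sqrt{1-\gamma^2\eps^2}\in[\gamma^2\eps^2/2,\ \gamma^2\eps^2]$. Thus for $\E=\{(2^{-k},\ket{\psi_x}^{\otimes T})\}$ the Gram matrix of the renormalized ensemble is $G=A/2^k$ with $A(x,y)=(f\circ M)(x+y)$, $f(z)=(1-\frac{\lambda}{d}|z|)^T$ --- precisely the object of Theorem~\ref{thm:upperboundonsqrtGram} with $m=d$, $\beta=\lambda\in(0,1]$. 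If $T>d/(e^3\lambda)$ then already $T=\Omega(d/(\gamma^2\eps^2))=\Omega(d/\eps^2)$; otherwise Theorem~\ref{thm:upperboundonsqrtGram} applies, and squaring its bound and summing over $x$ exactly as in Lemma~\ref{lem:PGMboundpac} gives
\[
P^{PGM}(\E)=\frac{1}{2^k}\sum_{x\in\01^k}\sqrt{A}(x,x)^2\le\frac{4e}{2^k}\Big(1-\frac{\lambda}{2}\Big)^{T}e^{22T^2\lambda^2/d+2\sqrt{Td\lambda}}\le 4e\cdot2^{-\Omega(d)-\Omega(\eps^2T)+O(T^2\eps^4/d+\eps\sqrt{Td})},
\]
using $k=\Omega(d)$, $(1-\lambda/2)^T\le e^{-\lambda T/2}\le 2^{-\Omega(\eps^2T)}$ and $\lambda\le\gamma^2\eps^2$. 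Since $\E$ is geometrically uniform --- $\ket{\psi_x}^{\otimes T}=U_{Mx}^{\otimes T}\big(\sum_{i,\ell}\sqrt{D_{\mathbf{0}}(i,\ell)}\ket{i,\ell}\big)^{\otimes T}$ for the label-flip unitaries $U_a:\ket{i,\ell}\mapsto\ket{i,\ell+a_i}$, and $\{U_{Mx}^{\otimes T}\}$ is Abelian as the codewords form a subspace --- Eldar--Forney~\cite{eldar&forney:squarerootmeasurement} gives $P^{PGM}(\E)=P^{opt}(\E)\ge1-\delta=\Omega(1)$ (alternatively one may just use Barnum--Knill, $P^{PGM}\ge(P^{opt})^2$). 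Comparing with the displayed upper bound forces $d=O(T^2\eps^4/d+\eps\sqrt{Td})$, and in either case for which term dominates a short calculation gives $d=O(\eps^2T)$, i.e.\ $T=\Omega(d/\eps^2)$.

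\textbf{Main obstacle.} The only real departure from the PAC proof is this Gram-matrix calculation: since the agnostic example is a superposition over all of $[d]\times\01$ with two distinct nonzero amplitudes (rather than a ``$\pm$ one codeword bit'' pattern), the inner product produces a factor $\sqrt{1-\gamma^2\eps^2}$ and hence an effective parameter $\lambda=\Theta(\eps^2)$ instead of the $\Theta(\eps)$ of the PAC case --- exactly what turns the $d/\eps$ bound into a $d/\eps^2$ bound. The attendant bookkeeping is to choose the constant $\gamma$ and the code's relative distance $\rho$ so that simultaneously (i) the $D_{Mx}$ are legal distributions ($\gamma\eps<1$, which needs $\eps<\rho/2$, so $\rho>1/5$ given $\eps<1/10$), (ii) $\eps$-approximation forces unique decoding ($\gamma>2/\rho$), and (iii) the hypotheses $\beta\le1$ and $T\le m/(e^3\beta)$ of Theorem~\ref{thm:upperboundonsqrtGram} are met; everything after that is identical to Lemma~\ref{lem:PGMboundpac}.
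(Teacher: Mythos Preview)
Your proposal is correct and follows essentially the same approach as the paper: the same agnostic distributions $D_x(i,\ell)=\frac{1}{2d}(1+(-1)^{(Mx)_i+\ell}\alpha)$ indexed by codewords, the same Gram-matrix computation yielding $\ip{\psi_x}{\psi_y}=1-\frac{d_H(Mx,My)}{d}(1-\sqrt{1-\alpha^2})$, and the same application of Theorem~\ref{thm:upperboundonsqrtGram} with $\beta=1-\sqrt{1-\alpha^2}=\Theta(\eps^2)$ (the paper takes $\alpha=20\eps$, you take $\alpha=\gamma\eps$). Your care about the constants---insisting on a code of relative distance $\rho>1/5$ so that some fixed $\gamma\in(2/\rho,10)$ keeps $\gamma\eps<1$ for all $\eps<1/10$---is a genuine refinement: the paper's choice $\alpha=20\eps$ with the code of Theorem~\ref{thm:randomlinearcode} (relative distance $\ge 1/8$) actually makes $D_x$ take negative values once $\eps>1/20$, so strictly speaking the paper's proof as written only covers $\eps<1/20$; your version with a GV-type code of slightly higher distance cleanly handles the full stated range.
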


\begin{proof}
The $d$-independent part of the lower bound is Lemma~\ref{lemma:deltapartoflemmaagnostic}. For the $d$-dependent term in the lower bound, consider a $[d,k,r]_2$ linear code (for $k\geq d/4$, distance $r\geq d/8$) as shown to exist in Theorem~\ref{thm:randomlinearcode}, with generator matrix $M\in \F_2^{d\times k}$ of rank~$k$. Let $\{Mx:x\in \01^k\} \subseteq \01^{d}$ be the set of $2^k$ codewords in this linear code; these satisfy $d_H(Mx,My)\geq d/8$ whenever $x\neq y$. To each codeword $x\in\01^k$ we associate a distribution $D_x$ as follows:
$$
D_x(s_i,b)=\frac{1}{d} \Big (\frac{1}{2}+\frac{1}{2}(-1)^{(Mx)_i+b}\alpha \Big), \qquad  \text{for }  (i,b)\in [d]\times \01,
$$ 
where $\Sh=\{s_1,\ldots,s_d\}$ is a set that is shattered by $\C$, and $\alpha$ is a parameter which we shall pick later. Let $c^{x}\in\C$ be a concept that labels $\Sh$ according to $Mx\in \01^d$. The existence of such $c^x\in\C$ follows from the fact that $\Sh$ is shattered by $\C$. Note that $c^{x}$ is the minimal-error concept in $\C$ w.r.t.\ $D_x$. A learner that labels $\Sh$ according to some string $\ell\in\01^d$ has additional error $d_H(Mx,\ell)\cdot \alpha/d$ compared to $c^x$. This in particular implies that an $(\eps,\delta)$-agnostic quantum learner has to find (with probability at least $1-\delta$) an $\ell$ such that $d_H(Mx,\ell)\leq d\eps/\alpha$. We pick $\alpha=20\eps$ and we get $d_H(Mx,\ell)\leq d/20$. However, since $Mx$ was a codeword of a $[d,k,r]_2$ code with distance $r\geq d/8$, finding an $\ell$ satisfying $d_H(Mx,\ell)\leq d/20$ is equivalent to \emph{identifying} $Mx$, and hence~$x$.

Now consider the following state identification problem: let $\ket{\psi_x}=\sum_{(i,b)\in [d]\times \01} \sqrt{D_x(s_i,b)} \ket{s_i, b}$ for $x\in \01^k$. Let the $(\eps,\delta)$-agnostic quantum sample complexity be~$T$. Assume $T\leq d/(100e^3 \eps^2)$, since otherwise $T\geq \Omega(d/\eps^2)$ and the theorem follows.  Suppose the learner has knowledge of the ensemble $\E=\{(2^{-k},\ket{\psi_x}^{\otimes T}):x\in \01^k\}$, and is given $\ket{\psi_{x}}^{\otimes T} \in \E$ for uniformly random~$x$. The learner would like to maximize the average probability of success to identify the given state. For this problem, we prove a lower bound on $T$ using the PGM defined in Section~\ref{section:pgm}. In particular, we show that using the PGM, if a learner successfully identifies the states in $\E$, then $T=\Omega(d/\eps^2)$. Since the PGM is the optimal measurement\footnote{For $x\in\01^k$, define unitary $U_{c^x}=\sum_{i\in [d]} \ketbra{s_i}{s_i}\otimes X^{(Mx)_i}$, where $X$ is the NOT-gate, so $X^{(Mx)_i}\ket{b}=\ket{b+(Mx)_i}$ for $b\in \01$. The ensemble $\E$ is generated by applying $\{U_{c^x}\}_{x\in \01^k}$ to $\ket{\ph}=\frac{1}{\sqrt{d}}\sum_{(i,b)\in [d]\times \01} \sqrt{\frac{1}{2}+\frac{1}{2}(-1)^{b}\alpha}\ket{s_i,b}$. Since the $2^k$ codewords of the $[d,k,r]_2$ code form a linear subspace, $\{U_{c^x}\}_{x\in \01^k}$ is an Abelian group. From the discussion in Section~\ref{section:pgm}, we conclude that the PGM is the optimal measurement for this state identification problem.} that the learner could have performed, the result follows. The following lemma makes this lower bound rigorous and will conclude the proof of the~theorem. 
 
\begin{lemma}\label{lem:PGMboundagn}
For $x\in \01^k$, let  $\ket{\psi_x}=\sum_{(i,b)\in [d]\times \01} \sqrt{D_x(s_i,b)} \ket{s_i, b}$, and $\E=\{(2^{-k},\ket{\psi_x}^{\otimes T}):x\in \01^k\}$. Then
$$
P^{PGM} (\E)\leq \frac{4e}{e^{(d\ln 2)/4+25T\eps^2}} e^{220000T^2\eps^4/d+20\sqrt{Td \eps^2}}.
$$
 \end{lemma}
Before we prove the lemma, we first show why it implies the theorem. Since we observed above that $P^{opt}(\E)=P^{PGM}(\E)$, a good learner satisfies $P^{PGM}(\E)=\Omega(1)$ (say for $\delta=1/4$), which in turn implies
$$
\Omega(\max\{d,T\eps^2\})\leq O(\min\{T^2\eps^4/d,\sqrt{Td\eps^2}\}).
$$
Like in the proof of Theorem~\ref{thm:optimalpaclowerbound}, this implies a lower bound of $T=\Omega(d/\eps^2)$ and proves the theorem. It remains to prove Lemma~\ref{lem:PGMboundagn}:

\begin{proof}
Let $\E'=\{2^{-k/2}\ket{\psi_{x}}^{\otimes T}:x\in \01^k\}$ and $G$ be the $2^k\times 2^k$ Gram matrix for the set $\E'$. As we saw in Section~\ref{section:pgm}, the success probability of identifying the states in the ensemble $\E$ using the PGM is
$$
P^{PGM}(\E)=\sum_{x\in \01^k} \sqrt{G}({x,x})^2.
$$
For all $x,y\in \01^k$, the entries of $G$ can be written as:
\begin{align*}
2^k\cdot G(x,y)&=\ip{\psi_{x}}{\psi_{y}}^T\\
&=\Big(\sum_{(i,b)\in [d]\times \01} \sqrt{D_x(i,b)D_y(i,b)}\Big)^T\\ 	%
&=\Big(\frac{1}{2d}\sum_{(i,b)\in [d]\times \01} \sqrt{(1+10\eps(-1)^{(Mx)_i+b})(1+10\eps(-1)^{(My)_i+b})}\Big)^T \\
&=\Big(\frac{1}{2d}\sum_{\substack{(i,b):\\ (Mx)_i=(My)_i}} (1+10\eps(-1)^{(Mx)_i+b})+\frac{1}{2d}\sum_{\substack{(i,b):\\(Mx)_i\neq (My)_i}} \sqrt{1-100\eps^2}\Big)^T\\
&=\Big(\frac{d-d_H(Mx,My)}{d}+\frac{\sqrt{1-100\eps^2}}{d}d_H(Mx,My)\Big)^T\\
&=\Big(1-\frac{1-\sqrt{1-100\eps^2}}{d}d_H(Mx,My)\Big)^T.
\end{align*}
where we used $\alpha=20\eps$ in the third equality.

Let $\beta=1-\sqrt{1-100\eps^2}$, which is at most $1$ for $\eps\leq 1/10$. 
Define $f:\01^{d}\rightarrow \R$ as $f(z)=(1-\frac{\beta}{d}|z|)^T$, and let $A(x,y)=(f\circ M)(x+y)$ for $x,y\in \01^k$. Then $G=A/2^k$. Note that  $T\leq d/(100e^3\eps^2)\leq d/(e^3\beta)$ (the first inequality is by assumption and the second inequality follows for $\eps\leq 1/10$ and $\beta\leq 1$). Since we assumed $T\leq d/(100e^3 \eps^2)$, we can use Theorem~\ref{thm:upperboundonsqrtGram} (by choosing $m=d$ and $\beta=1-\sqrt{1-100\eps^2}$) to upper bound the success probability of identifying the states in the ensemble $\E$: 
\begin{align*}
P^{PGM}(\E)&=\sum_{x\in \01^k} \sqrt{G}(x,x)^2 \\
&=\frac{1}{2^k}\sum_{x\in \01^k} \sqrt{A}(x,x)^2 \tag{since $G=A/2^k$}\\
&\leq  \frac{4e}{2^{k}}\Big(1-\frac{\beta}{2}\Big)^{T} e^{22T^2\beta^2/d+2\sqrt{Td\beta}}  \tag{using Theorem~\ref{thm:upperboundonsqrtGram}} \\
&\leq   \frac{4e}{2^{k}}\Big(1-\frac{\beta}{2}\Big)^{T} e^{220000T^2\eps^4/d+20\sqrt{Td \eps^2}}     \tag{using $\beta=1-\sqrt{1-100\eps^2}\leq 100\eps^2$}\\
&\leq \frac{4e}{2^k}\Big(1-25\eps^2\Big)^{T} e^{220000T^2\eps^4/d+20\sqrt{Td \eps^2}}    \tag{using $\sqrt{1-100\eps^2} \leq 1-50\eps^2$}\\
&\leq \frac{4e}{e^{k\ln 2+25T\eps^2}} e^{220000T^2\eps^4/d+20\sqrt{Td \eps^2}}   \tag{using $(1-x)^t\leq e^{-xt}$ for $x,t\geq 0$}.
\end{align*}
The lemma follows by observing that $k\geq d/4$.
\end{proof} 
\end{proof}

\subsection{Additional results}

In this section we mention two additional results that can also be obtained using Theorem~\ref{thm:upperboundonsqrtGram}.

\subsubsection{Quantum PAC sample complexity under random classification noise}
In the theorem below, we show a lower bound on the quantum PAC sample complexity under the random classification noise model with noise rate~$\eta$. Recall that in this model, for every $c\in \C$ and distribution $D$, $\eps,\delta > 0$, given access to copies of the $\eta$-noisy state, 
$$
 \sum_{x\in\01^n}\sqrt{(1-\eta)D(x)}\ket{x,c(x)}+\sqrt{\eta D(x)}\ket{x,1-c(x)},
$$
a $(\eps,\delta)$-PAC quantum learner is required to output an hypothesis $h$ such that $\err_D(c,h)\leq \eps$ with probability at least $1-\delta$.

\begin{theorem}
Let $\C$ be a concept class with VC-dim$(\C)=d+1$, for sufficiently large $d$. Then for every $\delta\in (0,1/2)$, $\eps \in (0,1/20)$ and $\eta \in (0,1/2)$, every $(\eps,\delta)$-PAC quantum learner for $\C$ in the PAC setting with random classification noise rate~$\eta$, has sample complexity 
$\Omega\Big(\frac{d}{(1-2\eta)^2\eps} + \frac{\log(1/\delta)}{(1-2\eta)^2\eps}\Big)$.
\end{theorem}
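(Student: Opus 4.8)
The plan is to follow the state-identification proof of Theorem~\ref{thm:optimalpaclowerbound} almost verbatim, the only new ingredient being the effect of the noise on the pairwise inner products of the quantum examples; the relevant fact is that $(1-2\eta)^2=(1-2\sqrt{\eta(1-\eta)})(1+2\sqrt{\eta(1-\eta)})$, so that $\frac{1}{2}(1-2\eta)^2\le 1-2\sqrt{\eta(1-\eta)}\le (1-2\eta)^2$, i.e.\ $1-2\sqrt{\eta(1-\eta)}=\Theta((1-2\eta)^2)$. For the $d$-independent part I would prove the analogue of Lemma~\ref{lemma:deltapartoflemmapac}: take $c_1,c_2\in\C$ that agree on $x_1$ and differ on $x_2$, set $D(x_1)=1-\eps$, $D(x_2)=\eps$, and let $\ket{\psi_i}$ be the $T$-fold tensor power of the $\eta$-noisy quantum example for $c_i$. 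A single noisy example of $c_1$ has inner product $1-\eps+2\eps\sqrt{\eta(1-\eta)}=1-\eps(1-2\sqrt{\eta(1-\eta)})$ with that of $c_2$ (the $x_1$-block contributes $1-\eps$, the $x_2$-block contributes $2\eps\sqrt{\eta(1-\eta)}$), so $\ip{\psi_1}{\psi_2}=(1-\eps(1-2\sqrt{\eta(1-\eta)}))^T$ and Fact~\ref{fact:2statedistinguishingbound} forces $T=\Omega(\log(1/\delta)/((1-2\eta)^2\eps))$.

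For the $d$-dependent part I would reuse the entire setup of Theorem~\ref{thm:optimalpaclowerbound}: a shattered set $\Sh=\{s_0,\ldots,s_d\}$, the distribution $D(s_0)=1-20\eps$ and $D(s_i)=20\eps/d$, a $[d,k,r]_2$ code (Theorem~\ref{thm:randomlinearcode}) with generator matrix $M$ of rank $k\ge d/4$ and distance $r\ge d/8$, and concepts $c^x$ labelling $\Sh$ according to $(0,Mx)$. Since the distribution over inputs is unchanged and the noise only flips labels, the decoding step is identical: distinct codeword-concepts are $\ge 5\eps/2$ apart under $D$, so an $(\eps,\delta)$-PAC learner must identify $x$ with probability $\ge 1-\delta$. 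The ensemble $\E=\{(2^{-k},\ket{\psi_x}^{\otimes T})\}$, with $\ket{\psi_x}=\sum_{i}(\sqrt{(1-\eta)D(s_i)}\ket{s_i,c^x(s_i)}+\sqrt{\eta D(s_i)}\ket{s_i,1-c^x(s_i)})$, is still generated by an Abelian group of label-XOR unitaries applied to the single reference state $\sum_{i}(\sqrt{(1-\eta)D(s_i)}\ket{s_i,0}+\sqrt{\eta D(s_i)}\ket{s_i,1})$, hence is geometrically uniform, so the PGM is optimal and a good learner (say for $\delta=1/4$) forces $P^{PGM}(\E)=\Omega(1)$.

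The one computation that changes is the Gram matrix: each coordinate on which $c^x,c^y$ agree contributes $D(s_i)$ to $\ip{\psi_x}{\psi_y}$, while each coordinate on which they differ contributes $2\sqrt{\eta(1-\eta)}D(s_i)$, so $\ip{\psi_x}{\psi_y}=1-\frac{\beta}{d}d_H(Mx,My)$ with $\beta:=20\eps(1-2\sqrt{\eta(1-\eta)})\in(0,1]$. Thus $G=A/2^k$ with $A(x,y)=(f\circ M)(x+y)$ and $f(z)=(1-\frac{\beta}{d}|z|)^T$, exactly the hypothesis of Theorem~\ref{thm:upperboundonsqrtGram} with $m=d$; assuming $T\le d/(e^3\beta)$ (otherwise $T>d/(e^3\beta)=\Omega(d/((1-2\eta)^2\eps))$ already), that theorem gives $P^{PGM}(\E)=\frac{1}{2^k}\sum_x\sqrt{A}(x,x)^2\le \frac{4e}{2^k}(1-\beta/2)^T e^{22T^2\beta^2/d+2\sqrt{Td\beta}}$. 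Using $k\ge d/4$ and $\beta=\Theta((1-2\eta)^2\eps)$, the requirement $P^{PGM}(\E)=\Omega(1)$ becomes $\Omega(\max\{d,\,T(1-2\eta)^2\eps\})\le O(\min\{T^2(1-2\eta)^4\eps^2/d,\,\sqrt{Td(1-2\eta)^2\eps}\})$, and the same three-case analysis as in Theorem~\ref{thm:optimalpaclowerbound} yields $T=\Omega(d/((1-2\eta)^2\eps))$; together with the $d$-independent part this proves the theorem. I do not expect any serious obstacle: the only points requiring minor care are tracking the factor $1-2\sqrt{\eta(1-\eta)}$ and its two-sided comparison to $(1-2\eta)^2$, confirming that the noisy reference state above is a legitimate common pre-image so that the XOR-group action yields geometric uniformity, and checking the side conditions $\beta\le 1$ and $T\le d/(e^3\beta)$ needed to invoke Theorem~\ref{thm:upperboundonsqrtGram}.
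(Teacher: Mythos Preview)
Your proposal is correct and follows essentially the same route as the paper, which simply remarks that the proof of Lemma~\ref{lemma:deltapartoflemmapac} and Theorem~\ref{thm:optimalpaclowerbound} goes through verbatim once one uses the inequality $1-2\sqrt{\eta(1-\eta)}\le (1-2\eta)^2$. Your write-up is in fact more detailed than the paper's sketch: the Gram-matrix computation with $\beta=20\eps(1-2\sqrt{\eta(1-\eta)})$, the geometric-uniformity check, and the case split on $T\le d/(e^3\beta)$ are exactly the points the paper leaves implicit.
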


One can use exactly the same proof technique as in Lemma~\ref{lemma:deltapartoflemmapac} and Theorem~\ref{thm:optimalpaclowerbound} to prove this, with only the additional inequality $1-2\sqrt{\eta(1-\eta)}\leq (1-2\eta)^2$, which holds for $\eta\leq 1/2$. We omit the details of the calculation.

\subsubsection{Distinguishing codeword states}
\label{section:distinguisingcodeword}
Ashley Montanaro (personal communication) alerted us to the following interesting special case of our PGM-based result.

Consider an $[n,k,d]_2$ linear code $\{Mx:x\in \01^k\}$, where $M\in \F^{n\times k}_2$ is the rank-$k$ generator matrix of the code, $k=\Omega(n)$, and distinct codewords have Hamming distance at least~$d$.\footnote{Note that throughout this paper $\C$ was a concept class in $\01^n$ and $d$ was the VC dimension of $\C$. The use of $n,d$ in this section has been changed to conform to the convention in coding theory.} For every $x\in \01^k$, define a \emph{codeword state} $\ket{\psi_x}=\frac{1}{\sqrt{n}} \sum_{i\in [n]} \ket{i,(Mx)_i}$. These states form an example of a \emph{quantum fingerprinting} scheme~\cite{bcww:fp}: $2^k$ states whose pairwise inner products are bounded away from~1. How many copies do we need to identify one such~fingerprint?

Let $\E=\{(2^{-k},\ket{\psi_x}):x\in \01^k\}$ be an ensemble of codeword states. Consider the following task: given $T$ copies of an unknown state drawn uniformly from $\E$, we are required to identify the state with probability~$\geq~4/5$. From Holevo's theorem one can easily obtain a lower bound of $T=\Omega(k / \log n)$ copies, since the learner should obtain $\Omega(k)$ bits of information (i.e., identify $k$-bit string~$x$ with probability~$\geq~4/5$), while each copy of the codeword state gives at most $\log n$ bits of information.
 In the theorem below, we improve that $\Omega(k / \log n)$ to the optimal $\Omega(k)$ for constant-rate codes.

\begin{theorem}\label{thm:codewordstates}
Let $\E=\{\ket{\psi_x}=\frac{1}{\sqrt{n}} \sum_{i\in [n]} \ket{i,(Mx)_i}:x\in \01^k\}$, where $M\in \F^{n\times k}_2$ is the generator matrix of an $[n,k,d]_2$ linear code with $k=\Omega(n)$. Then $\Omega(k)$ copies of an unknown state from $\E$ (drawn uniformly at random) are necessary to be able to identify that state with probability at least~$4/5$.
\end{theorem}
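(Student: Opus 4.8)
The plan is to reuse the Pretty Good Measurement argument of Theorem~\ref{thm:optimalpaclowerbound} essentially verbatim, observing that the present setting is exactly the ``$\beta=1$'' instance of Theorem~\ref{thm:upperboundonsqrtGram}. Note that here the task is \emph{exact} identification of the codeword $x$ (with success probability $\geq 4/5$), so—unlike in the PAC and agnostic theorems—the minimum distance $d$ of the code plays no role; I will only use $\mathrm{rank}(M)=k$ and the constant rate $k=\Omega(n)$.

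First I would dispose of a trivial case: if $T> n/e^3$ then $T> n/e^3\geq k/e^3=\Omega(k)$ and the theorem holds, so from now on assume $T\leq n/e^3$. Let $\E=\{(2^{-k},\ket{\psi_x}^{\otimes T}):x\in\01^k\}$. As in Section~\ref{section:optimalpaclowerbounds}, the ensemble $\{\ket{\psi_x}^{\otimes T}\}$ is geometrically uniform: with $U_x=\sum_{i\in[n]}\ketbra{i}{i}\otimes X^{(Mx)_i}$ (where $X$ is the NOT gate) one has $\ket{\psi_x}=U_x\ket{\ph}$ for $\ket{\ph}=\frac{1}{\sqrt{n}}\sum_{i\in[n]}\ket{i,0}$, and $\{U_x^{\otimes T}\}_{x\in\01^k}$ is an Abelian group because $\{Mx:x\in\01^k\}$ is a linear subspace. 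Hence by Eldar--Forney the PGM is optimal for $\E$, so any learner succeeds with probability at most $P^{PGM}(\E)$, and it suffices to show that $P^{PGM}(\E)=\Omega(1)$ forces $T=\Omega(k)$. (One may also skip optimality and just use $P^{opt}(\E)^2\leq P^{PGM}(\E)$.)

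Next I would compute the Gram matrix. Since $\ip{\psi_x}{\psi_y}=1-d_H(Mx,My)/n=1-|M(x+y)|/n$, the Gram matrix $G$ of $\E'=\{2^{-k/2}\ket{\psi_x}^{\otimes T}:x\in\01^k\}$ is $G(x,y)=\frac{1}{2^k}(f\circ M)(x+y)$ with $f:\01^n\to\R$ given by $f(z)=(1-|z|/n)^T$. This is precisely the hypothesis of Theorem~\ref{thm:upperboundonsqrtGram} with $m=n$ and $\beta=1$ (admissible since $\beta\in(0,1]$ and $T\leq n/e^3=m/(e^3\beta)$ by the reduction above). Writing $A=2^kG$, Theorem~\ref{thm:upperboundonsqrtGram} gives $\sqrt{A}(x,x)\leq \frac{2\sqrt{e}}{2^{k/2}}(1/2)^{T/2}e^{11T^2/n+\sqrt{Tn}}$ for every $x$, so, exactly as in the proof of Lemma~\ref{lem:PGMboundpac},
$$
P^{PGM}(\E)=\sum_{x\in\01^k}\sqrt{G}(x,x)^2=\frac{1}{2^k}\sum_{x\in\01^k}\sqrt{A}(x,x)^2\leq \frac{4e}{2^{k+T}}\,e^{22T^2/n+2\sqrt{Tn}}.
$$

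Finally I would conclude by the same case analysis as in Theorem~\ref{thm:optimalpaclowerbound}: if $P^{PGM}(\E)=\Omega(1)$, taking logarithms gives $\Omega(k+T)\leq O(T^2/n+\sqrt{Tn})$; since $T\leq n/e^3$ we have $T^2/n=O(T)$, and since $T\leq n$ we have $T=O(\sqrt{Tn})$, so this collapses to $\Omega(k)\leq O(\sqrt{Tn})$, i.e.\ $T=\Omega(k^2/n)$, which is $\Omega(k)$ because $k=\Omega(n)$. I do not expect a real obstacle here: Theorem~\ref{thm:upperboundonsqrtGram} does all the work. The two points that need a little care are (i) that we invoke that theorem at the boundary value $\beta=1$, which is exactly why the ``WLOG $T\leq n/e^3$'' reduction is there, and (ii) checking that the constants in the last display still close up—because $22/(e^3\ln 2)>1$, the $T$ terms on the two sides do not cancel on their own, so one genuinely needs the $\sqrt{Tn}$ term together with $k=\Omega(n)$ to finish.
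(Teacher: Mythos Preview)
Your proposal is correct and follows essentially the same approach as the paper's own proof sketch: compute the Gram matrix, apply Theorem~\ref{thm:upperboundonsqrtGram} with $m=n$ and $\beta=1$ to get $P^{PGM}(\E)\leq \frac{4e}{2^{k+T}}e^{22T^2/n+2\sqrt{Tn}}$, and conclude $T=\Omega(k^2/n)=\Omega(k)$. Your reduction to $T\leq n/e^3$ is in fact slightly more careful than the paper's ``assume $T\leq n$,'' since that is what the hypothesis of Theorem~\ref{thm:upperboundonsqrtGram} actually demands; and your point~(ii) correctly observes that the $T$ terms do not cancel by constants alone, so one must use $T^2/n\leq\sqrt{Tn}$ (valid for $T\leq n$) to finish.
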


One can use exactly the proof technique of Theorem~\ref{thm:optimalpaclowerbound} to prove the theorem. Suppose we are given $T$ copies of the unknown codeword state. Assume $T\leq n$, since otherwise $T\geq n\geq \sqrt{kn}$ and the theorem follows. Observe that the Gram matrix $G$ for $\E'=\{2^{-k/2}\ket{\psi_x}^{\otimes T}:x\in \01^k\}$ can be written as $G(x,y)=\frac{1}{2^k}\Big(1-\frac{|M(x+y)|}{n}\Big)^T$ for $x,y\in \01^k$. Using Theorem~\ref{thm:upperboundonsqrtGram} (choosing $\beta=1$ and $m=n$) to upper bound the success probability of successfully identifying the states in the ensemble $\E$ using the PGM, we obtain
$$
P^{PGM}(\E)\leq  \frac{4e}{2^{k+T}} e^{22T^2/n+2\sqrt{Tn}}  .
$$
As in the proof of Theorem~\ref{thm:optimalpaclowerbound}, this implies the lower bound of Theorem~\ref{thm:codewordstates}. We omit the details of the calculation.

\section{Conclusion}
\label{section:conclusion}
The main result of this paper is that quantum examples give no significant improvement over the usual random examples in passive, distribution-independent settings. Of course, these negative results do not mean that quantum machine learning is useless.  In our introduction we already mentioned improvements from quantum examples for learning under the uniform distribution; improvements from using quantum membership queries; and improvements in time complexity based on quantum algorithms like Grover's and HHL. Quantum machine learning is still in its infancy, and we hope for many more positive results.

We end by identifying a number of open questions for future work:
\begin{itemize}
\item We gave lower bounds on sample complexity for the rather benign random classification noise.  What about other noise models, such a \emph{malicious} noise?
\item What is the quantum sample complexity for learning concepts whose range is $[k]$ rather than $\01$, for some $k>2$? Even the \emph{classical} sample complexity is not fully determined yet~\cite[Section~29.2]{shwartz&david:learningbook}.
\item Classically, it is still an open question whether the $\log(1/\eps)$-factor in the upper bound of\cite{blumer:optimalpacupper} for $(\eps,\delta)$-\emph{proper} PAC learning is necessary. A weaker result (possibly easier to prove) would be to give a $(\eps,\delta)$-\emph{quantum} proper PAC learner without this $\log(1/\eps)$-factor.
\item In the introduction we mentioned a few examples of learning under the \emph{uniform} distribution where quantum examples are significantly more powerful than classical examples.  Can we find more such examples of quantum improvements in sample complexity in fixed-distribution settings?
\item Can we find more examples of quantum speed-up in \emph{time} complexity of learning, for example for learning depth-3 or even constant-depth circuits?
\end{itemize}

\subsubsection*{Acknowledgments.} 
We thank Shalev Ben-David, Dmitry Gavinsky, Robin Kothari, Nishant Mehta, Ashley Montanaro, Henry Yuen for helpful comments and pointers to the literature. We also thank Ashley Montanaro for suggesting the additional remark in Section~\ref{section:distinguisingcodeword}.

\bibliographystyle{alpha}

\begin{thebibliography}{BWP{\etalchar{+}}16}

\bibitem[Aar07]{aaronson:qlearnability}
S.~Aaronson.
\newblock The learnability of quantum states.
\newblock {\em Proceedings of the Royal Society of London}, 463(2088), 2007.
\newblock quant-ph/0608142.

\bibitem[Aar15]{aaronson:fineprint}
S.~Aaronson.
\newblock Quantum machine learning algorithms: Read the fine print.
\newblock {\em Nature Physics}, 11(4):291--293, April 2015.

\bibitem[AB09]{anthony&bartlett:learningbook}
M.~Anthony and P.~L. Bartlett.
\newblock {\em Neural network learning: Theoretical foundations}.
\newblock Cambridge University Press, 2009.

\bibitem[ABG06]{aimeur:mlinquantumworld}
E.~A{\"{\i}}meur, G.~Brassard, and S.~Gambs.
\newblock Machine learning in a quantum world.
\newblock In {\em Proceedings of Advances in Artificial Intelligence, 19th
  Conference of the Canadian Society for Computational Studies of
  Intelligence}, volume 4013, pages 431--442, 2006.

\bibitem[ABG13]{aimeur:qspeedup}
E.~A{\"{\i}}meur, G.~Brassard, and S.~Gambs.
\newblock Quantum speed-up for unsupervised learning.
\newblock {\em Machine Learning}, 90(2):261--287, 2013.

\bibitem[AdW17]{arunachalam:quantumlearningsurvey}
S.~Arunachalam and R.~de~Wolf.
\newblock A survey of quantum learning theory, 2017.
\newblock To appear as Computational Complexity Column in SIGACT News, June
  2017. Preprint at arxiv:1606.08920.

\bibitem[AG98]{apolloni&gentile:act}
B.~Apolloni and C.~Gentile.
\newblock Sample size lower bounds in {PAC} learning by algorithmic complexity
  theory.
\newblock {\em Theoretical Computer Science}, 209:141--162, 1998.

\bibitem[AL88]{angluin:randomclassificationnoise}
D.~Angluin and P.~Laird.
\newblock Learning from noisy examples.
\newblock {\em Machine Learning}, 2(4):343--370, 1988.

\bibitem[AM14]{ambainis&montanaro:wildcard&CGT}
A.~Ambainis and A.~Montanaro.
\newblock Quantum algorithms for search with wildcards and combinatorial group
  testing.
\newblock {\em Quantum Information {\&} Computation}, 14(5-6):439--453, 2014.
\newblock arXiv:1210.1148.

\bibitem[AS05]{atici&servedio:qlearning}
A.~{At\i c\i} and R.~Servedio.
\newblock Improved bounds on quantum learning algorithms.
\newblock {\em Quantum Information Processing}, 4(5):355--386, 2005.
\newblock quant-ph/0411140.

\bibitem[AS09]{atici&servedio:testing}
A.~{At\i c\i} and R.~Servedio.
\newblock Quantum algorithms for learning and testing juntas.
\newblock {\em Quantum Information Processing}, 6(5):323--348, 2009.
\newblock arXiv:0707.3479.

\bibitem[Aud08]{audibert:agnosticconstant2}
J.~Audibert.
\newblock Fast learning rates in statistical inference through aggregation,
  2008.
\newblock Research Report 06-20, Certis—Ecole des Ponts. math/0703854.

\bibitem[Aud09]{audibert:agnosticconstant1}
J.~Audibert.
\newblock Fast learning rates in statistical inference through aggregation.
\newblock {\em The Annals of Statistics}, 37(4):1591--1646, 2009.
\newblock arXiv:0909.1468v1.

\bibitem[BCD06]{bacon:hiddensubgroup}
D.~Bacon, A.~Childs, and W.~{van} Dam.
\newblock Optimal measurements for the dihedral hidden subgroup problem.
\newblock {\em Chicago Journal of Theoretical Computer Science}, 2006.
\newblock Earlier version in FOCS'05. quant-ph/0504083.

\bibitem[BCWW01]{bcww:fp}
H.~Buhrman, R.~Cleve, J.~Watrous, and R.~{de} Wolf.
\newblock Quantum fingerprinting.
\newblock {\em Physical Review Letters}, 87(16), 2001.
\newblock quant-ph/0102001.

\bibitem[BEHW89]{blumer:optimalpacupper}
A.~Blumer, A.~Ehrenfeucht, D.~Haussler, and M.~K. Warmuth.
\newblock Learnability and the {V}apnik-{C}hervonenkis dimension.
\newblock {\em Journal of the ACM}, 36(4):929--965, 1989.

\bibitem[BJ99]{bshouty:quantumpac}
N.~H. Bshouty and J.~C. Jackson.
\newblock Learning {DNF} over the uniform distribution using a quantum example
  oracle.
\newblock {\em SIAM Journal on Computing}, 28(3):1136–--1153, 1999.
\newblock Earlier version in COLT'95.

\bibitem[BK02]{barnum:pgmmeasurement}
H.~Barnum and E.~Knill.
\newblock Reversing quantum dynamics with near-optimal quantum and classical
  fidelity.
\newblock {\em Journal of Mathematical Physics}, 43:2097--2106, 2002.
\newblock quant-ph/0004088.

\bibitem[BV97]{bernstein&vazirani:qcomplexity}
E.~Bernstein and U.~Vazirani.
\newblock Quantum complexity theory.
\newblock {\em SIAM Journal on Computing}, 26(5):1411--1473, 1997.
\newblock Earlier version in STOC'93.

\bibitem[DS16]{daniely&shalevshwartz:limitdnf}
A.~Daniely and S.~{Shalev-Shwartz}.
\newblock Complexity theoretic limitations on learning {DNF}'s.
\newblock In {\em Proceedings of the 29th Conference on Learning Theory
  (COLT'16)}, 2016.

\bibitem[EF01]{eldar&forney:squarerootmeasurement}
Y.~C. Eldar and G.~D. {Forney Jr}.
\newblock On quantum detection and the square-root measurement.
\newblock {\em IEEE Transactions and Information Theory}, 47(3):858--872, 2001.
\newblock quant-ph/0005132.

\bibitem[EHKV89]{ehrenfeucht:lowerboundforlearning}
A.~Ehrenfeucht, D.~Haussler, M.~J. Kearns, and L.~G. Valiant.
\newblock A general lower bound on the number of examples needed for learning.
\newblock {\em Information and Computation}, 82(3):247--261, 1989.
\newblock Earlier version in COLT'98.

\bibitem[EMV03]{eldar:optimalquantumdetectors}
Y.~C. Eldar, A.~Megretski, and G.~C. Verghese.
\newblock Designing optimal quantum detectors via semidefinite programming.
\newblock {\em {IEEE} Transactions Information Theory}, 49(4):1007--1012, 2003.
\newblock quant-ph/0205178.

\bibitem[Gav12]{gavinsky:predictivelearning}
D.~Gavinsky.
\newblock Quantum predictive learning and communication complexity with single
  input.
\newblock {\em Quantum Information and Computation}, 12(7-8):575--588, 2012.
\newblock Earlier version in COLT'10. arXiv:0812.3429.

\bibitem[GH01]{gentile&helmbold:it}
C.~Gentile and D.~P. Helmbold.
\newblock Improved lower bounds for learning from noisy examples: An
  information-theoretic approach.
\newblock {\em Information and Computation}, 166:133--155, 2001.

\bibitem[Gro96]{grover:search}
L.~K. Grover.
\newblock A fast quantum mechanical algorithm for database search.
\newblock In {\em Proceedings of 28th ACM STOC}, pages 212--219, 1996.
\newblock quant-ph/9605043.

\bibitem[Han16]{hanneke:optimalpaclower}
S.~Hanneke.
\newblock The optimal sample complexity of {PAC} learning.
\newblock {\em Journal of Machine Learning Research}, 17(38):1--15, 2016.
\newblock arXiv:1507.00473.

\bibitem[Hau92]{haussler:agnosticlearning}
D.~Haussler.
\newblock Decision theoretic generalizations of the {PAC} model for neural net
  and other learning applications.
\newblock {\em Information and Computation}, 100(1):78–--150, 1992.

\bibitem[HHL09]{hhl:lineq}
A.~Harrow, A.~Hassidim, and S.~Lloyd.
\newblock Quantum algorithm for solving linear systems of equations.
\newblock 103(15):150502, 2009.
\newblock arXiv:0811.3171.

\bibitem[HJS{\etalchar{+}}96]{hausladen:squareroot}
P.~Hausladen, R.~Jozsa, B.~Schumacher, M.~Westmoreland, and W.~K. Wootters.
\newblock Classical information capacity of a quantum channel.
\newblock {\em Physical Review A}, 54:1869--1876, 1996.

\bibitem[HMP{\etalchar{+}}10]{hunziker:quantumexactlearning}
M.~Hunziker, D.~A. Meyer, J.~Park, J.~Pommersheim, and M.~Rothstein.
\newblock The geometry of quantum learning.
\newblock {\em Quantum Information Processing}, 9(3):321--341, 2010.
\newblock quant-ph/0309059.

\bibitem[HW94]{hausladen:pgmintroduction}
P.~Hausladen and W.~K. Wootters.
\newblock A ‘pretty good’ measurement for distinguishing quantum states.
\newblock {\em Journal Of Modern Optics}, 41:2385--2390, 1994.

\bibitem[Jac97]{jackson:dnf}
J.~C. Jackson.
\newblock An efficient membership-query algorithm for learning {DNF} with
  respect to the uniform distribution.
\newblock {\em Journal of Computer and System Sciences}, 55(3):414--440, 1997.
\newblock Earlier version in FOCS'94.

\bibitem[JTY02]{jackson:quantumdnf}
J.~C. Jackson, C.~Tamon, and T.~Yamakami.
\newblock Quantum {DNF} learnability revisited.
\newblock In {\em Proceedings of 8th COCOON}, pages 595--604, 2002.
\newblock quant-ph/0202066.

\bibitem[JZ09]{jain&zhang:newoneway}
R.~Jain and S.~Zhang.
\newblock New bounds on classical and quantum one-way communication complexity.
\newblock {\em Theoretical Computer Science}, 410(26):2463--2477, 2009.
\newblock arXiv:0802.4101.

\bibitem[Kot14]{kothari:oracleidentification}
R.~Kothari.
\newblock An optimal quantum algorithm for the oracle identification problem.
\newblock In {\em 31st International Symposium on Theoretical Aspects of
  Computer Science (STACS 2014)}, pages 482--493, 2014.
\newblock arXiv:1311.7685.

\bibitem[KP16]{aryeh:exactconstantagnostic}
A.~Kontorovich and I.~Pinelis.
\newblock Exact lower bounds for the agnostic probably-approximately-correct
  {(PAC)} machine learning model, 2016.
\newblock Preprint at arxiv:1606.08920.

\bibitem[KSS94]{kearns:agnosticlearning}
M.~J. Kearns, R.~E. Schapire, and L.~Sellie.
\newblock Toward efficient agnostic learning.
\newblock {\em Machine Learning}, 17(2-3):115--141, 1994.
\newblock Earlier version in COLT'92.

\bibitem[KV94a]{kearns&valiant:blum}
M.~J. Kearns and L.~G. Valiant.
\newblock Cryptographic limitations on learning {B}oolean formulae and finite
  automata.
\newblock {\em Journal of the ACM}, 41(1):67--95, 1994.

\bibitem[KV94b]{kearnss&vazirani:learningbook}
M.~J. Kearns and U.~V. Vazirani.
\newblock {\em An introduction to computational learning theory}.
\newblock MIT Press, 1994.

\bibitem[Mon07]{montanaro:distinguishability}
A.~Montanaro.
\newblock On the distinguishability of random quantum states.
\newblock {\em Communications in Mathematical Physics}, 273(3):619--636, 2007.
\newblock quant-ph/0607011.

\bibitem[Mon12]{montanaro:learningpolynomials}
A.~Montanaro.
\newblock The quantum query complexity of learning multilinear polynomials.
\newblock {\em Information Processing Letters}, 112(11):438--442, 2012.
\newblock arXiv:1105.3310.

\bibitem[{O'D}14]{odonnell:analysis}
R.~{O'Donnell}.
\newblock {\em Analysis of Boolean Functions}.
\newblock Cambridge University Press, 2014.

\bibitem[SB14]{shwartz&david:learningbook}
S.~{Shalev-Shwartz} and S.~{Ben-David}.
\newblock {\em Understanding machine learning: From theory to algorithms}.
\newblock Cambridge University Press, 2014.

\bibitem[SG04]{servedio&gortler:equivalencequantumclassical}
R.~Servedio and S.~Gortler.
\newblock Equivalences and separations between quantum and classical
  learnability.
\newblock {\em SIAM Journal on Computing}, 33(5):1067--1092, 2004.
\newblock Combines earlier papers from ICALP'01 and CCC'01. quant-ph/0007036.

\bibitem[Sim96]{simon:agnosticlowerbound}
H.~U. Simon.
\newblock General bounds on the number of examples needed for learning
  probabilistic concepts.
\newblock {\em Journal of Computer and System Sciences}, 52(2):239--254, 1996.
\newblock Earlier version in COLT'93.

\bibitem[Sim15]{simon:almostoptimalpac}
H.~U. Simon.
\newblock An almost optimal {PAC} algorithm.
\newblock In {\em Proceedings of the 28th Conference on Learning Theory
  (COLT)}, pages 1552--1563, 2015.

\bibitem[Tal94]{talagrand:agnosticupperbound}
M.~Talagrand.
\newblock Sharper bounds for {G}aussian and empirical processes.
\newblock {\em The Annals of Probability}, pages 28--76, 1994.

\bibitem[Val84]{valiant:paclearning}
L.~Valiant.
\newblock A theory of the learnable.
\newblock {\em Communications of the ACM}, 27(11):1134–--1142, 1984.

\bibitem[VC71]{vapnik:vcdimension}
V.~Vapnik and A.~Chervonenkis.
\newblock On the uniform convergence of relative frequencies of events to their
  probabilities.
\newblock {\em Theory of Probability \& Its Applications}, 16(2):264--280,
  1971.

\bibitem[VC74]{vapnik:agnosticlowerbound}
V.~Vapnik and A.~Chervonenkis.
\newblock Theory of pattern recognition.
\newblock 1974.
\newblock In Russian.

\bibitem[Ver90]{verbeurgt:learningdnf}
K.~A. Verbeurgt.
\newblock Learning {DNF} under the uniform distribution in quasi-polynomial
  time.
\newblock In {\em Proceedings of the 3rd Annual Workshop on Computational
  Learning Theory (COLT'90)}, pages 314--326, 1990.

\bibitem[WKS14]{wiebe:quantumdeeplearning}
N.~Wiebe, A.~Kapoor, and K.~M. Svore.
\newblock Quantum deep learning, 2014.
\newblock Preprint at arXiv:1412.3489.

\bibitem[WKS16]{wiebe:quantumperceptronmodels}
N.~Wiebe, A.~Kapoor, and K.~M. Svore.
\newblock Quantum perceptron models, 2016.
\newblock Preprint at arXiv:1602.04799.

\bibitem[Zha10]{zhang:improvedvcbound}
C.~Zhang.
\newblock An improved lower bound on query complexity for quantum {PAC}
  learning.
\newblock {\em Information Processing Letters}, 111(1):40--45, 2010.

\end{thebibliography}

\newcommand{\etalchar}[1]{$^{#1}$}

\end{document}